\documentclass[11pt,reqno]{amsart}
\usepackage[a4paper,width=185mm,top=20mm,bottom=20mm,bindingoffset=6mm]{geometry}
\usepackage{etex}
\usepackage[all]{xy} 
\usepackage{etoolbox} 
\usepackage{lmodern}
\usepackage{array}
\usepackage{xstring}
\usepackage{longtable}
\usepackage[listings]{tcolorbox} 
\usepackage{natbib}
\usepackage{hyperref}
\tcbuselibrary{breakable}
\tcbuselibrary{skins}
\usepackage{amsmath}
\usetikzlibrary{arrows.meta, positioning, shapes.geometric}
\usepackage{fancyhdr}
\usepackage{amsmath}
\usepackage{amssymb}
\usepackage{amsfonts}
\usepackage{pifont}
\usepackage{natbib}
\usepackage{epsfig}
\usepackage{latexsym}
\usepackage{amssymb}
\usepackage{mathtools}
\usepackage{color}
\usepackage{amscd}
\usepackage{multirow}
\usepackage{graphicx}
\usepackage{lscape}
 \usepackage{tabularx}
\usepackage{float}
\usepackage{xcolor}
\usepackage{bm}
\usepackage{tikz-cd}
 \makeatletter
    \let\stdchapter\section
    \renewcommand*\section{%
    \@ifstar{\starchapter}{\@dblarg\nostarchapter}}
    \newcommand*\starchapter[1]{%
        \stdchapter*{#1}
        \thispagestyle{fancy}
        \markboth{\MakeUppercase{#1}}{}
    }
    \def\nostarchapter[#1]#2{%
        \stdchapter[{#1}]{#2}
        \thispagestyle{fancy}
    }
\makeatother

 \makeatletter
\@addtoreset{equation}{section} 
\makeatother

\newtcolorbox{boxA}{
    fontupper = \bf,
    boxrule = 1.5pt,
    colframe = black 
}

\newtheorem*{theorem A}{Main Theorem A}
\newtheorem*{theorem B}{Main Theorem B}
\newtheorem*{theorem C}{Main Theorem C}
\newtheorem*{theorem D}{Main Theorem D}
\newtheorem{theorem}{Theorem}[section]
\newtheorem*{theorem*}{Main Theorem}
\newtheorem{lemma}[theorem]{Lemma}
\newtheorem{proposition}[theorem]{Proposition}
\theoremstyle{definition}
\newtheorem{definition}[theorem]{Definition}
\theoremstyle{plain}
\newtheorem{corollary}[theorem]{Corollary}
\newtheorem{remark}[theorem]{Remark}

\theoremstyle{conclusion}

\usepackage{amsmath}
\usepackage{amssymb}
\usepackage{amsfonts}
\usepackage{esvect}
\usepackage{physics}
\usepackage{mathrsfs}
\usepackage{authblk}
\usepackage{geometry}
\usepackage{abstract}
\usepackage{xcolor}

\begin{document}

\thispagestyle{empty}
\begin{center}
	\Large{{\bf  Geometric construction of superintegrable Poisson projection chains via Poisson centralizers }}
\end{center}
 
\vskip 0.3cm
\begin{center}
	\textsc{Kai Jiang$^{1\star}$,  Guorui Ma$^{2a,2b,\sharp}$,    Ian Marquette$^{3,\bullet}$,   Junze Zhang$^{4,\dagger}$  and Yao-Zhong Zhang$^{4,\ddagger}$}
\end{center}
\vskip 0.2cm
\begin{center}
	$^1$   Paris Curie Engineer School, Beijing University of Chemical Technology, Beijing, China
\end{center}
\begin{center}
	$2^a$ Shanghai Institute for Mathematics and Interdisciplinary Sciences (SIMIS), Shanghai, 200433, China  \\
    $2^b$  Research institute of Intelligent Complex Systems, Fudan University, Shanghai, 200433, China 
\end{center}
\begin{center}
	$^3$ Department of Mathematical and Physical Sciences, La Trobe University, Bendigo, VIC 3552, Australia
\end{center}

\begin{center}
	$^4$ School of Mathematics and Physics, The University of Queensland, Brisbane, QLD 4072, Australia
\end{center}

\begin{center}
	\footnotesize{ $\star$\textsf{kai.jiang.math@gmail.com} \hskip 0.25cm
    $\sharp$\textsf{mgr18@tsinghua.org.cn} \hskip 0.25cm
    $\bullet$\textsf{i.marquette@latrobe.edu.au} \hskip 0.25cm
 $^\dagger$\textsf{junze.zhang@uq.net.au} \hskip 0.25cm
 $^\ddagger$\textsf{yzz@maths.uq.edu.au}}
\end{center}
\vskip  1cm

\begin{abstract}
\noindent We introduce a geometric framework for constructing superintegrable systems from Poisson centralizers (commutants) in the Lie-Poisson algebra $S(\mathfrak{g})$ of a complex semisimple Lie algebra. Starting from a chain of reductive subgroups, we study the corresponding invariant Poisson subalgebras and their Poisson centers, and formulate superintegrability in terms of a \emph{Poisson projection chain} of affine Poisson varieties. For a maximal torus $T\subset G$, we prove that the inclusions $S(\mathfrak{g})^G\subset S(\mathfrak{g})^T\subset S(\mathfrak{g})$ determine a superintegrable chain and identify the associated quotient maps $\mathfrak{g}\xrightarrow{\chi_T}\mathfrak{g}//T\xrightarrow{\rho}\mathfrak{g}//G$. The rank (transcendence degree) computations yield the expected dimension split between commuting Hamiltonians and first integrals, and we describe the corresponding symplectic leaves in the intermediate space. Several examples illustrate how the centralizer generators organize into explicit superintegrable Poisson chains.
\end{abstract}
\vskip 0.35cm
\hrule

\tableofcontents

\section{Introduction}
\label{sec:introduction} 

Integrable and superintegrable Hamiltonian systems provide a unifying language for the appearance of large families of conserved quantities in mathematical physics and symplectic geometry \cite{MR3119484,MR3493688,MR187763,nehorovsev1968action,MR3942135,MR4071113,MR4282982,MR4644061}. For a $2n$-dimensional symplectic manifold, Liouville integrability requires $n$ functionally independent first integrals that Poisson-commute, leading (on regular level sets) to $n$-dimensional invariant Lagrangian tori. Superintegrability strengthens this by allowing more than $n$ independent integrals, so that the generic invariant tori have smaller dimension $k<n$. In the extreme case of maximal superintegrability, one has $2n-1$ integrals and all trajectories are closed \cite{MR3493688,MR3942135,MR2023556}. The perspective adopted in this paper is that such over complete families of integrals can be organized algebraically via Poisson centralizers in the symmetry algebra $S(\mathfrak{g})$ and geometrically via Poisson projection chains between the associated affine Poisson quotients.

The modern notion of superintegrability, as formalized by Nekhoroshev \cite{nehorovsev1968action}, extends the Liouville-Arnold theorem to the noncommutative case and provides geometric descriptions of how superintegrabilities are realized in symplectic manifolds.  A Liouville integrable system consists of $n$ functionally independent conserved quantities on a $2n$-dimensional symplectic manifold and yields action-angle coordinates \cite{nehorovsev1968action,MR997295}. The further development on constructing superintegrable systems can be understood from the following two perspective:

On the algebraic side, a series of examples related to Lie groups was given by Mishchenko and Fomenko in \cite{Mishchenkofomenko1978}. In their paper, they introduced the argument shift method in the study of Euler equations on Lie algebras, constructing large Poisson-commutative algebras of integrals from invariant polynomials. Independently, Thimm developed a complementary method based on chains of subalgebras and proved that invariant geodesic flows in certain homogeneous spaces are completely integrable \cite{Thimm1981}.

On the geometric side, Guillemin and Sternberg\cite{GuilleminSternberg1983} constructed the Gelfand-Cetlin system on coadjoint orbits of $U(n)$ (in particular, on flag manifolds): using the chain $U(1)\subset\cdots\subset U(n)$ and the corresponding moment maps, one obtains the Gelfand-Cetlin map whose components form an explicit collection of commuting Hamiltonians and yield a completely integrable system on a dense open subset of the orbit. 

These first integrals in these constructions are expressed as polynomials in a basis of the Lie algebra. It is not required that they all Poisson-commute, provided that they generate a commutative subalgebra of sufficiently large dimension. This generalized integrability framework has been employed in a number of contexts. For instance, noncommutative integrability of certain magnetic geodesic flows on homogeneous symplectic manifolds was established by Efimov \cite{MR2141306}. A construction based on Poisson projection chains was formalized by Reshetikhin \cite{MR3493688}, and equivalence relations between superintegrable systems were derived. More recently, new superintegrable systems have been constructed on moduli spaces and in other Lie-theoretic settings \cite{MR4299124}.  

Although the theoretical framework of integrable and superintegrable systems is relatively mature, significant research directions remain in the following directions:

(1) Geometric realization of superintegrability: While Nekhoroshev's theory provides a general framework for noncommutative integrability, the explicit construction of families of conserved quantities satisfying the noncommutative integrability conditions on concrete homogeneous or symmetric spaces remains an open problem.

(2) Classification and hierarchical structure of superintegrable systems: There exist various intermediate cases between minimally superintegrable ($n+1$ integrals) and maximally superintegrable ($2n-1$ integrals). How can these different levels of superintegrability be characterized geometrically? What is their correspondence with the foliation structure of the phase space?

(3) Unification of algebraic and geometric methods: The argument shift method of Mishchenko-Fomenko and the subalgebra chain method of Thimm construct integrable systems from algebraic and geometric perspectives, respectively. However, the intrinsic relationship between these two methods, as well as their generalization to superintegrable cases, has not been fully explained.

(4) Construction of superintegrable systems in new contexts: Recent progress has been made in constructing novel superintegrable systems on moduli spaces and in other Lie-theoretic contexts \cite{MR4299124}. Nevertheless, the explicit expressions of their first integrals, their relationship with classical superintegrable systems, and their behaviour under quantization remain to be thoroughly investigated.

These questions suggest that, beyond individual examples, one needs a structural mechanism that simultaneously produces large families of first integrals, controls the induced foliation of the phase space, and clarifies how algebraic constructions in $S(\mathfrak{g})$ manifest geometrically on Poisson quotients. A natural setting where these topic meet is provided by 
reductive chains of Lie algebras and the associated chains of Poisson subalgebras. 
 Recently, the construction of superintegrable systems based on Lie algebra reductive chains has been studied in \cite{campoamor2026construction,marquette2023algebraic,campoamor2023algebraic}. The main novelties of the present paper are:
\begin{itemize}
\item We make explicit the geometric construction from Poisson subalgebra chains in $S(\mathfrak{g})$ to canonical sequences of Poisson maps between affine Poisson quotients, including the dimension identities required for superintegrability.
\item From suitable Poisson subalgebras of $S(\mathfrak{g})$, we construct an explicit \emph{Poisson projection chain} (Definition \ref{def:superge}), and prove that it satisfies the superintegrability dimension conditions, recovering known special cases and producing new families of superintegrable systems.
\item We explain how algebraic invariant theory controls the geometry of the induced foliations and symplectic leaves along the projection chain.
\item We provide a unified Lie-theoretic framework that connects reductive (subalgebra) chain constructions with Mishchenko-Fomenko noncommutative integrability.
\end{itemize}

To help the reader navigate the paper, we summarize the main constructions and theorems below.  The precise results are as follows.

(I) Theorem \ref{thm:superalchain} is the basic example that motivates the rest of the paper: taking $A=T$ produces a canonical Poisson inclusion chain $S(\mathfrak{g})^G\subset S(\mathfrak{g})^T\subset S(\mathfrak{g})$ and hence a Poisson projection chain $\mathfrak{g}\to\mathfrak{g}//T\to\mathfrak{g}//G$. In particular, the base algebra is forced to be $S(\mathfrak{g})^G$, which Poisson-commutes with all $T$-invariants, and the dimension identity is governed by the rank $r=\dim T$.

\begin{theorem A} 
(Theorem \ref{thm:superalchain}) 
Let $G$ be a connected complex semisimple Lie group with Lie algebra $\mathfrak{g}$, and let $T\subset G$ be a maximal torus of rank $r$. With the Lie-Poisson bracket on $S(\mathfrak{g})=\mathbb{C}[\mathfrak{g}^*]$, $S(\mathfrak{g})^G\subset S(\mathfrak{g})^T\subset S(\mathfrak{g})$ is a Poisson subalgebra chain inducing the superintegrable affine Poisson chain
\begin{equation}
\mathfrak{g}\xrightarrow{\chi_T}\mathfrak{g}//T\xrightarrow{\rho}\mathfrak{g}//G
\label{eq:supgtc}
\end{equation}
in the sense of Definition \ref{def:superalge}. Moreover, \begin{align*}
S(\mathfrak{g})^G\subset\mathcal{Z}\bigl(S(\mathfrak{g})^T\bigr),\quad
\mathrm{trdeg}_{\mathbb{C}}S(\mathfrak{g})^G=r,\quad
\mathrm{trdeg}_{\mathbb{C}}S(\mathfrak{g})^T=\dim\mathfrak{g}-r,  
\end{align*} and $\dim\mathrm{Spec}\,S(\mathfrak{g})=(\dim\mathfrak{g}-r)+r$. Here $\chi_T$ and $\rho$ are induced by the corresponding inclusions of invariant algebras.
\end{theorem A}

(II) Our second main result replaces the maximal torus with an arbitrary Abelian reductive and connected subgroup $A \subset G$. In this setting, there is, in general, no canonical analogue of $S(\mathfrak{g})^G$ inside the Poisson center of $S(\mathfrak{g})^A$. Instead, we construct a natural central subalgebra from the linear moment map $\mu_A\colon\mathfrak{g}\to\mathfrak{a}^*$ induced by an $\mathrm{Ad}(G)$-invariant bilinear form and show that it provides the required base algebra for superintegrability.

\begin{theorem B}
  (Theorem \ref{thm:abelianredsuper})
Let $G$ be a connected complex semisimple Lie group with Lie algebra $\mathfrak{g}$, and let $A\subset G$ be connected Abelian reductive with Lie algebra $\mathfrak{a}$, $\dim\mathfrak{a}=s$. Fix a nondegenerate $\mathrm{Ad}(G)$-invariant symmetric bilinear form $B$, define \begin{align*}
    \mu_A(X)(H):=B(X,H),\qquad \mathcal{B}_A:=\mu_A^*\mathbb{C}[\mathfrak{a}^*]\subset S(\mathfrak{g}),
\end{align*} where $\mu_A:\mathfrak{g}\to\mathfrak{a}^*$. Then \begin{align*}
\mathcal{B}_A\subset\mathcal{Z}\bigl(S(\mathfrak{g})^A\bigr),\qquad
\mathrm{trdeg}_{\mathbb{C}}\mathcal{B}_A=s,\qquad
\mathrm{trdeg}_{\mathbb{C}}S(\mathfrak{g})^A=\dim\mathfrak{g}-s.
\end{align*} Hence $\mathcal{B}_A\subset S(\mathfrak{g})^A\subset S(\mathfrak{g})$ is superintegrable in the sense of Definition \ref{def:superalge}, equivalently inducing \begin{align*}
\mathfrak{g}\xrightarrow{\chi_A}\mathfrak{g}//A\xrightarrow{\overline{\mu}_A}\mathfrak{a}^*, \qquad \mu_A=\overline{\mu}_A\circ\chi_A,
\end{align*} where $\chi_A$ is the affine quotient morphism.
\end{theorem B}

(III) The first two theorems are formulated purely in terms of Poisson algebras on $\mathfrak{g}$ and their affine quotients. The next step is to connect this Lie theoretic construction with the symplectic geometry of the cotangent bundle $T^*M$ and the magnetic geodesic flows studied in our previous work \cite{jiang2026poisson,campoamor2026construction}. In those papers, the basic integrals were obtained from the momentum map $P\colon T^*M\to\mathfrak{g}^*$ together with invariant polynomials on $\mathfrak{g}$, and the resulting level sets were analysed via reduction.

Here we package the same families of first integrals into a functorial statement: we construct an intermediate Poisson space $Y : = \mathfrak{g} \times_{\mathfrak{g}//G} (\mathfrak{m}-\varepsilon W)// T$ and a Poisson projection chain $T^*(G/T)\to Y\to\mathfrak{g}//G$ whose spectrum is identified with the torus quotient chain $\mathfrak{g}\to\mathfrak{g}//T\to\mathfrak{g}//G$. This produces a precise spectral equivalence between the dynamical system on $T^*M$ and the algebraic superintegrable system on $\mathfrak{g}$, and it is the framework that allows us to transport information about symplectic leaves and reduced spaces between the two settings.

\begin{theorem C}
 (Theorem \ref{thm:spectralequivalent})
Let $M=G/T$, write $T^*M$ as pairs $(g,u)$ with $u\in\mathfrak{m}-\varepsilon W$, and put $P(g,u):=\mathrm{Ad}(g)u\in\mathfrak{g}^*\cong\mathfrak{g}$. For basic generators $C_1,\ldots,C_r$ of $S(\mathfrak{g})^G$, set $\chi=(C_1,\ldots,C_r)$ and \begin{align*}
Y:=\mathfrak{g}\times_{\mathfrak{g}//G}(\mathfrak{m}-\varepsilon W)//T =\bigl\{(X,[u]_T):\chi(X)=\chi(u)\bigr\}.
\end{align*} With $\pi_1(g,u):=(P(g,u),[u]_T)$, $\pi_2(X,[u]_T):=\chi(X)$, the chain $T^*M\xrightarrow{\pi_1}Y\xrightarrow{\pi_2}\mathfrak{g}//G$ is superintegrable and is spectrally equivalent, in the sense of Definition \ref{def:equidefi}, to \begin{align*}
\mathfrak{g}\xrightarrow{\chi_T}\mathfrak{g}//T\xrightarrow{\rho}\mathfrak{g}//G.
\end{align*} The equivalence is given by $(\phi,\phi_1,\phi_2)=(P,\chi_T\circ\mathrm{pr}_1,\mathrm{id}_{\mathfrak{g}//G})$ and satisfies \begin{align*}
   \chi_T\circ\phi=\phi_1\circ\pi_1,\qquad \rho \circ\phi_1=\phi_2\circ\pi_2=\pi_2.  
\end{align*}
\end{theorem C}

(IV) Since we identified the two projection chains at the level of their spectral data, the next natural question is how the associated Poisson geometry is reflected on each side. In particular, the intermediate quotient $\mathfrak{g}//T$ carries a stratified Poisson structure, and on its regular locus, one expects a concrete description of symplectic leaves in terms of the commuting families of functions that define the superintegrable system. The following theorem makes this description explicit and relates it to the familiar Marsden-Weinstein reduction on coadjoint orbits.

\begin{theorem D}
(Theorem \ref{thm:leaves})
   Let \begin{align*}
    J = (\rho,\overline{\mu}_T):(\mathfrak{g}//T)_{\mathrm{reg}}\to(\mathfrak{g}//G)\times\mathfrak{t}^*, \qquad J([u]_T)=\bigl((C_1(u),\ldots,C_r(u)),u|_{\mathfrak{t}}\bigr).
\end{align*} Fix $c=(c_1,\ldots,c_r)\in(\mathfrak{g}//G)_{\mathrm{reg}}$ and a regular value $\alpha\in\mathfrak{t}^*$ of $\mu_{T,c}:\mathcal{O}_c:=\chi^{-1}(c)\to\mathfrak{t}^*$. Then each smooth connected component of $J^{-1}(c,\alpha)$ is a symplectic leaf of $(\mathfrak{g}//T)_{\mathrm{reg}}$, namely \begin{align*}
    \mathcal{L}_{c,\alpha} =\bigl\{[u]_T\in\mathfrak{g}//T:
C_i(u)=c_i\ (1\leq i\leq r),\ u|_{\mathfrak{t}}=\alpha\bigr\}_{\varpi}.
\end{align*} Equivalently, if $Z_{c,\alpha}:=\mu_{T,c}^{-1}(\alpha)=\mathcal{O}_c\cap\mu_T^{-1}(\alpha)$ is smooth and $T$ acts freely on it, then \begin{align*}
    \mathcal{L}_{c,\alpha}\cong Z_{c,\alpha}/T =\bigl(\mathcal{O}_c\cap\mu_T^{-1}(\alpha)\bigr)/T.
\end{align*} 
\end{theorem D}

 The structure of the paper is as follows: we begin in Section \ref{sec:prel} by reviewing the necessary background on superintegrable systems in symplectic manifolds and setting up the notation. Key definitions and classical results are summarized to provide a foundation for our work. Later in Section \ref{sec:supersg}, we provide the superintegrability of the semisimple Lie algebra $\mathfrak{g}$ with the reductive subgroup $A$ as the Poisson centraliser. Moreover, from the superintegrable chain \eqref{eq:supgtc}, we further deduce the symplectic leaf $\mathcal{L}_{c,\alpha}$ in $\mathfrak{g}//T$ given by the pullback of the regular element in $\mathfrak{g}//G$ via $\rho^{-1}$.  Also, for the reductive and connected subgroup $A$, we provided the case where $A$ is an Abelian subgroup and its corresponding superintegrable chain in Subsection \ref{subsec:AnotequalT}. Then, in Subsection \ref{subsec:mfinsideA}, we discussed the conditions under which the Mishchenko-Fomenko subalgebras can be considered a base algebra. Eventually, in Section \ref{sec:example}, we will applied these geometric constructions to the example $\mathfrak{g} = \mathfrak{sl}(n,\mathbb{C})$, for which the Cartan commutant has been illustrated in \cite{campoamor2023algebraic}. Finally, in Conclusion \ref{sec:conclusion}, we provide a summary of this work and propose some further directions that we would like to extend from this paper.  

\section{Preliminary}
\label{sec:prel}
In this Section \ref{sec:prel}, we focus on the basic background of the Hamiltonian system and provide the definition of the superintegrable Poisson projection chain. For more detailed information, we refer the reader to \cite{MR2906391,MR3493688}. 
Let $(M,\omega)$ be a $2n$-dimensional symplectic manifold. The symplectic form induces the Poisson bracket$\{\cdot,\cdot\}_\omega$ defined by
\begin{equation*}
    \begin{split}
        \{\cdot,\cdot\}_\omega : C^\infty(M)\times C^\infty(M) &\longrightarrow C^\infty(M)\\
        (f,g) &\longmapsto \{f,g\}_\omega := \omega(X_f,X_g),
    \end{split}
\end{equation*}
for $f,g\in C^\infty(M)$, where $X_{f}$ is the Hamiltonian vector field characterized by $\iota_{X_f}\omega = d f$.

A function $g\in C^\infty(M)$ is called a \textit{first integral} (also: integral of motion, conserved quantity) of $f$ if $\{f,g\}_\omega=0$. Equivalently, $g$ is constant along the trajectories of $X_f$ since
\begin{align*}
X_f(g)=\{f,g\}_\omega=0.
\end{align*}
In this sense, first integrals encode symmetries of the Hamiltonian flow generated by $f$.

Let $\mathfrak{g}$ be a finite-dimensional real Lie algebra with Lie bracket $[\cdot,\cdot]$. There is a classical Lie-Poisson bracket on the dual of a Lie algebra $\mathfrak{g}^*$. For $f,g\in C^\infty(\mathfrak{g}^*)$ and $\xi\in \mathfrak{g}^*$, the Lie-Poisson bracket is
\begin{align}
\{f,g\}(\xi)=\langle \xi, [d_\xi f, d_\xi g] \rangle .\label{eq:poiss}
\end{align}
where $d_\xi f, d_\xi g \in  T_\xi^*\mathfrak{g}^*\cong \left(\mathfrak{g}^*\right)^* \cong \mathfrak{g}$, and $\langle \cdot,\cdot \rangle : \mathfrak{g}^*\times \mathfrak{g} \rightarrow \mathbb{C}$ is the dual pair between $\mathfrak{g}^*$ and $\mathfrak{g}$. We refer the reader to  \cite[Chapter 7]{MR2906391} for details. 

For a symplectic manifold $M$, we assume that the Poisson algebra $C^\infty(M)$  has a Poisson subalgebra $\mathcal{A} \subset C^\infty(M)$ of rank (transcendence degree) $2n-k$ for some integer $k$ with $1\leqslant k\leqslant n$ and the Poisson center $\mathcal{Z}(\mathcal{A})$ of $ \mathcal{A}$ is of rank $k$.  Smooth functions in $\mathcal{Z}(\mathcal{A})$ are called Hamiltonian, which are smooth functions Poisson-commuting with first integrals. 



We recall the geometric description of the superintegrability of a Hamiltonian dynamical system. The reader can refer to \cite{MR3493688} for more details and example. 



\begin{definition} 
\label{def:superge} 
Let $(M,\omega)$ be a $2n$-dimensional connected symplectic manifold. A \textit{superintegrable system} on $(M,\omega)$ is a sequence consisting of a triple $(M,\mathcal{P}_{2n-k},\mathcal{B}_k)$ of Poisson manifolds with two Poisson maps $\pi_1$ and $\pi_2$ \begin{align}
M \xrightarrow{\;\pi_1\;} \mathcal{P}_{2n-k} \xrightarrow{\;\pi_2\;} \mathcal{B}_k \label{eq:sup}
\end{align} such that the following holds:

(i) The Poisson maps  $\pi_i$ are Poisson submersions. 


(ii) $\dim \mathcal{P}_{2n-k}+\dim \mathcal{B}_k = 2n = \dim M$. 
\end{definition}
 \begin{remark}
 \label{rkm:ijc}
From the chain in Definition \ref{def:superge}, we obtain two natural Poisson subalgebras of $C^\infty(M)$ by pullback: \begin{align*}
    J=\pi_1^*\bigl(C^\infty(\mathcal{P}_{2n-k})\bigr), \qquad I=(\pi_2\circ\pi_1)^*\bigl(C^\infty(\mathcal{B}_k)\bigr).
\end{align*} Intuitively, elements of $J$ only depend on the $\mathcal{P}_{2n-k}$-coordinate , whereas elements of $I$ only depend on the $\mathcal{B}_k$-coordinate.

For a Poisson subalgebra $A\subset C^\infty(M)$, write \begin{align*}
    C_A(M):=\{f\in C^\infty(M) \ :  \{f,A\}=0\}
\end{align*} for its Poisson commutant. With this notation, we set $C_I(M):=C_I(M)$ and $C_J(M):=C_J(M)$.
Because $\pi_1$ is Poisson and the functions coming from $\mathcal{B}_k$ are Casimirs on $\mathcal{P}_{2n-k}$ (via $\pi_2$), the two pullback algebras Poisson-commute. Equivalently,  $J\subset C_I(M)$  and $I\subset C_J(M)$. Concretely, for $h\in C^\infty(\mathcal{P}_{2n-k})$ and $g\in C^\infty(\mathcal{B}_k)$, \begin{align*}
   \{\pi_1^*h,(\pi_2\circ\pi_1)^*g\}_M =\pi_1^*\bigl(\{h,\pi_2^*g\}_{\mathcal{P}_{2n-k}}\bigr) = 0. 
\end{align*} 
 \end{remark}

 In Section \ref{sec:supersg}, we will introduce an algebraic way to define a superintegrable system based on the Poisson centralizers defined in Remark \ref{rkm:ijc}.  To conclude this section,  for any two superintegrable systems, we have the following way to determine whether they are equivalent or not. 

\begin{definition} \cite{MR3493688}
\label{def:equidefi}
 Let $(M,\omega)$ and $(M',\omega')$ be symplectic manifolds with dimensions $2n$ and $2n'$, respectively, 
 and let \begin{align}
M \xrightarrow{\ \pi_1\ } \mathcal{P}_{2n-k}\xrightarrow{\ \pi_2\ } \mathcal{B}_k \  \text{ and } \  M' \xrightarrow{\pi_1'} \mathcal{P}_{2n'-k'}' \xrightarrow{\pi_2'} \mathcal{B}_{k'}' \label{eq:twosystem}
\end{align} be superintegrable systems on $M$ and $M'$, respectively. Two such systems \eqref{eq:twosystem} are \textit{spectrally equivalent} if there exist Poisson maps \begin{align*}
\phi:M \rightarrow  M',\qquad \phi_1:\mathcal{P}_{2n-k}\rightarrow  \mathcal{P}_{2n'-k'}',\qquad \phi_2:\mathcal{B}_k \rightarrow  \mathcal{B}_{k'}'
\end{align*} with $\phi_2$ a diffeomorphism, making the diagram commute:  \[\begin{tikzcd}
	M & {\mathcal{P}_{2n-k}} & {\mathcal{B}_k} \\
	{M'} & {\mathcal{P}_{2n'-k'}'} & {\mathcal{B}_{k'}'}
	\arrow["\pi_1", from=1-1, to=1-2]
	\arrow["{\phi}", from=1-1, to=2-1]
	\arrow["\pi_2", from=1-2, to=1-3]
	\arrow["{\phi_1}", from=1-2, to=2-2]
	\arrow["{\phi_2}", from=1-3, to=2-3]
	\arrow["{\pi_1'}", from=2-1, to=2-2]
	\arrow["{\pi_2'}", from=2-2, to=2-3]
    \arrow[from=1-1, to=2-2, phantom, "\circlearrowright" {anchor=center, scale=1.5, rotate=90}]
    \arrow[from=1-2, to=2-3, phantom, "\circlearrowright" {anchor=center, scale=1.5, rotate=90}]
\end{tikzcd}\] 
with the following commutative relations:
\begin{align*}
\pi_1'\circ \phi=\phi_1\circ \pi_1,\qquad \pi_2'\circ \phi_1=\phi_2\circ \pi_2.
\end{align*} They are \textit{equivalent} if, in addition, $\phi$ and $\phi_1$ are diffeomorphisms. 
\end{definition}

\begin{remark} 
\label{rem:nonreguequiv}
Note that in Definition \ref{def:equidefi}, we only use the Poisson brackets on $M,\mathcal{P},\mathcal{B}$ and the commutative diagram, while the symplectic forms $\omega,\omega'$ are not implemented. Hence, we are able to extend the notion of spectral equivalence to Poisson manifolds by replacing $(M,\omega)$ and $(M',\omega')$ with Poisson manifolds $(M,\{\cdot,\cdot\})$ and $(M',\{\cdot,\cdot\}')$, and requiring that $\varphi,\varphi_1,\varphi_2$ be Poisson maps, with $\varphi_2$ being a (diffeomorphic) Poisson isomorphism.

\end{remark}

\section{Superintegrable systems on \texorpdfstring{$S(\mathfrak{g})$}{S(\mathfrak{g})}}
 \label{sec:supersg}
 In Section \ref{sec:supersg}, we will construct a superintegrable system on the Poisson manifold $S(\mathfrak{g})$.  In Subsection \ref{subsec:poicentra}, we define the Poisson centralizer with respect to the subalgebra chain $\mathfrak{a} \subset \mathfrak{g}$, which is also known as \textit{the commutant} in the physics literature. See, for instance, \cite{campoamor2023algebraic}. The corresponding algebraic superintegrable system is also defined. Then, in Subsection \ref{subsec:geomi}, we will provide the spectrum of these algebras, which illustrates the geometric interpretation of the algebraic superintegrable system. We see that such constructions lead to Poisson chains as defined in \eqref{def:superge}. In Subsection \ref{subsec:AnotequalT} In Subsection \ref{subsec:mfinsideA} Finally, in Subsection \ref{sec:leaves}, we describe the symplectic leaves in the intermediate Poisson manifold of the corresponding superintegrable chain constructed in Subsection \ref{subsec:geomi}.

\subsection{Poisson centralizer \texorpdfstring{$S(\mathfrak{g})^A$}{S(\mathfrak{g})^A}}
\label{subsec:poicentra}

In this Subsection \ref{subsec:poicentra}, we further assume that $\mathfrak{g}$ is an $n$-dimensional semisimple Lie algebra over $\mathbb{C}$. Suppose that $\beta_\mathfrak{g} = \{X_1,\ldots,X_n\}$ is an ordered basis for $\mathfrak{g}$, which satisfies the relations $$ [X_i,X_j] = \sum_{k=1}^n C_{ij}^k X_k $$ for all $ 1 \leq i,j \leq n $, where $C_{ij}^k \in \mathbb{C} $ are the structure constants of $\mathfrak{g}$.   We now provide the coordinate representation of the Lie-Poisson bracket \eqref{eq:poiss} in $\mathfrak{g}^*$. Define the corresponding coordinate functions $  (x_1,\ldots,x_n)$ on $\mathfrak{g}^*$ by $x_i(\xi) = \langle \xi,X_i\rangle$ with $i = 1,\ldots,n$. For any coordinate functions $x_i,x_j$, using \eqref{eq:poiss}, one has   \begin{equation}
\{x_i,x_j\}= \sum_{k=1}^n C_{ij}^k x_k \qquad 1 \leq i,j \leq n .
\end{equation} 
Choosing the canonical linear embedding of $\mathfrak{g}$ into the algebra of polynomial functions on its dual space $\mathfrak{g}^*$, we obtain a canonical algebra isomorphism  $S(\mathfrak{g}) \cong \mathbb{C}[\mathfrak{g}^*] := \mathbb{C}[x_1,\ldots,x_n]$, where the coordinate functions $ x_j:\mathfrak{g}^*\rightarrow\mathbb{C}$ correspond to the basis elements of $\mathfrak{g}$. In the rest of this paper, we work on the Poisson polynomial ring $\mathbb{C}[\mathfrak{g}^*]$.  Hence, in the following, each $p \in S(\mathfrak{g})$ is a polynomial function from $\mathfrak{g}^*$ to $\mathbb{C}$. For any $p,q \in S(\mathfrak{g}),$ a Lie-Poisson bracket $\{\cdot,\cdot\}:S(\mathfrak{g}) \times S(\mathfrak{g}) \rightarrow S(\mathfrak{g})$ is defined by \begin{equation}
\{p,q\}= \sum_{  i,j,k = 1}^n C_{ij}^k x_k \dfrac{\partial p }{\partial x_i }\dfrac{\partial q}{\partial x_j } .\label{eq:poi}
\end{equation} 

 We now construct the invariant symmetric algebra for an arbitrary subalgebra $\mathfrak{a}\subseteq\mathfrak{g}$. Without loss of generality, we may assume that the basis of $\mathfrak{a}$ is given by $X_{\ell_1},\ldots,X_{\ell_s}$, where $s=\dim\mathfrak{a}$ and $\{\ell_1,\ldots,\ell_s\}\subseteq\{1,\ldots,n\}$. 
 We now examine the adjoint action of the subalgebra $\mathfrak{a}$ on $S(\mathfrak{g})$.
 
 \begin{definition}
 \label{def:p2.1}
 Let $\mathfrak{g}$ be a Lie algebra with its dual $\mathfrak{g}^*$, and let $\mathfrak{a}\subseteq\mathfrak{g}$ be a Lie subalgebra. Then, for any $\xi\in\mathfrak{g}^*$, the infinitesimal coadjoint representation of $\mathfrak{a}$ on $S(\mathfrak{g})$, defined by \begin{align*}
     (X\cdot p)(\xi)=\frac{d}{dt}\bigg\vert_{t=0}p(\mathrm{Ad}^*(\exp (tX))\xi) ,\quad X\in\mathfrak{a},\, p\in S(\mathfrak{g}) 
 \end{align*}  acts by derivations.
 That is, for all $X_m\in\mathfrak{a}$ and $p,q\in S(\mathfrak{g})$,  \begin{align*}
     X_m\cdot\{p,q\}=\{X_m\cdot p,q\}+\{p,X_m\cdot q\}, \quad m = \ell_1,\ldots,\ell_s
 \end{align*} 
 \end{definition}
  \begin{remark}
       The Lie algebra $\mathfrak{a}$ acts infinitesimally on $S(\mathfrak{g})$ by defining the following vector fields: \begin{align}
    \tilde{X}_m=\sum_{k,l=1}^n C_{mk}^l\, x_l\frac{\partial}{\partial x_k},\quad m=\ell_1,\dots,\ell_s.\label{eq:dual}.
\end{align} 
  \end{remark}

According to Definition \ref{def:p2.1}, and its remark, our primary focus is to explore the kernel of the coadjoint action of $\mathfrak{a}$ on $S(\mathfrak{g})$.

\begin{definition}
\label{def:pocen}
 The $\textit{commutant}$ (a.k.a, $\textit{centralizer subalgebra}$) $ S(\mathfrak{g})^\mathfrak{a}$ is defined as the centralizer of $\mathfrak{a}^*$ in $S(\mathfrak{g})$:
\begin{align*}
      S(\mathfrak{g})^\mathfrak{a}   =& \left\{p \in S(\mathfrak{g}): \text{ } \{x,p\} = 0  \quad \text{ for all } x \in \mathfrak{a}^*\right\}.
  \end{align*} 
\end{definition}
\begin{remark}
\label{2.2}
 (i) The $\textit{Poisson center}$ of $\left(S(\mathfrak{g}),\{\cdot,\cdot\}\right)$ is the set of all $\mathfrak{g}$-invariant polynomials, i.e., \begin{align*}
    S(\mathfrak{g})^\mathfrak{g}  =\left\{ p \in S(\mathfrak{g}) : \text{ } \{p,x\} = 0 \quad \text{ for all } x \in \mathfrak{g}^*\right\}.
\end{align*} These elements can be identified with the (polynomial) kernels of the differential operators in \eqref{eq:dual}  (see \cite{MR4355741,MR1451138} for details). These are the classical Casimir polynomials. 

 (ii) Note that for any finite-dimensional Lie algebra,  $S(\mathfrak{g})^\mathfrak{a}$ is not necessarily finitely generated. However, if $\mathfrak{g}$ is semisimple or reductive (note that in this paper, we assume that $\mathfrak{g}$ is semisimple), it can be shown that $S(\mathfrak{g})$ is Noetherian \cite[Chapter 2]{MR1451138}. Since $A$ is assumed to be a reductive subgroup of $G$, the invariant Poisson subalgebra $S(\mathfrak{g})^\mathfrak{a}$ is also finitely-generated by the Hilbert-Nagata theorem \cite[Theorem 6.1]{Dolgachev03}. This implies that, once the set consisting of all the indecomposable polynomials $\left\{p^{(k_1)},\dots ,p^{(k_n)}\right\}$ has been constructed via the polynomial ansatz provided later, there always exists some integer $l \in \mathbb{N}$ such that $p^{(l+k_j)}$ is \textit{decomposable} for all $j \geq 1$. We say that a polynomial $p \in S(\mathfrak{g})$ is decomposable if there exists another polynomial $p' \in S(\mathfrak{g})$ of a lower degree such that $p'$ is a divisor of $p$. Additionally, it is important to note that the elements in the generating set within the centralizer of a subalgebra do not necessarily imply their algebraic independence. See further explanation in Remark \ref{rem:generators}. 
\end{remark}

 Recall that $S(\mathfrak{g}) \cong \mathbb{C}[\mathfrak{g}^*]$. Under this explicit isomorphism, $S(\mathfrak{g})$ inherits the standard grading by polynomial degree  \begin{align}
   S(\mathfrak{g}) = \bigoplus_{k \geq 0} S^k(\mathfrak{g})  ,
\end{align} where  \begin{align}
     S^k(\mathfrak{g}) : = \mathrm{span} \left\{x_1^{a_1} \cdots x_n^{a_n}: a_1 +  \ldots +  a_n = k, \quad a_j \in \mathbb{N}_0 : =\mathbb{N} \cup \{0\}\right\} \label{eq:homopoly}
 \end{align} is a subalgebra of $S(\mathfrak{g})$ consisting of all homogeneous degree $k$ polynomials. It follows that, for any $p \in S(\mathfrak{g})$, the polynomial decomposes as $p = \sum_{k \geq 0} p^{(k)}$, where $p^{(k)} \in S^k(\mathfrak{g})$ for all $k \geq 0$.  We now focus on the $\mathfrak{a}$-invariant homogeneous polynomial subspaces of $\left(S(\mathfrak{g}),\{\cdot,\cdot\}\right)$.   For similar constructions, we refer to \cite{MR191995,MR1520346, MarquetteZhangZhangAOP2025} and the citations therein. Define the vector space of $\mathfrak{a}$-invariant degree $k$-homogeneous polynomials as
\begin{align*}
 S^k(\mathfrak{g})^{\mathfrak{a}}  = \left\{p^{(k)} \in S^k(\mathfrak{g}): \left\{x,p^{(k)}\right\} = 0  \quad \text{ for all } x \in \mathfrak{a}^*\right\},
\end{align*}
where $p^{(k)}(x_1,\ldots,x_n)$ is a homogeneous polynomial of degree $k \geq 0$ with the generic form
\begin{align}
    p^{(k)}(x_1,\ldots,x_n) = \sum_{i_1 +  \ldots +  i_n = k} \Gamma_{i_1,\ldots, i_n}\, x_1^{i_1} \cdots x_n^{i_n}, \qquad  \Gamma_{i_1,\ldots,  i_n} \in \mathbb{C}. \label{eq:ci}
\end{align} 
To find a finite generating set for centralizer subalgebras, all $\mathfrak{a}$-invariant linearly independent and indecomposable homogeneous polynomial solutions of the system of partial differential equations (PDEs)
\begin{align}
    \tilde{X}_m\left(p^{(k)}\right)(x_1,\ldots,x_n) = \left\{x_m,p^{(k)}\right\}  = \sum_{1 \leq l,i \leq n} C_{mi}^lx_l \dfrac{\partial p^{(k)}}{\partial x_i} = 0, \text{ }\quad m = \ell_1,\ldots,\ell_s  \label{eq:func}
\end{align}
must be found.

Recall that if $A\leq G$ is any Lie subgroup acting on the coadjoint representation, under a fixed $\mathrm{Ad}(G)$-invariant bilinear form $B$ with $\mathfrak{g}^* \cong \mathfrak{g}$, we define:  \begin{align}
  S(\mathfrak{g})^A: = \bigl\{f \in S(\mathfrak{g}): f(A \cdot X) = f(X), \text{ for any } X \in \mathfrak{g}\bigr\}, \label{eq:sga}  
\end{align} which is also a Poisson subalgebra: Indeed, $\{f,g\}$ is $A$-invariant when $f,g$ are $A$-invariant. Throughout this work, when we mention $A$-invariant, we refer to $\mathrm{Ad}(A)$-invariant. 
From the previous discussion, we can directly state the following proposition. It demonstrates that determining a generating set for $S(\mathfrak{g})^{\mathfrak{a}}$ is, in fact, equivalent to performing computations within the Poisson algebra $S(\mathfrak{g})^A$. 

\begin{proposition}
\label{prop:equ}
Let $G$ be a $n$-dimensional semisimple Lie algebra $\mathfrak{g}$ with a Killing form $B$. 
 Let $A\subset G$ be a connected closed subgroup with a Lie algebra $\mathfrak{a}\subset\mathfrak{g}$. Choose the basis of $\mathfrak{a}$ to be $\{X_1,\dots,X_s\}$ for some $s\leq n$. For any $f\in S(\mathfrak{g})$, define the adjoint action of $A$ on polynomials by
\begin{align*}
(a \cdot  f)(X):=f\big(aXa^{-1}\big),\qquad a\in A .
\end{align*}
Then for all $X\in\mathfrak{g},\ H\in\mathfrak{a}$ and $f \in S(\mathfrak{g})^A$, the following arguments are equivalent:
\begin{align*}
\text{\emph{(i)}}\quad & df_X \big([H,X]\big)=0\ ;\\
\text{\emph{(ii)}}\quad &
\sum_{i,k=1}^n  C_{jk}^i x_k \frac{\partial f}{\partial x_i} =0
\ \ \text{for each }j=1,\dots,s;\\
\text{\emph{(iii)}}\quad & B \big(\nabla f(X),[H,X]\big)=0\ ,
\end{align*} where $\nabla f$ is the gradient characterised by $df_X(V)=B(\nabla f(X),V)$, and for all $V\in\mathfrak{g}$ and $C_{jk}^i$ are structure constant. Equivalently, define
\begin{align*}
\mathcal{L}_j:=\sum_{i=1}^n\sum_{k=1}^n C_{jk}^i\,x_k\,\frac{\partial}{\partial x_i}\quad  j=1,\dots,s ,
\end{align*} we have \begin{align}
S(\mathfrak{g})^A=S(\mathfrak{g})^{\mathfrak{a}}=\bigcap_{j=1}^s \ker(\mathcal{L}_j). \label{eq:pdefin}
\end{align}
\end{proposition}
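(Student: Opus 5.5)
The plan is to prove the three equivalences pointwise, by identifying each condition with the vanishing of the derivative of $f$ along the infinitesimal adjoint action of $\mathfrak{a}$. Then we pass from the infinitesimal statement to the group statement using connectedness of $A$. Throughout, the Killing form $B$ (nondegenerate since $\mathfrak{g}$ is semisimple) identifies $\mathfrak{g}^*\cong\mathfrak{g}$, so polynomials on $\mathfrak{g}^*$ become polynomials on $\mathfrak{g}$. Because $B$ is $\mathrm{Ad}$-invariant, this identification intertwines $\mathrm{Ad}^*$ with $\mathrm{Ad}$.

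\textbf{Step 1: (i) and (iii).} The gradient is defined by $df_X(V)=B(\nabla f(X),V)$. Taking $V=[H,X]$ shows that (i) and (iii) are literally the same expression, so their equivalence is immediate.

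\textbf{Step 2: (i) and (ii).} Write $X=\sum_k x_kX_k$ and take $H=X_j$ a basis element of $\mathfrak{a}$. Then $[X_j,X]=\sum_{k,i}C_{jk}^i x_kX_i$. By the chain rule,
\begin{align*}
df_X([X_j,X])=\sum_{i,k=1}^n C_{jk}^i x_k\frac{\partial f}{\partial x_i}(X)=(\mathcal{L}_jf)(X).
\end{align*}
Since $H\mapsto df_X([H,X])$ is linear in $H$, condition (i) for all $H\in\mathfrak{a}$ holds iff $\mathcal{L}_jf(X)=0$ for $j=1,\dots,s$. Ranging over all $X$, this is (ii). I will also remark that $\mathcal{L}_j$ agrees with the Hamiltonian vector field of $x_j$ from \eqref{eq:func}, up to the index convention and sign. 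This shows $\bigcap_j\ker\mathcal{L}_j=S(\mathfrak{g})^{\mathfrak{a}}$ as in Definition \ref{def:pocen}.

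\textbf{Step 3: the equalities \eqref{eq:pdefin}.} This is the main point, since (i)--(iii) are infinitesimal while $S(\mathfrak{g})^A$ is defined by group invariance.

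For the inclusion $S(\mathfrak{g})^A\subseteq\bigcap_j\ker\mathcal{L}_j$, suppose $f(\mathrm{Ad}(a)X)=f(X)$ for all $a\in A$. Put $a=\exp(tH)$ and differentiate at $t=0$. Since $\frac{d}{dt}\big|_{0}\mathrm{Ad}(\exp tH)X=[H,X]$, this gives (i).

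For the converse, suppose (i) holds for all $X$ and $H$. Fix $H$ and $X$, and set $\gamma(t)=f(\mathrm{Ad}(\exp tH)X)$. Then
\begin{align*}
\gamma'(t)=df_{Y_t}([H,Y_t])=0,\qquad Y_t=\mathrm{Ad}(\exp tH)X,
\end{align*}
so $\gamma$ is constant and $f$ is invariant under every $\exp(tH)$.

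The main obstacle is extending invariance from one-parameter subgroups to all of $A$. I would handle it with connectedness: a connected Lie group is generated by $\exp(\mathfrak{a})$, indeed by any neighbourhood of the identity. The stabilizer $\{a\in A: a\cdot f=f\}$ is a subgroup containing $\exp(\mathfrak{a})$, hence equals $A$. This completes $S(\mathfrak{g})^A=S(\mathfrak{g})^{\mathfrak{a}}=\bigcap_{j=1}^s\ker\mathcal{L}_j$.
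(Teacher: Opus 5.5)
Your proof is correct and shares the paper's skeleton. In both, (i)$\Leftrightarrow$(iii) follows from the definition of $\nabla f$, and (i)$\Leftrightarrow$(ii) follows by expanding $[H,X]$ in the basis and using linearity in $H$. Both also obtain $S(\mathfrak{g})^A\subseteq\bigcap_j\ker\mathcal{L}_j$ by differentiating along $\exp(tH)$.

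You diverge only in the converse. The paper fixes $a\in A$, joins it to $e$ by a smooth path $h$ in $A$, and differentiates $\Phi(t)=f(\mathrm{Ad}(h(t)^{-1})X)$. This uses the fact that the logarithmic derivative of $h$ lies in $\mathfrak{a}$. The term that actually appears is $h^{-1}\dot h$, not the $\dot h h^{-1}$ written there, but both lie in $\mathfrak{a}$. This reaches every $a$ in one computation and uses connectedness only as path-connectedness.

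You differentiate only along one-parameter subgroups, where $\frac{d}{dt}\mathrm{Ad}(\exp tH)X=[H,Y_t]$ is immediate. You then replace the moving-frame computation with two facts: the stabilizer of $f$ is a subgroup, and a connected Lie group is generated by $\exp(\mathfrak{a})$. Your version avoids the left/right trivialization bookkeeping, at the cost of quoting the generation lemma. Similarly, your direct chain-rule identity $df_X([X_j,X])=(\mathcal{L}_jf)(X)$ is cleaner than the paper's route. The paper's route goes through $\nabla f=\sum_i\frac{\partial f}{\partial x_i}X_i$, which holds only for a $B$-orthonormal basis.

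One caution concerns your side remark that $\mathcal{L}_j$ agrees with $\{x_j,\cdot\}$ ``up to index convention and sign''. This is literally true only for a $B$-orthonormal basis. In general $\{x_j,\cdot\}=\sum_{i,k}C^i_{jk}x_i\,\partial/\partial x_k$ is the transposed operator, and its kernel matches $\ker\mathcal{L}_j$ only after the linear change of variables induced by $B$. For instance, take $\mathfrak{sl}_2$ with basis $e,f,h$ and $\mathfrak{a}=\mathbb{C}e$: one has $\mathcal{L}_e(x_f)=0$ but $\{x_e,x_f\}=x_h\neq 0$. Since you fix the $B$-identification at the outset, nothing breaks. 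However, the equality $\bigcap_j\ker\mathcal{L}_j=S(\mathfrak{g})^{\mathfrak{a}}$ should be stated as holding through that identification, not as an identity of polynomials in the same variables.
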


\begin{proof}
For any $f \in S(\mathfrak{g})^A$, by definition in \eqref{eq:sga}, $f(\mathrm{Ad}(a^{-1})X) = f(a \cdot X) = f(X)$.  For $H\in\mathfrak{a}$ and the one-parameter subgroup $a(t)=\exp(tH)$, $t \in \mathbb{C}$, we now consider the flow generated by $H$ as follows, \begin{align*}
0=\left.\frac{d}{dt}\right\vert_{t = 0}(a(t) \cdot  f)(X) =\left.\frac{d}{dt}\right\vert_{t = 0}f\big(\mathrm{Ad}(\exp(-tH))X\big)=- df_X \big([H,X]\big).
\end{align*}  Conversely, let $a\in A$ and choose any smooth path $h:[0,1]\to A$ with $h(0)=e$ and $h(1)=a$, where $e \in A$ is the identity element. Define the flow $\Phi(t):=f(\mathrm{Ad}(h(t)^{-1})X)$ for $t \in \mathbb{C}$. Then using chain rules, we have
\begin{align*}
\frac{d}{dt}\Phi(t) =- df_{\Phi(t)} \Big(\left[\dot{h}(t)\,h(t)^{-1},\,\mathrm{Ad}(h(t)^{-1})X\right]\Big)=0,
\end{align*} since $\dot{h}(t)h(t)^{-1}\in\mathfrak{a}$ and using condition (i) for all $H\in\mathfrak{a}$. Hence, $\Phi$ is constant, and $f(\mathrm{Ad}(a^{-1})X)=f(X)$, i.e., $f\in S(\mathfrak{g})^A$.

Now, suppose that (i) holds. By the definition of the $B$-gradient, we have $df_X([H,X]) = B ( \nabla f(X),$ $[H,X] )=0$. Thus, (iii) is concluded.  Finally, we show that $\sum_{i,k=1}^n  C_{jk}^i x_k \frac{\partial f}{\partial x_i}=0 $. For any $H \in \mathfrak{a}$, write $H=\sum_{j=1}^s h_j E_j$ and $X=\sum_{k=1}^n x_k E_k$. Then a direct computation gives \begin{align*}
[H,X]=\sum_{j=1}^s\sum_{k=1}^n h_j C_{jk}^i x_k  X_i,\qquad \nabla f=\sum_{i=1}^n\frac{\partial f}{\partial x_i}\,X_i,
\end{align*} and the coefficients $h_j$ are arbitrary, so each PDE displayed must vanish. Hence, \begin{align*}
    d f_X([H,X]) =  \sum_{i,j,k=1}^{s,n} C_{jk}^i h_j x_k \dfrac{\partial f}{\partial x_i} = \sum_j \mathcal{L}_j(f) = 0.
\end{align*} This gives \eqref{eq:pdefin}, and the proof is completed.
\end{proof}

\begin{remark} 
\label{rmk:property}
(i) In the case where $A$ is not connected. Let $A^\circ$ be the identity component of $A$. The same argument shows $ S(\mathfrak{g})^{A^\circ}=S(\mathfrak{g})^{\mathfrak{a}} = S(\mathfrak{g})^A$, and hence \begin{align*}
S(\mathfrak{g})^A=\bigl(S(\mathfrak{g})^{\mathfrak{a}}\bigr)^{\,A/A^\circ}.
\end{align*}

(ii)  Note that it is shown that if $\mathfrak{a} = \mathfrak{g}$, the maximal number of functionally independent solutions of \eqref{eq:func} is known to be given by \cite{MR0411412,MR0204094}
\begin{align}
    \mathcal{N} (\mathfrak{g}) = \dim  \mathfrak{g} - \mathrm{rank}(A_{ij} )_{n \times n} \, , \text{ } 
      \label{eq:constant}
\end{align} where $A_{ij} :=\sum_{l=1}^n C_{ij}^l x_l$ represents the matrix of the commutator table of the Lie algebra $\mathfrak{g}$ over the given basis. Note that \eqref{eq:constant} still holds if $\mathfrak{a}$ is an Abelian subalgebra. In this context, since $\mathfrak{a}$ is a subalgebra of $\mathfrak{g}$, we will consider the labelling problem where functions (not necessarily polynomials) satisfy the system of PDEs \eqref{eq:func}. It can be shown that the number of functionally independent solutions in the system \eqref{eq:func} is exactly \begin{align}
    \mathcal{N}(\mathfrak{a}) = \dim \mathfrak{g} - \dim \mathfrak{a} +  \ell_0,
\end{align} where $\ell_0$ is the number of $\mathfrak{g}$-invariant polynomials in $S(\mathfrak{a} )$  (for more details, see \cite{MR2276736,MR4660510,MR2515551} and \cite[Chapter 12, Section 12.1.5]{gtp} and references therein). 
\end{remark}

Finally, we focus on the rank of $S(\mathfrak{g})^A$. Recall that for a finitely-generated integral domain $D$ over the base field $\mathbb{C}$, the rank of $D$ is the transcendence degree defined by \begin{align}
    \mathrm{rank}  D: = \mathrm{trdeg}  D. \label{eq:rank}
\end{align} Throughout this paper, the rank of a finitely-generated algebra is defined in \eqref{eq:rank}.

Typically, finding a polynomial for the centralizer in relation to a subalgebra can be approached by two methods: solving PDEs systems \eqref{eq:func} directly, using the method of characteristics, or employing a polynomial ansatz. In this case, we apply the polynomial ansatz, as $\mathfrak{g}$ is assumed to be semisimple, and the commutant can be expressed as a polynomial in dual space variables. This simplifies the analysis by solving sets of linear equations. Note that, in the case of non-semisimple Lie algebras, the solution may be expressed as rational or even transcendental functions.  Recall that, by construction, an $\mathfrak{a}$-invariant homogeneous polynomial in $S^k(\mathfrak{g})^\mathfrak{a}$ takes the form of \eqref{eq:ci}. Subsequently, a list of all polynomials is compiled for each degree, and we examine the decomposability. That is, the degree up to which all polynomials can be expressed in terms of polynomials of lower degrees. Without loss of generality, if indecomposability is achieved up to the degree $\zeta$, the set of polynomials that form the commutant is described by

\[ \textbf{q}_1 : = \left\{ p_u^{(1)},\quad u=1,...,l_1 \right\}; \]
\[ \vdots  \]
\[  \textbf{q}_\zeta : = \left\{p_u^{(\zeta)},\quad u=1,...,l_\zeta\right\}, \]
where $p_u^{(k)} $ is an indecomposable $\mathfrak{a}$-invariant homogeneous polynomial of degree $k \in \{1,\ldots,\zeta\}$.

 Let $\textbf{Q}_\zeta :=    \textbf{q}_1 \sqcup \textbf{q}_2 \sqcup \ldots \sqcup \textbf{q}_\zeta $ be a finite set consisting of all indecomposable polynomials up to degree $\zeta$, and let $\textbf{Alg} \left\langle \textbf{Q}_\zeta \right\rangle $ denote the algebra generated by the set $\textbf{Q}_\zeta$. 
  It is clear that $\textbf{Alg}\langle\textbf{Q}_\zeta\rangle$ is infinite-dimensional as a vector space.  Notice that the elements in $\textbf{Q}_\zeta$ are not necessarily algebraically independent. Hence, they are not freely generated, which means that there may exist non-trivial polynomial relations among these generators. Now, for any $p_u^{(l)} \in \textbf{q}_l$ and $p_v^{(\ell)} \in \textbf{q}_\ell$, there exist some coefficients $ \Gamma^{s_1,...,s_r}_{uv}  \in \mathbb{C}$ such that the Lie-Poisson bracket $\{\cdot,\cdot\}  :  \textbf{Alg} \langle  \textbf{Q}_\zeta \rangle \times \textbf{Alg} \langle  \textbf{Q}_\zeta \rangle\rightarrow \textbf{Alg} \langle  \textbf{Q}_\zeta \rangle$ is given by \begin{equation}
     \left\{p_u^{(l)} ,p_v^{(\ell)} \right\}= \sum_{k_1+\ldots +  k_r =\ell+l-1} \Gamma^{s_1,...,s_r}_{uv} p_{s_1}^{(k_1)} \cdots p_{s_r}^{(k_r)}  . \label{eq:comm}
\end{equation} Here $ \Gamma^{s_1,...,s_r}_{uv} \in \mathbb{C}$ and $k_1,\ldots,k_r  \leq \zeta.$ Subsequently, the algebraic structure denoted by $\textbf{Alg}\left\langle \textbf{Q}_\zeta\right\rangle$, when equipped with the Lie-Poisson bracket $\{\cdot,\cdot\}$, in conjunction with additional polynomial relations $ P(\textbf{q}_1,\ldots,\textbf{q}_\zeta) = 0$, constitutes a finitely-generated Poisson algebra. This algebraic framework is thus characteristic of Poisson algebras, ensuring that it is defined by a finite set of generators.  
We now provide a standard algebraic construction for $A$-invariant subalgebra in $S(\mathfrak{g})$. See, for instance, \cite[Chapter 2]{michor1996isometric}. We starting with the following definition.

\begin{definition} 
\label{def:ftal}
Let $\mathbf{Q}_\zeta \subset S(\mathfrak{g})^A$ be a finite set that generates $S(\mathfrak{g})^A$ as a $\mathcal{Z}$-algebra, where $\mathcal{Z} : = \mathcal{Z}\left(S(\mathfrak{g})^A\right)$ is the Poisson center of $S(\mathfrak{g})^A$.  We write $\mathbf{Alg}\langle \mathbf{Q}_\zeta\rangle$ for the $\mathcal{Z}$-subalgebra of $S(\mathfrak{g})^A$ generated by $p_1,\dots,p_\zeta$. In particular, $\mathbf{Alg}\langle \mathbf{Q}_\zeta\rangle = S(\mathfrak{g})^A$.
\end{definition}

\begin{remark}
    Note that $\mathbf{Q}_\zeta$ is the minimal set of algebra generators of $S(\mathfrak{g})^A$; these generators need not be algebraically independent. The maximal number of algebraically independent generators is the transcendence degree (rank) of $S(\mathfrak{g})^A$, and it must be strictly smaller than $\zeta$.
\end{remark}

We now provide an algebraic description on the Poisson centraliser $S(\mathfrak{g})^A$. Based on the construction provided above, we see that $S(\mathfrak{g})^A$ consists of polynomials with additional relations. In the following proposition, we build the polynomial relations as elements of the kernel of a certain evaluation map.

\begin{proposition} 
\label{prop:polyrelation}
Let $\mathbf{Q}_\zeta  $ be the same as defined above, and let $ \mathbb{C}[a_1 ,\dots, a_\zeta]$ be the free commutative polynomial ring\footnote{ The symbols $a_i$ are formal indeterminates with no relations inside the polynomial ring; the images $\mathrm{ev}(a_i)$ are not algebraically nor functionally independent generators.} 
for the variables $a_1,\dots,a_\zeta$. Define the evaluation homomorphism \begin{align*}
    \mathrm{ev}:\mathbb{C}[a_1,\dots,a_\zeta]\longrightarrow S(\mathfrak{g})^A,\qquad \mathrm{ev}(a_i) = p_i   
\end{align*} for all $1 \leq i\leq \zeta$. Then $\mathrm{ev}$ is surjective and \begin{align*}
    I:= \ker(\mathrm{ev}) = \left\{  P(a_1,\dots,a_\zeta)\in \mathbb{C}[a_1,\dots,a_\zeta]: P(p_1,\dots,p_\zeta) = 0\ \text{in }S(\mathfrak{g})^A \right\}.
\end{align*} Consequently, \begin{align}
    \mathbb{C}[a_1,\dots,a_\zeta]\big/ I\cong S(\mathfrak{g})^A. \label{eq:FIT}
\end{align} 
\end{proposition}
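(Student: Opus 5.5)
The plan is to verify three routine facts in order: $\mathrm{ev}$ is a well-defined $\mathbb{C}$-algebra homomorphism, it is surjective, and its kernel is the set $I$ displayed in the statement. The isomorphism \eqref{eq:FIT} then follows from the first isomorphism theorem for commutative rings.

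First, since $\mathbb{C}[a_1,\dots,a_\zeta]$ is the free commutative $\mathbb{C}$-algebra on $a_1,\dots,a_\zeta$ and $S(\mathfrak{g})^A$ is a commutative $\mathbb{C}$-algebra, the universal property gives a unique algebra homomorphism with $a_i\mapsto p_i$. Concretely,
\begin{align*}
\mathrm{ev}(P)=P(p_1,\dots,p_\zeta).
\end{align*}
The right-hand side lies in $S(\mathfrak{g})^A$ because $S(\mathfrak{g})^A$ is a subalgebra: sums and products of $A$-invariant polynomials are $A$-invariant, as recorded after \eqref{eq:sga}.

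The description of the kernel is then tautological: $P\in\ker(\mathrm{ev})$ if and only if $P(p_1,\dots,p_\zeta)=0$ in $S(\mathfrak{g})^A$.

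The only substantive step is surjectivity. Here I would use Definition \ref{def:ftal}, which says $\mathbf{Alg}\langle\mathbf{Q}_\zeta\rangle=S(\mathfrak{g})^A$. One subtlety is that Definition \ref{def:ftal} phrases generation over the Poisson center $\mathcal{Z}$, whereas $\mathrm{ev}$ has coefficients in $\mathbb{C}$. I would resolve this by taking $\mathbf{Q}_\zeta$ to be the finite $\mathbb{C}$-algebra generating set of indecomposable homogeneous invariants constructed before the definition. Its existence is guaranteed by Remark \ref{2.2}(ii): $A$ is reductive, so $S(\mathfrak{g})^A=S(\mathfrak{g})^{\mathfrak{a}}$ (Proposition \ref{prop:equ}) is finitely generated by the Hilbert--Nagata theorem. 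If one insists on $\mathcal{Z}$-generation, note that $\mathcal{Z}\subset S(\mathfrak{g})^A$ is itself finitely generated over $\mathbb{C}$, and its generators can be adjoined to the list $p_1,\dots,p_\zeta$. Either way, every $f\in S(\mathfrak{g})^A$ can be written as a polynomial with complex coefficients in $p_1,\dots,p_\zeta$, i.e.\ $f=\mathrm{ev}(P)$ for some $P$.

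Finally, $\mathrm{ev}$ is a surjective ring homomorphism, so the first isomorphism theorem gives $\mathbb{C}[a_1,\dots,a_\zeta]/I\cong S(\mathfrak{g})^A$. Two closing observations:
\begin{itemize}
\item Because $\mathbb{C}[a_1,\dots,a_\zeta]$ is Noetherian, $I$ is finitely generated. This justifies describing $S(\mathfrak{g})^A$ by finitely many polynomial relations $P(\mathbf{q}_1,\dots,\mathbf{q}_\zeta)=0$, as in the preceding discussion.
\item The isomorphism is only one of commutative algebras. The Poisson structure is not claimed to descend from the polynomial ring, which carries no bracket in this statement.
\end{itemize}
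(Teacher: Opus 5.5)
Your proof is correct and follows the paper's route (surjectivity from generation, kernel by definition, first isomorphism theorem). On surjectivity you are more careful than the paper: it infers that the $\mathbb{C}$-algebra map $\mathrm{ev}$ is onto from $\mathbf{Q}_\zeta$ generating $S(\mathfrak{g})^A$ over $\mathcal{Z}$, which does not follow in general (for $A=T$ the $h_i$ are central, so the cycle monomials alone generate over $\mathcal{Z}$ but not over $\mathbb{C}$), and your reading of $\mathbf{Q}_\zeta$ as a finite $\mathbb{C}$-generating set supplied by Hilbert--Nagata is exactly the right fix.

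Your fallback assertion that $\mathcal{Z}$ is finitely generated over $\mathbb{C}$ is not automatic and comes with no argument, but it is also unnecessary. You are also right that the Poisson bracket the paper additionally transports to the quotient is not needed for the stated isomorphism.
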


\begin{proof}
Since $\mathbf{Q}_\zeta$ generates $S(\mathfrak{g})^A$ as a $\mathcal{Z}$-algebra by Definition \ref{def:ftal}, every element of $S(\mathfrak{g})^A$ is a polynomial in $p_i$ with $\mathcal{Z}$-coefficients. Hence, $\mathrm{ev}$ is surjective. By definition, an element $P(a_1,\dots,a_\zeta)\in \mathbb{C}[a_1,\dots,a_\zeta]$ lies in $\ker(\mathrm{ev})$ if and only if $P(p_1,\dots,p_\zeta)=0$ is in $S(\mathfrak{g})^A$ as polynomial relations contained in $I$. We then show that \eqref{eq:FIT} holds.

 Since $S(\mathfrak{g})^A$ is a Poisson subalgebra, for $i,j$, we choose polynomials $F_{ij}\in \mathbb{C}[a_1,\dots,a_\zeta]$ with $\{p_i,p_j\}=F_{ij}(p_1,\dots,p_\zeta) \in S(\mathfrak{g})^A$ and define \begin{align*}
    \left\{ a_i+I, a_j+I \right\}:=F_{ij}(a_1,\dots,a_\zeta)+I,
\end{align*} extending by $\mathbb{C}$-bilinearity and the Leibniz rule.  If $F_{ij}'$ is another choice with the same value as $(p_1,\dots,p_\zeta)$, then $F_{ij}-F_{ij}'\in I$. Thus, the definition is independent of the choice and hence well-defined. By the First Isomorphism Theorem, there exists a canonical isomorphism of $\mathcal{Z}$-algebras such that $\mathbb{C}[a_1,\dots,a_\zeta]\big/I \cong S(\mathfrak{g})^A$. 
\end{proof}

\begin{remark}
\label{rem:generators}
(i) Fix a finitely generating set  $\mathbf{Q}_\zeta=\{p_1,\dots,p_\zeta\}\subset S(\mathfrak{g})^A$. For any $p_j \in S(\mathfrak{g})^A$, if $S_0(\mathfrak{g})^A=\bigoplus_{j\geq 1}S_j(\mathfrak{g})^A$ consists of non-constant polynomials ($\deg p >0$), then $p_j\notin (S_0(\mathfrak{g})^A)^2$, i.e., the class of $p_j$ is nonzero in the space of indecomposables $$Q(S(\mathfrak{g})^A):=S_0(\mathfrak{g})^A\big/\left(S_0(\mathfrak{g})^A\right)^2,$$ where $\left(S_0(\mathfrak{g})^A\right)^2:= \left\{pq: p,q \in S_0(\mathfrak{g})^A\right\}$ is the ideal consisting of decomposable generators. The elements $p_j$ need \emph{not} be algebraically or functionally independent. They are simply the explicit generators inside $S(\mathfrak{g})^A$.

(ii) The $a_i$ are indeterminates in the polynomial ring  $ \mathbb{C}[a_1,\dots,a_\zeta]$.  They serve only as formal variables for recording polynomial expressions and relations among $p_j$ through the evaluation map $a_i\mapsto p_i$. In particular, $a_i$ do not denote new invariants within $S(\mathfrak{g})^A$, and any functional dependence of $p_j$ is simply reflected by nontrivial polynomial relations when substituted with $a_i\mapsto p_i$.

(iii) In general, a polynomial algebra is a special case of a finitely generated algebra, as it does not contain any polynomial relations. Although $S(\mathfrak{g})^A$ may contain a polynomial relation, it is therefore not necessarily a polynomial ring. However, as we constructed above via polynomial ansatz, the Poisson bracket relations in $S(\mathfrak{g})^A$ are closed in a polynomial way \eqref{eq:comm}, and regarding whether or not it contains polynomial relations, we always call $S(\mathfrak{g})^A$ a Poisson polynomial algebra. Hence, by a Poisson polynomial algebra, we mean a finitely generated Poisson $\mathbb{C}$-algebra whose underlying commutative algebra is a quotient of a polynomial ring by a Poisson ideal.
\end{remark}



\begin{definition}
\label{H}
Let $C_1,\ldots,C_r$ be basic $G$-invariant polynomials (Casimirs) in $S(\mathfrak{g})^G$, where $r=\mathrm{rank}\,\mathfrak{g}$. Let $Z_P\bigl(S(\mathfrak{g})^A\bigr)$ denote the \emph{Poisson center} of $S(\mathfrak{g})^A$, and set \begin{align*}
 t:=\mathrm{trdeg}_{\mathbb{C}}\, Z_P\bigl(S(\mathfrak{g})^A\bigr).
\end{align*} A Hamiltonian with respect to $S(\mathfrak{g})^A$ is any element \begin{align}
\mathcal{H}\in Z_P\bigl(S(\mathfrak{g})^A\bigr), \label{eq:Hamil}
\end{align} for instance any polynomial expression in a choice of $t$ algebraically independent elements of $Z_P\bigl(S(\mathfrak{g})^A\bigr)$ together with (optionally) a linear combination $\sum_{j=1}^r\gamma_j C_j$.
\end{definition}

A superintegrable system can be naturally correlated with the Poisson centralizer $S(\mathfrak{g})^A$ through the following considerations:

\begin{definition}
\label{def:superal}
A Hamiltonian $\mathcal{H}$ is \textit{algebraic (super)integrable} (with respect to $S(\mathfrak{g})^A$) if there exist $t:=\mathrm{trdeg}_{\mathbb{C}}\, Z_P\bigl(S(\mathfrak{g})^A\bigr)$ algebraically independent elements $c_1,\ldots,c_t\in Z_P\bigl(S(\mathfrak{g})^A\bigr)$ such that $\mathcal{H}\in Z_P\bigl(S(\mathfrak{g})^A\bigr)$ and
\begin{align*}
\mathrm{trdeg}_{\mathbb{C}} S(\mathfrak{g})
=
\mathrm{trdeg}_{\mathbb{C}} S(\mathfrak{g})^A
+
\mathrm{trdeg}_{\mathbb{C}} Z_P\bigl(S(\mathfrak{g})^A\bigr),
\end{align*}
so that we have an inclusion chain of Poisson algebras
\begin{align*}
Z_P\bigl(S(\mathfrak{g})^A\bigr)\subset S(\mathfrak{g})^A\subset S(\mathfrak{g}).
\end{align*}
\end{definition}

\begin{remark}
 Depending on the structure of the subalgebra $\mathfrak{a}$, $S(\mathfrak{g})^A$ can contain a sufficiently high number of functionally independent polynomials $p_1,\dots ,p_w$ such that $\mathcal{S}  $, consisting of a Hamiltonian and first integrals $\mathcal{H}, p_1,\dots ,p_w$, forms a superintegrable system, where the maximum number $w = \mathrm{rank}\, S(\mathfrak{g})^A$ of elements commutes with the algebraic Hamiltonian $\mathcal{H}$.
\end{remark}

\subsection{Geometric interpretation on superintegrable systems in \texorpdfstring{$S(\mathfrak{g})^A$}{S(\mathfrak{g})\^{}A}}
\label{subsec:geomi}
In Subsection \ref{subsec:geomi}, we provide the geometric interpretation of the algebraic superintegrable systems defined in Definition \ref{def:superal}. The main terminologies we applied in this subsection can be found in \cite[Chapter 6]{Dolgachev03} or \cite{Brion2010Actions}. Let $G$ be a complex semisimple Lie group with Lie algebra $\mathfrak{g}$, and let $A\subset G$ be a complex semisimple Lie subgroup. We consider the restriction of the adjoint action of $G$ to $A$. Since $\mathbb{C} [\mathfrak{g}^*]=S(\mathfrak{g})$, the affine invariant-theoretic quotient of $\mathfrak{g}^*$ by $A$ is defined by \begin{align} 
\mathfrak{g}^*//A:=\mathrm{Spec}S(\mathfrak{g})^A.  \label{eq:gitquotient} 
\end{align} By  Hilbert-Nagata theorem, $S(\mathfrak{g})^A$ is a finitely generated $\mathbb{C} $-algebra. Define \begin{align*} 
d_A:=\max_{X\in\mathfrak{g}}\dim\bigl(\mathrm{Ad}(A)\cdot X\bigr). 
\end{align*} The maximal orbit dimension is attained on a non-empty Zariski open subset of $\mathfrak{g}$. Hence, by the dimension formula, we have \begin{align} 
\dim(\mathfrak{g}^*//A) =\mathrm{trdeg}_{\mathbb{C}}S(\mathfrak{g})^A = \dim \mathfrak{g} -d_A = \dim G -d_A.  \label{eq:dimgitquo} 
\end{align}

Fix a non-degenerate $\mathrm{Ad}(G)$-invariant complex bilinear form $B$ on $\mathfrak{g}$. Via $B$ we identify $\mathfrak{g}\cong\mathfrak{g}^*$, so that we may view polynomial functions on $\mathfrak{g}$ as the symmetric algebra $S(\mathfrak{g})=\mathbb{C}[\mathfrak{g}]$. For each $X\in\mathfrak{g}$, let
\begin{align*}
\mathrm{ev}_X:S(\mathfrak{g})\to\mathbb{C},\qquad f\longmapsto f(X)
\end{align*}
be the evaluation homomorphism. Similar to \eqref{eq:gitquotient}, we define the affine (GIT) quotient of $\mathfrak{g}$ by $A$ to be $\mathfrak{g}//A:=\mathrm{Spec}\bigl(S(\mathfrak{g})^A\bigr)$. Let $\mathsf{i}_A:S(\mathfrak{g})^A \hookrightarrow S(\mathfrak{g})$ denote the inclusion of invariant polynomials. By functoriality of $\mathrm{Spec}$, this inclusion induces a morphism
\begin{align}
\chi_A: \mathfrak{g}  \longrightarrow   \mathfrak{g}//A, \qquad X\longmapsto \mathrm{ev}_X\vert_{S(\mathfrak{g})^A},  \label{eq:canonicalpro}
\end{align}
whose pullback map on functions is precisely \begin{align*}
\widehat{\chi}_A = \mathsf{i}_A:S(\mathfrak{g})^A\hookrightarrow S(\mathfrak{g}).
\end{align*} In particular, for $X,Y\in\mathfrak{g}$, \begin{align*}
\chi_A(X) = \chi_A(Y) \Longleftrightarrow f(X)=f(Y)\text{ for all }f\in S(\mathfrak{g})^A.
\end{align*} For $X\in\mathfrak{g}$, we denote by \begin{align}
[X]_A :=\chi_A^{-1}\bigl(\chi_A(X)\bigr) =\{Y\in\mathfrak{g}: f(Y)=f(X)\text{ for all }f\in S(\mathfrak{g})^A\} \label{eq:pointsinquo}
\end{align} the fiber of the quotient morphism through $X$. Equivalently, $[X]_A$ is the invariant-theoretic equivalence class of $X$ (two points are equivalent if they cannot be separated by invariant polynomials). In general, this fiber may contain several distinct $A$-orbits, so it need not coincide with the single orbit $A\cdot X$.

Let $f_1, \ldots, f_\zeta$ be homogeneous generators of $S(\mathfrak{g})^A$. By Proposition \ref{prop:polyrelation}, we deduce \begin{align*}
S(\mathfrak{g})^A\simeq\mathbb{C} [t_1,\ldots,t_\zeta]/I,\qquad \mathfrak{g}//A\simeq V(I)\subset\mathbb{A} ^\zeta. 
\end{align*} Under this identification, the quotient morphism is the map\begin{align*}
\chi_A:\mathfrak{g}\longrightarrow V(I)\subset\mathbb{A} ^\zeta,\qquad X\longmapsto\bigl(f_1(X),\ldots,f_\zeta(X)\bigr). 
\end{align*} Therefore, for all $X,Y\in\mathfrak{g}$,  we deduce \begin{align*} 
\chi_A(X) = \chi_A(Y)\Longleftrightarrow f_i(X) = f_i(Y)\text{ for every }1 \leq i\leq \zeta \Longleftrightarrow f(X) = f(Y)\text{ for every } f \in S(\mathfrak{g})^A. 
\end{align*} Moreover, \begin{align*} 
\mathfrak{g}//A\simeq\mathbb{A} ^\zeta\Longleftrightarrow I=(0)\Longleftrightarrow f_1,\ldots,f_\zeta\text{ are lineally independent}.
\end{align*}

Since $A$ is reductive (in particular, semisimple), the invariant ring $S(\mathfrak{g})^A$ is finitely generated and the morphism $\chi_A:\mathfrak{g}\to\mathfrak{g}//A=\mathrm{Spec}\bigl(S(\mathfrak{g})^A\bigr)$ is the affine GIT quotient (hence a good quotient). Consequently, every fiber of $\chi_A$ contains a unique closed $A$-orbit. Equivalently, \begin{align} 
\chi_A(X) = \chi_A(Y)\Longleftrightarrow \overline{A\cdot X}\cap\overline{A\cdot Y}\neq\varnothing. \label{eq:relations}
\end{align} In particular, the closed $A$-orbits in $\mathfrak{g}$ are in bijection with the closed points of $\mathfrak{g}//A$. If $A\cdot X$ and $A\cdot Y$ are both closed, then this relation \eqref{eq:relations} reduces to $\chi_A(X)= \chi_A(Y)$ if and only if $A\cdot X = A\cdot Y$. 

We now introduce the regular locus, which enable us to define smooth morphism on the GIT quotient.
\begin{definition} 
\label{def:regularstr} \cite[Chapter 1, Section 5]{HartshorneAG}, \cite{stacks01V4}
The smooth value locus of the quotient morphism $\chi_A$ is the maximal Zariski open subset of $\mathfrak{g} //A$ over which $\chi_A$ is smooth: \begin{align}
    (\mathfrak{g} //A)_{\mathrm{reg}}  := \bigcup    \left\{   \mathcal{U} \subset \mathfrak{g} //A :   \mathcal{U} \text{ is Zariski open and }     \chi_A^{-1}(\mathcal{U})\to \mathcal{U} \text{ is smooth}  \right\}.
\end{align} Equivalently, $y\in(\mathfrak{g} //A)_{\mathrm{reg}}$ if and only if there exists a Zariski open neighbourhood $y\in \mathcal{U}\subset \mathfrak{g} //A$ such that \begin{align*}
    \chi_A^{-1}(\mathcal{U})\longrightarrow \mathcal{U}
\end{align*} is a smooth morphism.
\end{definition}

\begin{proposition} 
\label{prop:smoothregularg} \cite[Proposition 1.26]{Brion2010Actions}
The affine quotient $\mathfrak{g}//A$ carries a unique Poisson structure for which $\chi_A:\mathfrak{g}\to\mathfrak{g}//A$ is a Poisson morphism. Moreover, $(\mathfrak{g}//A)_{\mathrm{reg}}$ is a non-empty Zariski open dense subset of $\mathfrak{g}//A$, and the restricted morphism \begin{align*} \chi_A:\chi_A^{-1}\bigl((\mathfrak{g}//A)_{\mathrm{reg}}\bigr)\longrightarrow(\mathfrak{g}//A)_{\mathrm{reg}} \end{align*} is a smooth Poisson morphism. In particular, after passing to the associated complex manifolds, this restriction is a Poisson submersion.
\end{proposition}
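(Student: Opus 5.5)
The plan is to treat the three assertions separately: the Poisson structure, the regular locus, and the analytic submersion. Each reduces to a standard fact once we know that $S(\mathfrak{g})^A$ is a finitely generated Poisson subalgebra and an integral domain.

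\textbf{Poisson structure.} First, $S(\mathfrak{g})^A$ is closed under the Lie--Poisson bracket. This was noted after \eqref{eq:sga}, and also follows from Proposition \ref{prop:equ}, since each $\mathcal{L}_j$ is a derivation of the bracket by Definition \ref{def:p2.1}. So the bracket restricts to the coordinate ring $S(\mathfrak{g})^A$ of $\mathfrak{g}//A=\mathrm{Spec}\,S(\mathfrak{g})^A$, which is finitely generated by the Hilbert--Nagata theorem. This defines a Poisson structure on the affine scheme, and by construction the pullback $\widehat{\chi}_A=\mathsf{i}_A$ is a homomorphism of Poisson algebras, i.e.\ $\chi_A$ is a Poisson morphism.

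For uniqueness, suppose $\{\cdot,\cdot\}'$ is another bracket on $S(\mathfrak{g})^A$ for which $\chi_A$ is Poisson. Then $\mathsf{i}_A\{f,g\}'=\{\mathsf{i}_Af,\mathsf{i}_Ag\}=\mathsf{i}_A\{f,g\}$ for all $f,g\in S(\mathfrak{g})^A$. Since $\mathsf{i}_A$ is injective ($\chi_A$ is dominant, indeed surjective as a good quotient), the two brackets agree.

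\textbf{Regular locus.} Since $S(\mathfrak{g})^A\subset S(\mathfrak{g})$ is a domain, $\mathfrak{g}//A$ is an integral affine variety over $\mathbb{C}$, and $\chi_A$ is a dominant morphism from the smooth variety $\mathfrak{g}\cong\mathbb{A}^n$. I would then invoke generic smoothness in characteristic zero (Hartshorne III.10.7) in two steps.
\begin{itemize}
\item Restrict first to the smooth locus $(\mathfrak{g}//A)^{\mathrm{sm}}$, which is nonempty and open because the variety is reduced in characteristic zero.
\item Over it, generic smoothness gives a nonempty open $\mathcal{U}$ such that $\chi_A^{-1}(\mathcal{U})\to\mathcal{U}$ is smooth.
\end{itemize}
Hence $(\mathfrak{g}//A)_{\mathrm{reg}}\neq\varnothing$. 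It is Zariski open by Definition \ref{def:regularstr}, since it is a union of opens. It is dense because $\mathfrak{g}//A$ is irreducible.

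The restriction of $\chi_A$ to $\chi_A^{-1}((\mathfrak{g}//A)_{\mathrm{reg}})$ is smooth by definition of the regular locus. It remains Poisson, because restricting to open subsets is compatible with the bracket: the bracket is local, being a biderivation, so it localizes to open sets.

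\textbf{Analytic submersion.} A smooth morphism of smooth complex varieties has surjective differential at every point. The target over $\mathcal{U}$ is smooth, either by the choice of $\mathcal{U}$ above or by descent of smoothness along a smooth surjection from a smooth source. Passing to the associated complex manifolds, the surjective differential says exactly that the restriction of $\chi_A$ is a holomorphic submersion. It is therefore a Poisson submersion.

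\textbf{Main obstacle.} The delicate step is applying generic smoothness correctly. That theorem needs the target to be a variety, and it does not by itself control the singular points of $\mathfrak{g}//A$. This is why I restrict to the smooth locus first; reducedness of $\mathfrak{g}//A$ is guaranteed because the invariant ring is a domain.
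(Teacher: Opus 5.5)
Your proposal is correct and takes the same route as the paper. You show $S(\mathfrak{g})^A$ is closed under the Lie--Poisson bracket and get uniqueness from injectivity of $\mathsf{i}_A$; you then use generic smoothness for the dominant map $\chi_A$ from the smooth irreducible $\mathfrak{g}$, openness from the definition, density from irreducibility, and pass from a smooth morphism to a submersion. Your explicit appeal to Hartshorne III.10.7, together with descent of regularity along the flat surjection to see the target is smooth, supplies the justification the paper's proof only gestures at; the preliminary restriction to the smooth locus is harmless but unnecessary, since III.10.7 already allows a singular target.
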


\begin{proof}
Since the adjoint action of $A$ preserves the Lie bracket on $\mathfrak{g}$, it also preserves the Lie-Poisson bracket on $S(\mathfrak{g})$. Hence, for $f,h\in S(\mathfrak{g})^A$ and $a\in A$, one has \begin{align*} 
a\cdot\{f,h\}=\{a\cdot f,a\cdot h\}=\{f,h\}.
\end{align*} Thus, $S(\mathfrak{g})^A$ is a Poisson subalgebra of $S(\mathfrak{g})$. This gives a Poisson structure on \begin{align*}
\mathbb{C} [\mathfrak{g}//A] = S(\mathfrak{g})^A. 
\end{align*} Since $\widehat{\chi}_A:S(\mathfrak{g})^A \hookrightarrow S(\mathfrak{g})$ is the inclusion, $\chi_A$ is a Poisson morphism. The uniqueness of the Poisson structure follows from the injectivity of $\widehat{\chi}_A$, because the bracket on $\mathbb{C} [\mathfrak{g}//A]$ is forced to satisfy
\begin{align*} \widehat{\chi}_A\{u,v\}_{\mathfrak{g}//A}=\{\widehat{\chi}_Au,\widehat{\chi}_Av\}_{\mathfrak{g}},\qquad u,v\in S(\mathfrak{g})^A. \end{align*}

It remains to prove the smoothness statement. The variety $\mathfrak{g}$ is smooth and irreducible, and the morphism $\chi_A:\mathfrak{g}\to\mathfrak{g}//A$ is dominant because $\widehat{\chi}_A$ is injective. Since the ground field has characteristic zero, by Definition \ref{def:regularstr}, regularity on $\mathfrak{g}//A$ implies that there exists a non-empty Zariski open subset $U\subset\mathfrak{g}//A$ such that
\begin{align*} \chi_A^{-1}(\mathcal{U})\longrightarrow \mathcal{U} \end{align*}
is smooth. By definition of the regular value locus, $U\subset(\mathfrak{g}//A)_{\mathrm{reg}}$. Therefore, $(\mathfrak{g}//A)_{\mathrm{reg}}$ is non-empty, Zariski open, and dense. Since $\chi_A$ is Poisson on the whole quotient, the restricted morphism remains Poisson. Since $\chi_A$ is Poisson on the whole quotient, the restricted morphism remains Poisson. Finally, a smooth morphism of complex Lie varieties induces a submersion on the associated complex manifolds, and hence the restricted morphism is a Poisson submersion.
\end{proof}


We now provide a geometric interpretation of the definition of a superintegrable system in $S(\mathfrak{g})$, which can be seen as the combination of both Definition \ref{def:superal} and Definition \ref{def:superge}. 

\begin{definition}
\label{def:superalge}
Let $\mathcal{X}$ be an affine Poisson variety with coordinate algebra $\mathcal{O}(\mathcal{X})$. The inclusion chain of   finitely generated Poisson subalgebras \begin{align*}
\mathcal{B}\subset \mathcal{A}\subset \mathcal{O}(\mathcal{X})
\end{align*} defines a superintegrable system on $\mathcal{X}$ if \begin{center}
    (a) \  $\mathcal{B}\subset \mathcal{Z}(\mathcal{A})$. That is,  $\mathcal{B}_k$ is in the Poisson center of $\mathcal{A}$;  \quad (b) \  $\mathrm{trdeg} \,\mathcal{A}+\mathrm{trdeg} \,\mathcal{B}=\dim \mathcal{X}$.
\end{center}   Equivalently, the induced chain of affine Poisson morphisms \begin{align*}
\mathcal{X} \xrightarrow{\ \pi\ } \mathrm{Spec} \,\mathcal{A} \xrightarrow{\ \rho\ } \mathrm{Spec}\,\mathcal{B}
\end{align*} has $\dim \mathcal{X}=\dim\mathrm{Spec}\,\mathcal{A}+\dim\mathrm{Spec}\,\mathcal{B}$.
\end{definition}

\begin{remark} \label{rem:regularpoisub}
Definition \ref{def:superalge} illustrates an affine-algebraic expression:  the chain $  X \xrightarrow{\ \pi\ } \mathrm{Spec} \, \mathcal{A} \xrightarrow{\ \rho\ } \mathrm{Spec} \, \mathcal{B} $  is required to be a chain of Poisson morphisms (i.e., the pullbacks $\pi^*,\rho^*$ are Poisson algebra morphisms), together with the dimension identity \begin{align*}
    \dim X=\dim\mathrm{Spec} \,\mathcal{A}+\dim\mathrm{Spec} \, \mathcal{B}  \ \Longleftrightarrow  \mathrm{trdeg} \, \mathcal{A}+\mathrm{trdeg} \,\mathcal{B}=\dim X .
\end{align*}  In particular, $\mathrm{Spec} \,\mathcal{A}$ and $\mathrm{Spec} \,\mathcal{B}$ are \emph{not} assumed to be smooth. 

We can recapture a Poisson submersion chain defined in Definition \ref{def:superge} by restricting to a Zariski open dense \emph{regular locus} on which the relevant Jacobian ranks are maximal. See Definition \ref{def:regularstr}. Concretely, for a morphism $f\colon X\to Y$ with $X$ smooth, let $Y_{\mathrm{reg}}(f)\subset Y$ denote the smooth value locus of $f$ in the sense of Definition \ref{def:regularstr}, and define \begin{align*}
  X_{\mathrm{reg}}(f):=f^{-1}\bigl(Y_{\mathrm{reg}}(f)\bigr)\subset X.
\end{align*}
Equivalently (since $X$ is smooth), $X_{\mathrm{reg}}(f)$ is the Zariski open locus on which the differential $df_x$ has maximal rank. If $Y$ is singular, the phrase \emph{$f$ is a (Poisson) submersion} is therefore to be understood after restricting to $X_{\mathrm{reg}}(f)\to Y_{\mathrm{reg}}(f)$, where the target is smooth and the restriction is a smooth morphism (hence a submersion on associated complex manifolds).


On such regular loci, the induced maps in the corresponding chains become Poisson submersions between smooth Poisson manifolds, such that the geometric hypotheses of Definition \ref{def:superge} apply without modifying the algebraic superintegrability of Definition \ref{def:superalge} as restricting to a Zariski open dense subset does not change the function-field
level relations.
\end{remark}

Let $\chi_A:\mathfrak{g} \rightarrow \mathfrak{g}//A $ be the categorical quotient map defined above. Since $\chi_A$ collapses each $A$-orbit, we have the fundamental identity $ \dim \mathfrak{g}//A  =\dim\mathfrak{g} - d_A $. By Definition \ref{def:superalge}, to construct a superintegrable system, we seek a chain of affine Poisson varieties \begin{align}
     \mathfrak{g} =\mathrm{Spec}\, S(\mathfrak{g}) \,\xrightarrow{\,\chi_A\,}\, \mathfrak{g}//A =\mathrm{Spec} \, S(\mathfrak{g})^A \,\xrightarrow{\,\rho\,}\, \mathrm{Spec} \, \mathcal{B},
\end{align} where $\mathcal{B}\subset S(\mathfrak{g})^A$ is a finitely-generated Poisson-commutative polynomial ring of transcendence degree $m$, such that $
\dim\mathfrak{g} =\dim \mathfrak{g}//A+m$. In what follows, we will concentrate on a specific case, letting $A = T$.

Let $T\subset G$ be a maximal torus with $\dim T=r$. Then $d_T=r$, and Chevalley’s restriction theorem produces \begin{align}
  S(\mathfrak{g})^G \cong S(\mathfrak{t})^W \cong \mathbb{C}[F_1,\ldots,F_r], \label{eq:crt}
\end{align} where $F_1,\ldots,F_r$ are algebraically independent homogeneous generators. Hence $S(\mathfrak{g})^G$ is a polynomial ring and therefore $\mathfrak{g}//G \cong \mathbb{A}^r$.  The Kostant theorem identifies $S(\mathfrak{g})^G$ with the Poisson center $\mathcal{Z}\bigl(S(\mathfrak{g})\bigr)$, and since $T \subset G$, we must have $S(\mathfrak{g})^G \subset S(\mathfrak{g})^T$.  
We then claim that the chain \begin{align}
    \mathrm{Spec}\, S(\mathfrak{g}) \,\xrightarrow{\ \chi_T\ }\,
 \mathrm{Spec}\, S(\mathfrak{g})^T \,\xrightarrow{\ \rho\ }\,
\mathrm{Spec}\, S(\mathfrak{g})^G
\end{align} is superintegrable. 

\begin{theorem} 
\label{thm:superalchain}
Let $G$ be a semisimple and connected Lie group over a field $\mathbb{C}$ with Lie algebra $\mathfrak{g}$, and let $T\subset G$ be a maximal rank torus $r=\dim T$.  Then the inclusion of Poisson algebras\begin{align*}
S(\mathfrak{g})^G  \subset  S(\mathfrak{g})^T  \subset  S(\mathfrak{g})
\end{align*}makes the chain of affine Poisson varieties \begin{equation} 
\mathfrak{g}\xrightarrow{\,\chi_T\,}\mathfrak{g}//T \xrightarrow{\,\rho\,} \mathfrak{g}//G      \label{eq:chain}
\end{equation} a superintegrable system. More precisely:
\begin{align*}
\text{(i)}\quad & S(\mathfrak{g})^G \subset \mathcal{Z}\big(S(\mathfrak{g})^T\big),\\
\text{(ii)}\quad & \mathrm{trdeg} \, S(\mathfrak{g})^G=r,\qquad
\mathrm{trdeg} \, S(\mathfrak{g})^T=\dim\mathfrak{g} - r,\\
\text{(iii)}\quad & \dim\mathrm{Spec} \, S(\mathfrak{g}) =\bigl(\dim\mathfrak{g}-r\bigr)+r  .
\end{align*}
\end{theorem}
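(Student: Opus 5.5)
The plan is to verify the two conditions (a) and (b) of Definition \ref{def:superalge} for $\mathcal{B}=S(\mathfrak{g})^G$, $\mathcal{A}=S(\mathfrak{g})^T$, $\mathcal{X}=\mathfrak{g}$. Once they hold, Proposition \ref{prop:smoothregularg} applied to $A=T$ and $A=G$ shows that $\chi_T$ and $\chi_G$ are Poisson morphisms. The map $\rho$ is the morphism dual to the Poisson inclusion $S(\mathfrak{g})^G\hookrightarrow S(\mathfrak{g})^T$, so it is Poisson. Since $\chi_G=\rho\circ\chi_T$, we get the chain \eqref{eq:chain}. By Remark \ref{rem:regularpoisub}, restricting to the regular loci then gives a chain of Poisson submersions.

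For (i), I would use the identification of $S(\mathfrak{g})^G$ with the Poisson centre $\mathcal{Z}(S(\mathfrak{g}))$ recalled before the theorem. Concretely, a $G$-invariant $f$ satisfies $\{x,f\}=0$ for all linear $x$ by Proposition \ref{prop:equ} with $A=G$. By the Leibniz rule it then brackets to zero with every polynomial, and in particular with every element of $S(\mathfrak{g})^T$. Since $T\subset G$, we also have $S(\mathfrak{g})^G\subset S(\mathfrak{g})^T$, so $S(\mathfrak{g})^G\subset\mathcal{Z}(S(\mathfrak{g})^T)$.

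For (ii), the first equality comes directly from Chevalley's theorem \eqref{eq:crt}: $S(\mathfrak{g})^G\cong\mathbb{C}[F_1,\dots,F_r]$ with the $F_i$ algebraically independent, so the transcendence degree is $r$. For the second, I use the dimension formula \eqref{eq:dimgitquo}, which gives $\mathrm{trdeg}\,S(\mathfrak{g})^T=\dim\mathfrak{g}-d_T$. It then remains to show $d_T=r$.

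Showing $d_T=r$ is the main step, and the only one needing an argument beyond citation. I would use the root decomposition $\mathfrak{g}=\mathfrak{t}\oplus\bigoplus_{\alpha\in\Phi}\mathfrak{g}_\alpha$ and write $X=X_0+\sum_\alpha X_\alpha$. The tangent space to $\mathrm{Ad}(T)\cdot X$ is $[\mathfrak{t},X]=\{\sum_\alpha\alpha(H)X_\alpha : H\in\mathfrak{t}\}$. For $X$ with all $X_\alpha\neq0$ (a Zariski open dense set), the map $H\mapsto(\alpha(H))_\alpha$ has rank $r$ because the roots span $\mathfrak{t}^*$. Hence the generic orbit dimension is exactly $r$. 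Equivalently, the generic stabilizer in $\mathfrak{t}$ is the intersection of the kernels of all roots, which is the centre of $\mathfrak{g}$ and therefore zero by semisimplicity. A subtle point is that \eqref{eq:dimgitquo} is really Rosenlicht's theorem combined with the fact that for a torus, generic orbits are closed in their fibres. To justify the algebra-level equality, I would instead exhibit $\dim\mathfrak{g}-r$ algebraically independent $T$-invariants directly. Choose a basis $\alpha_1,\dots,\alpha_r$ of simple roots. The $\mathfrak{t}$-coordinates $h_1,\dots,h_r$ are invariant, and for each positive root $\beta$ the product $e_\beta e_{-\beta}$ is invariant. For each non-simple root $\beta=\sum n_i\alpha_i$, the Laurent monomial $e_\beta\prod_i e_{\alpha_i}^{-n_i}$ has weight $0$. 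Multiplying by a suitable power of $\prod_i e_{\alpha_i}$ and of $\prod_i e_{-\alpha_i}$ makes it a genuine polynomial invariant. Counting gives $r+r+2(|\Phi^+|-r)=\dim\mathfrak{g}-r$ invariants, and their independence follows from a triangularity argument on the $e_\beta$. This gives the lower bound, and the orbit dimension bound gives the upper bound.

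Finally, (iii) and condition (b) follow by adding the two transcendence degrees, using that $\dim\mathrm{Spec}\,S(\mathfrak{g})=\mathrm{trdeg}\,S(\mathfrak{g})=\dim\mathfrak{g}$.
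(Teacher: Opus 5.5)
Your proposal is correct. For (i), the Chevalley half of (ii), and (iii) it coincides with the paper's proof.

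The one genuine difference is in $\mathrm{trdeg}\,S(\mathfrak{g})^T=\dim\mathfrak{g}-r$. The paper does the same root-space computation of the generic stabilizer and then simply invokes \eqref{eq:dimgitquo}. You use the orbit dimension only for the upper bound. You get the lower bound from explicit monomial invariants: the $h_i$, the $e_{\beta}e_{-\beta}$, and the weight-zero monomials attached to non-simple roots. Their exponent vectors are triangular and hence linearly independent.

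Your extra step is worth keeping. The paper states \eqref{eq:dimgitquo} for arbitrary reductive $A$, but for polynomial invariants only the inequality $\mathrm{trdeg}\,S(\mathfrak{g})^A\le\dim\mathfrak{g}-d_A$ holds in general. For example, for $\mathbb{C}^*$ acting by scalars on $\mathbb{C}^2$ the generic orbit has dimension $1$, but the only invariants are constants. Equality for $T$ acting on $\mathfrak{g}$ holds because the weights $\Phi\cup\{0\}$ are symmetric under negation and span $\mathfrak{t}^*$, so generic $T$-orbits are closed. The paper's shorter argument uses this tacitly. Yours is self-contained, and it also anticipates the cycle-monomial generators in the $\mathfrak{sl}_n$ example.

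One minor point: your aside that ``for a torus, generic orbits are closed in their fibres'' is false for arbitrary torus actions (the scalar action again). It is the symmetry of $\Phi$ that makes it work here. Since your explicit invariants bypass that remark, the proof is unaffected.
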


\begin{proof}
We start with (i). From the discussion above, we know $S(\mathfrak{g})^G \subset S(\mathfrak{g})^T$. Now, for any $f \in S(\mathfrak{g})^G$, we have $f \in \mathcal{Z}(S(\mathfrak{g}))$. As $S(\mathfrak{g})^T \subset S(\mathfrak{g})$, for any $g \in S(\mathfrak{g})^T$, we must have $\{f,g\} = 0$. Hence (i) is claimed.  Moreover, using \eqref{eq:crt}, we have $\mathrm{trdeg} \, S(\mathfrak{g})^G=r$, proving the first identity in (ii). By the definition of the spectrum, we identify $\mathfrak{g} =\mathrm{Spec} \, S(\mathfrak{g})$.

To conclude that \eqref{eq:chain} is superintegrable, we only need to check the dimension.  Consider the $T$-action on $\mathfrak{g} $ by the coadjoint representation, which has a generic orbit dimension $r$. Indeed, $\mathfrak{g}$ is decomposed into $T$-weight spaces \begin{align*}
\mathfrak{g}=\mathfrak{t} \oplus\bigoplus_{\alpha\in\Phi}\mathfrak{g}_\alpha,
\qquad \left.\mathrm{Ad}(t)\right\vert_{\mathfrak{g}_\alpha}=\alpha(t)\cdot \mathrm{id}.
\end{align*}  Here, $\Phi$ is the set of roots for $\mathfrak{g}$. Similar decomposition holds for $\mathfrak{g}^*$. 
The stabiliser in $T$ is then finite, so $\dim T\cdot X=r$ with $X \in \mathfrak{g}$. By the dimension formula for the GIT quotients in \eqref{eq:dimgitquo}, \begin{align*}
\mathrm{trdeg} \, S(\mathfrak{g})^T=\dim \mathfrak{g} - \dim(\text{generic }T\text{-orbit}) =\dim\mathfrak{g} - r.
\end{align*} Hence, (ii) holds. Since $\dim\mathrm{Spec} \, S(\mathfrak{g})=\dim\mathfrak{g}$, together with (ii), gives \begin{align*}
\dim\mathfrak{g} = \bigl(\dim\mathfrak{g}-r\bigr)+r = \dim\mathrm{Spec} \, S(\mathfrak{g})^T+\dim\mathrm{Spec} \, S(\mathfrak{g})^G.
\end{align*}
As $S(\mathfrak{g})^G\subset S(\mathfrak{g})^T\subset S(\mathfrak{g})$ are Poisson subalgebras and $S(\mathfrak{g})^G$ lies in the Poisson center of $S(\mathfrak{g})^T$, the induced maps in \eqref{eq:chain} are Poisson morphisms, and the pair
\begin{align}
\mathcal{B} =S(\mathfrak{g})^G   \subset  \mathcal{A} =S(\mathfrak{g})^T
\end{align}
defines a superintegrable system in the sense of Definition \ref{def:superalge}. This proves (iii) and the claim.  
\end{proof}

Let $\chi:\mathfrak{g}\to \mathfrak{g}//G$ be the canonical quotient map and let
\begin{align*}
\chi_T:\mathfrak{g} \longrightarrow  \mathfrak{g}//T ,\qquad \rho:\mathfrak{g}//T \longrightarrow  \mathfrak{g}//G
\end{align*}
be induced by the inclusions $S(\mathfrak{g})^G\hookrightarrow S(\mathfrak{g})^T \hookrightarrow S(\mathfrak{g})$, so that $\chi = \rho\circ \chi_T$. 

In what follows, let $M = G/A$. Recall that in \cite{jiang2026poisson}, we constructed the superintegrable Poisson chain on the cotangent bundle of a homogeneous space with the magnetic symplectic form. We will use the map \begin{align*}
P:T^*M\longrightarrow \mathfrak{g},\qquad  P(g,X):=\mathrm{Ad}(g)(X),
\end{align*} which is the moment map in our trivialisation of $T^*M$. We then consider the regular superintegrable system chain. That is, consider the case with $A = T$.

The vector space $\mathfrak{m} = T_{eA}(G/T)$ is $T$-invariant. Define
\begin{align*}
\overline \chi_T:\mathfrak{m}//T\ \longrightarrow\ \mathfrak{g}//G,\qquad [u]_T\longmapsto \chi(u),
\end{align*}
where $[u]_T$ denotes the $T$-orbit class of $u\in \mathfrak{m}$. The corresponding fiber product is
\begin{align*}
Y := \mathfrak{g} \times_{\mathfrak{g}//G}\mathfrak{m}//T
= \{(X,[u]_T)\, :\, \chi(X)=\chi(u)\}.
\end{align*}
We equip $Y$ with the unique Poisson structure for which the projections $\mathrm{pr}_1:Y\to \mathfrak{g}$ and $\pi_2:Y\to \mathfrak{g}//G$ are Poisson.

\begin{theorem}
\label{thm:spectralequivalent}
Consider the superintegrable system chain
\begin{align*}
T^*M \xrightarrow{\ \pi_1\ }Y \xrightarrow{\ \pi_2\ } \mathfrak{g}//G, \qquad \begin{matrix}
    \pi_1(g,X):=\left(P(g,X),[u]_T\right), \\
    \pi_2\left(P(g,X),[u]_T\right) : = (C_1,\ldots,C_r),
\end{matrix}
\end{align*}
where $(C_1,\ldots,C_r)$ are the basic invariant polynomials (i.e., coordinate functions on $\mathfrak{g}//G$) evaluated at the $\mathfrak{g}$-component.

Together with the chain $\mathfrak{g} \xrightarrow{\ \chi_T\ } \mathfrak{g}//T \xrightarrow{\ \rho\ } \mathfrak{g}//G$, these two systems are spectrally equivalent in the sense of Definition \ref{def:equidefi}. In particular, we have the commutative diagram
\[\begin{tikzcd}
	{T^*M} & {\mathfrak{g} \times_{\mathfrak{g}//G} (\mathfrak{m} - \varepsilon W)//T} & {\mathfrak{g}//G} \\
	{\mathfrak{g}^* \cong \mathfrak{g}} & {\mathfrak{g}//T=\mathrm{Spec} \, S(\mathfrak{g})^T} & {\mathfrak{g}//G}
	\arrow["{\pi_1}", from=1-1, to=1-2]
	\arrow["{P}"', from=1-1, to=2-1]
	\arrow["{\pi_2}", from=1-2, to=1-3]
	\arrow["{\phi_1}"', from=1-2, to=2-2]
	\arrow["{\phi_2}"', from=1-3, to=2-3]
	\arrow["{\chi_T}"', from=2-1, to=2-2]
	\arrow["\rho"', from=2-2, to=2-3]
    \arrow[from=1-1, to=2-2, phantom, "\circlearrowright" {anchor=center, scale=1.5, rotate=90}]
    \arrow[from=1-2, to=2-3, phantom, "\circlearrowright" {anchor=center, scale=1.5, rotate=90}].
\end{tikzcd}\]
This equivalence is implemented by the Poisson maps \begin{align}
\phi:=P:T^* M \rightarrow  \mathfrak{g}^*,\qquad \phi_1:=\chi_T\circ \mathrm{pr}_1:Y  \rightarrow  \mathfrak{g}//T,\qquad \phi_2:=\mathrm{id}_{\mathfrak{g}//G}.  \label{eq:mp}
\end{align}
They satisfy the commutation relations
\begin{align*}
\chi_T\circ \phi=\phi_1\circ \pi_1,\qquad \rho\circ \phi_1=\pi_2.
\end{align*}
\end{theorem}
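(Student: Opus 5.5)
The two commutation relations are checked pointwise first, and the Poisson property of the three maps in \eqref{eq:mp} is verified afterwards, since the latter carries the real content. Write a point of $T^*M$ as $(g,u)$ with $u\in\mathfrak{m}-\varepsilon W$, so that $\pi_1(g,u)=(\mathrm{Ad}(g)u,[u]_T)$. Because $\chi$ is $\mathrm{Ad}(G)$-invariant, $\chi(\mathrm{Ad}(g)u)=\chi(u)$. Hence $\pi_1$ really lands in the fiber product $Y$, and this is the only place where the defining condition of $Y$ is used.

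The relations then unwind directly. For the first,
\begin{align*}
\phi_1\circ\pi_1(g,u)=\chi_T(\mathrm{pr}_1(P(g,u),[u]_T))=\chi_T(P(g,u))=\chi_T\circ\phi(g,u).
\end{align*}
For the second, take $(X,[u]_T)\in Y$. Using $\chi=\rho\circ\chi_T$ we get
\begin{align*}
\rho\circ\phi_1(X,[u]_T)=\rho(\chi_T(X))=\chi(X)=\pi_2(X,[u]_T)=\phi_2\circ\pi_2(X,[u]_T).
\end{align*}
Finally, $\phi_2=\mathrm{id}_{\mathfrak{g}//G}$ is trivially a Poisson diffeomorphism, so the diagram commutes as required by Definition \ref{def:equidefi} (in the Poisson form of Remark \ref{rem:nonreguequiv}).

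The Poisson property of the maps is argued as follows.
\begin{itemize}
\item \emph{$\phi=P$.} This map is the moment map of the Hamiltonian $G$-action on $T^*M$, in the trivialisation with the magnetic form as in \cite{jiang2026poisson}. An equivariant moment map is a Poisson map to the Lie-Poisson space $\mathfrak{g}^*\cong\mathfrak{g}$, so $P$ is Poisson.
\item \emph{$\chi_T$.} This is Poisson by Proposition \ref{prop:smoothregularg}.
\item \emph{$\mathrm{pr}_1$.} This is Poisson by the choice of Poisson structure on $Y$. Hence $\phi_1=\chi_T\circ\mathrm{pr}_1$ is Poisson as a composite of Poisson maps.
\item \emph{$\pi_2$.} Since $\pi_2=\chi\circ\mathrm{pr}_1$ and the coordinates $C_i$ are Casimirs, the pullbacks $\pi_2^*C_i$ are central on $Y$. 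So $\pi_2$ is Poisson onto $\mathfrak{g}//G$ with its zero bracket.
\item \emph{$\pi_1$.} Its first component is $P$, which is Poisson. Its second component $[u]_T$ is the $T$-reduced coordinate, and its pullback functions are $G$-invariant functions on $T^*M$ (they are functions of the fiber variable modulo $T$). These Poisson-commute with the components of $P$ by Noether's theorem. The compatibility with the bracket on $Y$ then follows once that structure is described as the one whose bracket on $\mathrm{pr}_1^*S(\mathfrak{g})$ is Lie-Poisson and for which the $\mathfrak{m}//T$-factor is Poisson-orthogonal to it.
\end{itemize}

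Superintegrability of the top chain is obtained by transferring the dimension count of Theorem \ref{thm:superalchain}. Generically $P$ is a submersion, since $\dim T^*M=2(\dim\mathfrak{g}-r)\ge\dim\mathfrak{g}$ for $\mathfrak{g}$ semisimple. The dimensions then work out as
\begin{align*}
\dim Y=\dim\mathfrak{g}+(\dim\mathfrak{m}-r)-r=2\dim\mathfrak{g}-3r,\qquad
\dim Y+\dim\mathfrak{g}//G=2\dim\mathfrak{g}-2r=\dim T^*M .
\end{align*}
In the middle step, $\dim\mathfrak{m}//T=\dim\mathfrak{m}-r$ because generic $T$-orbits on $\mathfrak{m}$ have dimension $r$, and the fiber product over the $r$-dimensional base $\mathfrak{g}//G$ removes a further $r$. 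The submersion property holds on the regular loci of Remark \ref{rem:regularpoisub}.

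The main obstacle is making the Poisson structure on $Y$ precise and checking that $\pi_1$ is Poisson for it, since the statement only asserts that a structure with $\mathrm{pr}_1$ and $\pi_2$ Poisson exists. I would first try to realise $Y$, on the regular locus, as the reduced space $T^*M/\!\!/G$-twisted: $T^*M\to Y$ factoring $(P,\;G\text{-invariants})$. The bracket on $Y$ would then be defined by pushing forward functions that are simultaneously $P$-pullbacks or $G$-invariant, which makes $\pi_1$ Poisson by construction. Uniqueness would follow from the density of these functions in the coordinate ring.
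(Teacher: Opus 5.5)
Your argument is correct and is essentially the paper's own proof. $\pi_1$ lands in $Y$ by $\mathrm{Ad}(G)$-invariance of $\chi$, the relations $\phi_1\circ\pi_1=\chi_T\circ P$ and $\rho\circ\phi_1=\chi\circ\mathrm{pr}_1=\pi_2$ are checked pointwise exactly as there, and $P$, $\chi_T\circ\mathrm{pr}_1$, $\mathrm{id}_{\mathfrak{g}//G}$ are Poisson for the same reasons.

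Your count $\dim Y+\dim\mathfrak{g}//G=2\dim\mathfrak{g}-2r=\dim T^*M$ is a genuine addition. So is your observation that the bracket on $Y$ must be pinned down (e.g.\ by pushing forward along $\pi_1$) for $\pi_1$ to be Poisson. The paper only cites Theorem \ref{thm:superalchain} here, and that theorem concerns the bottom chain alone.

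One small caveat: generic submersivity of $P$ does not follow from $\dim T^*M\ge\dim\mathfrak{g}$. It follows from injectivity of $\mathrm{ad}(u)|_{\mathfrak{t}}$ for generic $u\in\mathfrak{m}$, and nothing in the statement needs it.
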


\begin{proof}
By Theorem \ref{thm:superalchain}, these two chains are superintegrable.   From the assumption above, we have
$\phi =P$, $\phi_1:=\chi_T\circ \mathrm{pr}_1$, and $\phi_2:=\mathrm{id}_{\mathfrak{g}//G}$. It is clear that all the morphisms defined in \eqref{eq:mp} are Poisson. Indeed, $\phi$ is a moment map, $\phi_1$ as a composition of Poisson maps, and $\phi_2 =\mathrm{id}$ is the identity map and hence Poisson. For $(g,X)\in T^* M$, \begin{align*}
\chi\bigl(P(g,X)\bigr) =\chi\bigl(\mathrm{Ad}(g)(u-\varepsilon W)\bigr) =\chi(u-\varepsilon W) = \overline\chi_T([u]_T),
\end{align*} so $\pi_1$ is well defined with $\mathrm{pr}_1\circ\pi_1=P$.  Hence, \begin{align*}
\phi_1\circ \pi_1 =(\chi_T\circ \mathrm{pr}_1)\circ \pi_1 = \chi_T\circ P =\chi_T\circ \phi,
\end{align*} gives the communication of the left part of the diagram. In $Y$, we have $\chi\circ \mathrm{pr}_1=\pi_2$ by the definition of the fiber product, while $\chi=\rho\circ \chi_T$ in $\mathfrak{g}$. Therefore, \begin{align*}
\rho\circ \phi_1 =\rho\circ \chi_T\circ \mathrm{pr}_1 = \chi\circ \mathrm{pr}_1 =\pi_2,
\end{align*} gives the communication of the right part of the diagram. Thus, the diagram is commutative. By Definition \ref{def:equidefi}, these two systems are spectrally equivalent.
\end{proof}


\subsection{The superintegrabilities chains for the case \texorpdfstring{$A\neq T$}{A≠T} }
\label{subsec:AnotequalT}
In Theorem \ref{thm:superalchain},  we take $A = T$, and form the superintegrability. In particular, we showed that the base algebra is chosen to be the Poisson center $S(\mathfrak{g})^G$. For a general closed reductive subgroup $A \subset G$, the base algebra is neither unique nor canonically equal to $S(\mathfrak{g})^G$. In Subsection \ref{subsec:AnotequalT}, we aim to determine possible finitely generated base algebras that give rise to a superintegrable Poisson inclusion chain with the fixed intermediate Poisson algebra $S(\mathfrak{g})^A$.

Let $\mathfrak{a} = \mathrm{Lie} (A)$. Recall that \begin{align*}
    d_A: = \max_{X\in g}\dim\bigl(\mathrm{Ad}(A)\cdot X\bigr).
\end{align*} By \eqref{eq:dimgitquo}, we find $ \mathrm{trdeg} S(\mathfrak{g})^A = \dim \mathfrak{g} - d_A$. Since $Z(S(\mathfrak{g})^A)$ is the Poisson center of $S(\mathfrak{g})^A$, every subalgebra of $Z(S(\mathfrak{g})^A)$ is automatically a Poisson commutative subalgebra. 

\begin{proposition}
\label{prop:superinchainA}
    Let $A \subset G$ be a reductive subgroup of $G$, and let  $  \rho_B: \mathfrak{g}//A = \mathrm{Spec} \, S(\mathfrak{g})^A \longrightarrow \mathrm{Spec} \, \mathcal{B}$  be the morphism induced by the inclusion $\mathcal{B} \subset S(\mathfrak{g})^A$. Then the chain of affine Poisson varieties \begin{align}
        \mathfrak{g} \xrightarrow{ \ \chi_A \ } \mathfrak{g}//A \xrightarrow{ \ \rho_B \ } \mathrm{Spec }\, \mathcal{B} \label{eq:superintechaingA}
    \end{align} is superintegrable in the sense of Definition \ref{def:superalge} if and only if $\mathrm{trdeg} \ \mathcal{B}  = d_A$. Consequently, the set of all possible base algebras for superintegrable chains with the fixed intermediate Poisson algebra $S(\mathfrak{g})^A$ is \begin{align*}
        \mathcal{B}_A:=\left\{B\subset Z \left(S(\mathfrak{g})^A\right)\ : \ \mathcal{B} \text{ is a finitely generated Poisson subalgebra and } \mathrm{trdeg} \ \mathcal{B} =d_A\right\}.
    \end{align*}
\end{proposition}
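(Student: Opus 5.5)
The statement is essentially a dimension count layered on Definition \ref{def:superalge}. I will read the hypothesis as: $\mathcal{B}$ is a finitely generated Poisson subalgebra contained in $Z\bigl(S(\mathfrak{g})^A\bigr)$. This is implicit in the description of $\mathcal{B}_A$, and it is exactly condition (a) of Definition \ref{def:superalge}. The plan is to check the remaining conditions one by one and show that only the dimension identity (b) imposes a constraint.

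\textbf{Step 1: the morphisms are Poisson.} I first record that both maps in \eqref{eq:superintechaingA} are Poisson morphisms of affine Poisson varieties.
\begin{itemize}
\item For $\chi_A$, this is Proposition \ref{prop:smoothregularg}.
\item For $\rho_B$, the pullback is the inclusion $\mathcal{B}\hookrightarrow S(\mathfrak{g})^A$. This is a Poisson algebra map because $\mathcal{B}$ is a Poisson subalgebra. Since $\mathcal{B}\subset Z\bigl(S(\mathfrak{g})^A\bigr)$, the bracket on $\mathcal{B}$ is identically zero, so $\mathrm{Spec}\,\mathcal{B}$ carries the zero Poisson structure.
\end{itemize}
Hence condition (a) holds automatically, and superintegrability is equivalent to condition (b) alone:
\begin{align*}
\mathrm{trdeg}\, S(\mathfrak{g})^A+\mathrm{trdeg}\,\mathcal{B}=\dim\mathfrak{g}.
\end{align*}

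\textbf{Step 2: the dimension count.} By \eqref{eq:dimgitquo}, $\mathrm{trdeg}\, S(\mathfrak{g})^A=\dim\mathfrak{g}-d_A$. Substituting this into the identity above, condition (b) becomes $\mathrm{trdeg}\,\mathcal{B}=d_A$. Every step here is an equivalence, so this gives both directions of the ``if and only if.'' The description of the set $\mathcal{B}_A$ is then a restatement: it collects all finitely generated Poisson subalgebras of the Poisson center whose transcendence degree equals $d_A$.

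\textbf{Step 3: justifying \eqref{eq:dimgitquo}.} This is the only non-formal ingredient, and I expect it to be the main obstacle. It is the equality $\mathrm{trdeg}\, S(\mathfrak{g})^A=\dim\mathfrak{g}-d_A$ for reductive $A$. If I needed to justify it rather than cite it, I would use Rosenlicht's theorem: the invariant rational functions $\mathbb{C}(\mathfrak{g})^A$ have transcendence degree $\dim\mathfrak{g}-d_A$. It then remains to show that $\mathrm{Frac}\bigl(S(\mathfrak{g})^A\bigr)$ has the same transcendence degree. This is not automatic in general, since it fails when generic orbits are not separated by polynomial invariants. For reductive $A$ acting linearly, I would handle it by one of two routes:
\begin{itemize}
\item restrict to the case where generic $A$-orbits are closed, so that the generic fibres of $\chi_A$ are single orbits of dimension $d_A$; or
\item simply invoke \eqref{eq:dimgitquo} as stated earlier in the paper.
\end{itemize}
I would adopt the second route and cite it.

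A secondary subtlety is the necessary condition $d_A\le\mathrm{trdeg}\, Z\bigl(S(\mathfrak{g})^A\bigr)$, without which $\mathcal{B}_A$ is empty. I would note that the proposition only characterises superintegrability and does not assert existence, so this condition need not be proved here.
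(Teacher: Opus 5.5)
Your proposal is correct and follows the paper's proof. Both take $\mathcal{B}\subset Z\bigl(S(\mathfrak{g})^A\bigr)$, so that condition (a) of Definition \ref{def:superalge} is automatic, and both use \eqref{eq:dimgitquo} to turn condition (b) into $\mathrm{trdeg}\,\mathcal{B}=d_A$, which gives both directions at once. Your caution in Step 3 is justified, since the paper only observes that the maximal orbit dimension is attained on an open set. Still, \eqref{eq:dimgitquo} does hold here: $\mathfrak{g}$ is an orthogonal $A$-module via $B$, so generic $A$-orbits are closed (for example by Kempf--Ness on a compact real form), which is exactly your first route.
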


\begin{proof}
We first assume that the chain $ \mathfrak{g} \xrightarrow{ \ \chi_A \ } \mathfrak{g}//A \xrightarrow{ \ \rho_B \ } \mathrm{Spec}\, \mathcal{B}$ is superintegrable. By Definition \ref{def:superalge}, we have \begin{align*}
    \mathcal{B} \subset Z(S(\mathfrak{g})^A) \quad \text{ and } \quad \dim \mathfrak{g} = \mathrm{trdeg} \, S(\mathfrak{g})^A + \mathrm{trdeg} \ \mathcal{B} .
\end{align*} Using \eqref{eq:dimgitquo}, we then have $\dim \mathfrak{g} - d_A + \mathrm{trdeg}\, \mathcal{B} = \dim \mathfrak{g}$. Hence, $\mathrm{trdeg}\, \mathcal{B} = d_A$.

On the other hand, assume that $\mathcal{B} \subset Z(S(\mathfrak{g})^A)$ is a finitely generated Poisson commutative algebra satisfying $\mathrm{trdeg} \, \mathcal{B} = d_A$. Since $\mathcal{B}$ lies in the Poisson center $Z(S(\mathfrak{g})^A)$, the condition (a) in Definition \ref{def:superalge} is automatically satisfied. Therefore, \begin{align*}
    \mathfrak{g} \xrightarrow{ \ \chi_A \ } \mathfrak{g}//A \xrightarrow{ \ \rho_B \ } \mathrm{Spec}\, \mathcal{B}
\end{align*} is superintegrable, and the description of $\mathcal{B}_A$ follows immediately.
\end{proof}
\begin{remark}
    \label{rem:polybaseA}
(i) Suppose that there exist algebraically independent generators $b_1,\ldots,b_{d_A}$. Then base algebras are polynomial algebras of the form $\mathcal{B} = \mathbb{C}[b_1,\ldots,b_{d_A}]$, and $\mathrm{trdeg} \ \mathcal{B}  = d_A$. By Proposition \ref{prop:superinchainA}, $\mathcal{B}$ is the base algebra of superintegrable chain \eqref{eq:superintechaingA}.

(ii) Then the following conditions are equivalent:
\begin{enumerate}
\item there exists a superintegrable chain of affine Poisson varieties $\mathfrak{g} \xrightarrow{\chi_A} \mathfrak{g}//A \xrightarrow{\rho_B} \mathrm{Spec}\,\mathcal{B}$;
\item there exists a finitely generated Poisson subalgebra $ \mathcal{B}\subset Z(S(\mathfrak{g})^A)$ such that $\mathrm{trdeg} \ \mathcal{B}  = d_A$; 
\item we have  $\mathrm{trdeg}\, Z(S(\mathfrak{g})^A) \geq d_A$. 
\end{enumerate}
If these conditions hold, then one may choose algebraically independent elements $b_1, \dots, b_{d_A} \in Z(S(\mathfrak{g})^A)$ and take \begin{align*}
\mathcal{B} = \mathbb{C}[b_1,\dots,b_{d_A}].
\end{align*}
In particular, the base algebra can always be chosen to be a polynomial Poisson-commutative algebra.
\end{remark}

\begin{corollary}
    \label{coro:nobasechain}
    Suppose that $\mathrm{trdeg}\, Z(S(\mathfrak{g})^A) \leq d_A$. Then, with the fixed intermediate algebra $S(\mathfrak{g})^A$, there does not exist a superintegrable chain in \eqref{eq:superintechaingA}. 
\end{corollary}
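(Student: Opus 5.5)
The plan is to argue by contraposition through Proposition \ref{prop:superinchainA}. Suppose, for a contradiction, that a superintegrable chain $\mathfrak{g}\xrightarrow{\chi_A}\mathfrak{g}//A\xrightarrow{\rho_B}\mathrm{Spec}\,\mathcal{B}$ exists with the intermediate algebra fixed to be $S(\mathfrak{g})^A$. Condition (a) of Definition \ref{def:superalge} gives $\mathcal{B}\subset Z(S(\mathfrak{g})^A)$. Proposition \ref{prop:superinchainA}, which combines condition (b) with the dimension formula \eqref{eq:dimgitquo}, then gives $\mathrm{trdeg}\,\mathcal{B}=d_A$.

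The second step is the elementary monotonicity of transcendence degree. Both algebras are integral domains, as subrings of $S(\mathfrak{g})$, so the inclusion $\mathcal{B}\subset Z(S(\mathfrak{g})^A)$ induces an inclusion of fraction fields. Any transcendence basis of $\mathrm{Frac}\,\mathcal{B}$ therefore remains algebraically independent in $\mathrm{Frac}\,Z(S(\mathfrak{g})^A)$, which yields
\begin{align*}
d_A=\mathrm{trdeg}\,\mathcal{B}\leq \mathrm{trdeg}\,Z\bigl(S(\mathfrak{g})^A\bigr).
\end{align*}
If $\mathrm{trdeg}\,Z(S(\mathfrak{g})^A)<d_A$, this is an immediate contradiction, and the corollary follows in that range.

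The main obstacle is the boundary case $\mathrm{trdeg}\,Z(S(\mathfrak{g})^A)=d_A$, and I do not expect to close it. Monotonicity gives no contradiction there. Moreover, Remark \ref{rem:polybaseA}(ii), item (3), states that $\mathrm{trdeg}\,Z(S(\mathfrak{g})^A)\geq d_A$ is \emph{equivalent} to the existence of a chain: one takes $\mathcal{B}=\mathbb{C}[b_1,\dots,b_{d_A}]$ for a transcendence basis $b_1,\dots,b_{d_A}$ of the centre. This is consistent with the monotonicity argument above. The first thing I would try is to check whether Definition \ref{def:superalge} carries an implicit requirement that the centre be strictly larger than the base, which would exclude $\mathcal{B}$ of full transcendence degree in $Z(S(\mathfrak{g})^A)$. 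I do not see such a requirement in the definition as stated.

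Without such an extra hypothesis, the argument I can actually give proves the corollary only for the strict inequality $\mathrm{trdeg}\,Z(S(\mathfrak{g})^A)<d_A$. In the equality case, the paper's own Remark \ref{rem:polybaseA}(ii) asserts that a chain exists, which is the opposite of the corollary's conclusion. So the $\leq$ version appears to conflict with that remark unless the definition is strengthened. I would therefore present the proof as: first establish the strict case by the two steps above; then state the boundary case explicitly as the point where the corollary, as worded, is not supported by the preceding results.
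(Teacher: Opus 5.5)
Your argument is correct, and it is the reasoning the paper relies on. The corollary is stated without proof, as an immediate consequence of Proposition~\ref{prop:superinchainA} and Remark~\ref{rem:polybaseA}(ii). The only mechanism available is the one you use: $\mathcal{B}\subset Z(S(\mathfrak{g})^A)$ together with monotonicity of transcendence degree, giving $d_A=\mathrm{trdeg}\,\mathcal{B}\leq \mathrm{trdeg}\,Z(S(\mathfrak{g})^A)$. This yields only the strict version, and your diagnosis of the boundary case is right: the hypothesis should read $\mathrm{trdeg}\,Z(S(\mathfrak{g})^A)<d_A$. No hidden requirement in the definitions rescues the non-strict inequality. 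If anything, Definition~\ref{def:superal} takes the base to be the \emph{whole} Poisson centre, which makes equality precisely the superintegrable case.

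You can upgrade ``not supported'' to ``false'', because the equality case actually occurs.
\begin{itemize}
\item Take $\mathfrak{g}=\mathfrak{sl}_2\oplus\mathfrak{sl}_2$ and write points as pairs $(x,y)$ with $x,y\in\mathfrak{sl}_2\cong\mathfrak{so}_3\cong\mathbb{C}^3$. Let $A\cong SL_2$ be the diagonal subgroup, acting through $SO_3(\mathbb{C})$ simultaneously on both factors.
\item Classical invariant theory gives $S(\mathfrak{g})^A=\mathbb{C}[x\cdot x,\,y\cdot y,\,x\cdot y]$. The generic stabiliser is finite, so $d_A=3$.
\item Since $x\cdot x$ and $y\cdot y$ are Casimirs of $\mathfrak{g}$, the algebra $S(\mathfrak{g})^A$ is Poisson-commutative. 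Hence $\mathrm{trdeg}\,Z(S(\mathfrak{g})^A)=3=d_A$.
\item Yet $\mathcal{B}=S(\mathfrak{g})^A$ satisfies (a) and (b) of Definition~\ref{def:superalge}, with $3+3=6=\dim\mathfrak{g}$, so a superintegrable chain exists.
\end{itemize}
If you want an instance where the intermediate algebra is not commutative, use the Elliott chain $\mathfrak{sl}_3\supset\mathfrak{so}_3$. There $d_A=3$ and $\mathrm{trdeg}\,S(\mathfrak{g})^A=5$. The generic Poisson rank of $\mathfrak{g}//A$ is $2$, so the centre has transcendence degree exactly $3$. Then $\mathbb{C}[C_2,C_3,L^2]$, generated by the two Casimirs of $\mathfrak{sl}_3$ and the Casimir of $\mathfrak{so}_3$, is a base algebra with $5+3=8$.
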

\begin{remark}
    Theorem \ref{thm:superalchain} shows that the torus case is special: when $A = T$, we may take \begin{align*}
\mathcal{B}  =  S(\mathfrak{g})^G\subset Z \left(S(\mathfrak{g})^T\right),
\end{align*} and then $\mathrm{trdeg} \ \mathcal{B}  = \mathrm{rank} \, \mathfrak{g}  =  d_T$. For a general subgroup $A$, however, no such canonical choice is available unless we first prove that $Z(S(\mathfrak{g})^A)$ has a transcendence degree of at least $d_A$.
\end{remark}

Now, let us consider another special case of $A$. In this case, we choose $A$ to be the normaliser of $T$. In the following proposition, we will show that the base space in this case is still the Poisson center.

\begin{proposition}\label{prop:normalizerchain}
Let $T\subset G$ be a fixed maximal torus, and let \begin{align*}
    N_G(T):=\{g\in G\mid gTg^{-1}=T\}
\end{align*} denote its normalizer. Set $A:=N_G(T)$ and let $W:=N_G(T)/T$ be the corresponding Weyl group. Then \begin{align*}
S(\mathfrak{g})^A = \left(S(\mathfrak{g})^T\right)^W.
\end{align*} Moreover, \begin{align*}
\mathrm{trdeg}\,S(\mathfrak{g})^A = \mathrm{trdeg}\,S(\mathfrak{g})^T = \dim\mathfrak{g}-\dim\mathfrak{t}, \qquad S(\mathfrak{g})^G\subset Z \left(S(\mathfrak{g})^A\right).
\end{align*} Hence, the chain \begin{align*}
\mathfrak{g} \xrightarrow{\chi_A} \mathfrak{g}//A \xrightarrow{\rho_A} \mathfrak{g}//G
\end{align*} is superintegrable.
\end{proposition}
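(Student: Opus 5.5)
The plan is to establish, in order, the identification $S(\mathfrak{g})^A=(S(\mathfrak{g})^T)^W$, then the transcendence degree, then the centrality of $S(\mathfrak{g})^G$. The final superintegrability claim will follow from Definition \ref{def:superalge}, or equivalently from Proposition \ref{prop:superinchainA}.

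\textbf{Identification of the invariant ring.} Since $T$ is normal in $A=N_G(T)$, the group $A$ preserves $S(\mathfrak{g})^T$. Indeed, for $n\in A$, $t\in T$ and $f\in S(\mathfrak{g})^T$ we have $t\cdot(n\cdot f)=n\cdot\bigl((n^{-1}tn)\cdot f\bigr)=n\cdot f$. Hence $T$ acts trivially on $S(\mathfrak{g})^T$, and the action of $A$ on it factors through the finite group $W=A/T$. Taking $A$-invariants in stages then gives
\begin{align*}
S(\mathfrak{g})^A=\bigl(S(\mathfrak{g})^T\bigr)^{A/T}=\bigl(S(\mathfrak{g})^T\bigr)^W .
\end{align*}
Because $A$ is not connected, this is consistent with Remark \ref{rmk:property}(i), where $A^\circ=T$.

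\textbf{Transcendence degree.} Since $W$ is finite, $S(\mathfrak{g})^T$ is integral over $(S(\mathfrak{g})^T)^W$: each $f$ is a root of the monic polynomial $\prod_{w\in W}(x-w\cdot f)$, whose coefficients are $W$-invariant. Integral extensions preserve transcendence degree. Combined with Theorem \ref{thm:superalchain}(ii), this gives
\begin{align*}
\mathrm{trdeg}\,S(\mathfrak{g})^A=\mathrm{trdeg}\,S(\mathfrak{g})^T=\dim\mathfrak{g}-r .
\end{align*}
As a cross-check, the generic $A$-orbit is a finite union of $T$-orbits, so $d_A=d_T=r$, which matches \eqref{eq:dimgitquo}. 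Finite generation of $S(\mathfrak{g})^A$ holds because $A$ is reductive, its identity component being a torus; alternatively it follows from Noether's theorem for finite groups applied to the finitely generated algebra $S(\mathfrak{g})^T$.

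\textbf{Centrality and conclusion.} We have $G\supset A$, so $S(\mathfrak{g})^G\subset S(\mathfrak{g})^A$. By the Kostant identification $S(\mathfrak{g})^G=\mathcal{Z}(S(\mathfrak{g}))$, every element of $S(\mathfrak{g})^G$ Poisson-commutes with all of $S(\mathfrak{g})$, and in particular with $S(\mathfrak{g})^A$. This gives $S(\mathfrak{g})^G\subset Z(S(\mathfrak{g})^A)$, by the same argument as in Theorem \ref{thm:superalchain}(i). Chevalley's theorem \eqref{eq:crt} gives $\mathrm{trdeg}\,S(\mathfrak{g})^G=r$, so the dimension identity $(\dim\mathfrak{g}-r)+r=\dim\mathfrak{g}$ holds. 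Definition \ref{def:superalge} then yields superintegrability, and $\rho_A$ is the morphism induced by $S(\mathfrak{g})^G\subset S(\mathfrak{g})^A$.

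The only step I regard as a real obstacle is the first: checking carefully that the $A$-action on $S(\mathfrak{g})^T$ descends to $W$. This rests entirely on the normality of $T$ in $N_G(T)$. Everything after that is formal.
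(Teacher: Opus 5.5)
Your proposal is correct and follows the paper's proof essentially step for step: you take invariants in stages to get $S(\mathfrak{g})^A=(S(\mathfrak{g})^T)^W$, use integrality over the finite group $W$ to preserve transcendence degree, use the Poisson-center property of $S(\mathfrak{g})^G$ for centrality, and use the dimension count from Theorem \ref{thm:superalchain}. The only minor difference is that you derive the descent of the $A$-action on $S(\mathfrak{g})^T$ to $W$ directly from the normality of $T$ in $N_G(T)$, whereas the paper cites Remark \ref{rmk:property}(i) with $A^\circ=T$, so your version is slightly more self-contained.
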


\begin{proof}
 Since $A = N_G(T)$ and the Weyl group $W = N_G(T)/T$, we have the connected component $A^\circ = T$ and $A/A^\circ \cong W$. Thus Remark \ref{rmk:property} (i) gives \begin{align*}
S(\mathfrak{g})^A = \bigl(S(\mathfrak{g})^{A^\circ}\bigr)^{A/A^\circ} = \left(S(\mathfrak{g})^T\right)^W.
\end{align*} As $W$ is finite, the extension \begin{align*}
\left(S(\mathfrak{g})^T\right)^W \subset S(\mathfrak{g})^T
\end{align*} is integral. Hence \begin{align*}
\mathrm{trdeg} \, S(\mathfrak{g})^A = \mathrm{trdeg}\, S(\mathfrak{g})^T.
\end{align*} 

On the other hand, by Theorem \ref{thm:superalchain}, we deduce $S(\mathfrak{g})^G = Z \left(S(\mathfrak{g})\right)$, whence  $S(\mathfrak{g})^G\subset Z \left(S(\mathfrak{g})^A\right)$. Checking the dimension leads to \begin{align*}
\mathrm{trdeg}S(\mathfrak{g})^A+\mathrm{trdeg}S(\mathfrak{g})^G = \mathrm{trdeg}S(\mathfrak{g})^T+\mathrm{trdeg}S(\mathfrak{g})^G = \dim \mathfrak{g}.
\end{align*} 
Therefore,  \begin{align*}
\mathfrak{g} \xrightarrow{\chi_A} \mathfrak{g}//A \xrightarrow{\rho_A} \mathfrak{g}//G
\end{align*} is superintegrable in the sense of Definition \ref{def:superalge}.
\end{proof}

Finally, to conclude this subsection, consider $A$ to be an Abelian reductive connected subgroup of $G$. Under this assumption, with the intermediate Poisson algebra $S(\mathfrak{g})^A$, we will consider the superintegrable system chain. 

\begin{theorem}
\label{thm:abelianredsuper}
Let $G$ be a connected complex semisimple Lie group with Lie algebra $\mathfrak{g}$. Let $A\subset G$ be a 
connected Abelian reductive subgroup, and let its Lie algebra be $\mathfrak{a}  = \mathrm{Lie}(A)$ with $s:=\dim \mathfrak{a}$.  
For every $H\in\mathfrak{a}$ and $X \in \mathfrak{g}$, define the linear polynomial $\mu_H\in S(\mathfrak{g})$ by $\mu_H(X) := B(X,H)$. Let \begin{align*}
\mu_A:\mathfrak{g}\longrightarrow \mathfrak{a}^*
\end{align*} be the linear map $\mu_A(X)(H):= B(X,H)$. For a basis $H_1,\dots,H_s$ of $\mathfrak{a}$, define \begin{align*}
\mathcal{B}_A :=   \mu_A^*\mathbb{C}[\mathfrak{a}^*] = \mathbb{C}[\mu_{H_1},\dots,\mu_{H_s}] \subset S(\mathfrak{g}).
\end{align*} Then $\mathcal{B}_A\subset Z \left(S(\mathfrak{g})^A\right)$, where $Z(S(\mathfrak{g})^A)$ denotes the Poisson center of the Poisson algebra $S(\mathfrak{g})^A$. 
Consequently, the inclusion chain of Poisson algebras \begin{align*}
\mathcal{B}_A\subset S(\mathfrak{g})^A\subset S(\mathfrak{g})
\end{align*}
defines a superintegrable system in the sense of Definition \ref{def:superal}. Equivalently, the induced chain of affine Poisson varieties \begin{align*}
\mathfrak{g} \xrightarrow{\ \chi_A\ } \mathfrak{g}//A \xrightarrow{\ \overline{\mu}_A\ } \mathfrak{a}^* \cong \mathfrak{a}
\end{align*} is superintegrable, where $\chi_A:\mathfrak{g}\longrightarrow \mathfrak{g}//A:=\mathrm{Spec} \, S(\mathfrak{g})^A$ is the affine quotient morphism, and $\overline{\mu}_A$ is the unique morphism satisfying $\mu_A = \overline{\mu}_A \circ \chi_A$. In coordinates, for a basis $H_1,\dots,H_s$ of $\mathfrak{a}$, we have \begin{align*}
\overline{\mu}_A([X]_A) = \bigl( B(X,H_1),\dots,B(X,H_s)
\bigr).
\end{align*}
\end{theorem}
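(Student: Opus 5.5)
The plan is to verify the two conditions of Definition \ref{def:superalge} for $\mathcal{B}_A\subset S(\mathfrak{g})^A\subset S(\mathfrak{g})$ and then apply Proposition \ref{prop:superinchainA}. That proposition reduces everything to two facts: $\mathcal{B}_A\subset Z\bigl(S(\mathfrak{g})^A\bigr)$, and $\mathrm{trdeg}\,\mathcal{B}_A=d_A$. I will show both quantities in the second fact equal $s$.

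\emph{Step 1: invariance.} First I check that each $\mu_H$ lies in $S(\mathfrak{g})^A$. Since $A$ is abelian, $\mathrm{Ad}(a)H=H$ for all $a\in A$ and $H\in\mathfrak{a}$. By $\mathrm{Ad}(G)$-invariance of $B$,
\begin{align*}
\mu_H(\mathrm{Ad}(a)X)=B(\mathrm{Ad}(a)X,H)=B(X,\mathrm{Ad}(a^{-1})H)=\mu_H(X).
\end{align*}
Hence $\mathcal{B}_A\subset S(\mathfrak{g})^A$. It follows that $\overline{\mu}_A$ exists: the coordinates of $\mu_A$ are invariant polynomials, so the inclusion $\mathcal{B}_A\hookrightarrow S(\mathfrak{g})^A$ induces $\overline{\mu}_A$ with $\mu_A=\overline{\mu}_A\circ\chi_A$. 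Its coordinate formula is then immediate.

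\emph{Step 2: centrality.} Under the identification $\mathfrak{g}\cong\mathfrak{g}^*$ via $B$, the linear function $\mu_H$ corresponds to the element $H\in\mathfrak{g}\subset S(\mathfrak{g})$. By \eqref{eq:poi}, the bracket $\{\mu_H,f\}$ is exactly the infinitesimal coadjoint action $\tilde X_H(f)$ of $H$, up to sign. Concretely, $\{\mu_H,f\}(X)=\pm\, df_X([H,X])$. By Proposition \ref{prop:equ}, this vanishes for every $f\in S(\mathfrak{g})^A$ and $H\in\mathfrak{a}$. By the Leibniz rule, all of $\mathcal{B}_A$ Poisson-commutes with $S(\mathfrak{g})^A$. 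Together with Step 1, this gives $\mathcal{B}_A\subset Z\bigl(S(\mathfrak{g})^A\bigr)$.

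\emph{Step 3: transcendence degrees.} Since $B$ is nondegenerate and $H_1,\dots,H_s$ are linearly independent, the linear forms $\mu_{H_1},\dots,\mu_{H_s}$ are linearly independent. They are therefore part of a linear coordinate system on $\mathfrak{g}$, hence algebraically independent, and $\mathrm{trdeg}\,\mathcal{B}_A=s$.

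\emph{Step 4: generic orbit dimension.} It remains to show $d_A=s$; then \eqref{eq:dimgitquo} gives $\mathrm{trdeg}\,S(\mathfrak{g})^A=\dim\mathfrak{g}-s$. I expect this to be the main obstacle. The plan is as follows.
\begin{itemize}
\item A connected abelian reductive complex group is a torus, so $A$ is contained in some maximal torus $T$, and $\mathfrak{a}\subset\mathfrak{t}$.
\item The orbit $\mathrm{Ad}(A)\cdot X$ has dimension $s-\dim\{H\in\mathfrak{a}:[H,X]=0\}$, so it suffices to find one $X$ whose infinitesimal stabiliser in $\mathfrak{a}$ is zero.
\item Take $X=\sum_{\alpha\in\Phi}e_\alpha$ with every $e_\alpha\in\mathfrak{g}_\alpha$ nonzero. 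Then $[H,X]=\sum_\alpha\alpha(H)e_\alpha$, which vanishes only if $\alpha(H)=0$ for all roots $\alpha$.
\item Because $\mathfrak{g}$ is semisimple, the roots span $\mathfrak{t}^*$ (equivalently, the center of $\mathfrak{g}$ is trivial). Hence $\alpha(H)=0$ for all $\alpha$ forces $H=0$.
\end{itemize}
So the generic stabiliser in $\mathfrak{a}$ is trivial, $d_A=s$ (the maximum is attained on a Zariski open set, as noted before \eqref{eq:dimgitquo}), and the dimension identity $(\dim\mathfrak{g}-s)+s=\dim\mathfrak{g}$ holds. Proposition \ref{prop:superinchainA} then yields the superintegrable chain $\mathfrak{g}\to\mathfrak{g}//A\to\mathfrak{a}^*$, with $\mathrm{Spec}\,\mathcal{B}_A\cong\mathfrak{a}^*$ since $\mathcal{B}_A$ is a polynomial ring in $s$ variables.
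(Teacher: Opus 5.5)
Your proposal is correct and follows the paper's proof essentially step for step. Invariance of $\mu_H$ comes from $\mathrm{Ad}(a)H=H$ and the $\mathrm{Ad}(G)$-invariance of $B$; centrality comes from $\{\mu_H,f\}(X)=-df_X([H,X])$; the linearly independent forms $\mu_{H_i}$ are algebraically independent; and $d_A=s$ follows from a generic element with all nonzero weight components nonzero, fed into \eqref{eq:dimgitquo}. The differences are cosmetic: you embed $A$ in a maximal torus and use roots where the paper uses the $A$-weight decomposition directly, you rightly skip the paper's unnecessary $A/A^\circ$ integrality step since $A$ is connected, and you conclude via Proposition \ref{prop:superinchainA} rather than checking Definition \ref{def:superalge} by hand.
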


\begin{proof}
Since $A$ is reductive, the invariant algebra $S(\mathfrak{g})^A$ is finitely generated. Hence, $\mathfrak{g}//A:=\mathrm{Spec}S(\mathfrak{g})^A$ is an affine variety, and the inclusion $S(\mathfrak{g})^A\hookrightarrow S(\mathfrak{g})$ induces the quotient morphism $\chi_A:\mathfrak{g}\longrightarrow \mathfrak{g}//A$. 

We first prove that $\mu_H$ is $A$-invariant for every $H\in\mathfrak{a}$. Let $a \in A$, $H\in\mathfrak{a}$, and $X \in\mathfrak{g}$. Since $A$ is Abelian, the adjoint action of $A$ on its Lie algebra $\mathfrak{a}$ is trivial, and therefore $\mathrm{Ad}(a) H= H$. Using the $\mathrm{Ad}(G)$-invariance of $B$, we obtain \begin{align*}
\mu_H(\mathrm{Ad}(a)X) = B(\mathrm{Ad}(a)X,H) = B(X,\mathrm{Ad}(a^{-1})H) = B(X,H) = \mu_H(X).
\end{align*} Thus, $\mu_H \in S(\mathfrak{g})^A$ for every $H\in\mathfrak{a}$, and hence $\mathcal{B}_A\subset S(\mathfrak{g})^A$. 

We next prove that $\mathcal{B}_A$ lies in the Poisson center of $S(\mathfrak{g})^A$. For $f\in S(\mathfrak{g})$ and $V \in \mathfrak{g}$, let $\nabla f(X)\in\mathfrak{g}$ be defined by  $df_X(V)=B(\nabla f(X),V)$. Since $\nabla\mu_H=H $, under the identification $\mathfrak{g}\simeq\mathfrak{g}^*$, by the definition of the Lie-Poisson bracket, we have \begin{align}
\{\mu_H,f\}(X) = B \left(X,[H,\nabla f(X)]\right). \label{eq:poissonbracettrivia}
\end{align} By the $\mathrm{Ad}(G)$-invariance and symmetry of $B$, \begin{align*}
B \left(X,[H,\nabla f(X)]\right) = B([X,H],\nabla f(X)) = -B(\nabla f(X),[H,X]) = -df_X([H,X]).
\end{align*}
Now assume $f\in S(\mathfrak{g})^A$. Then $f$ is $A$-invariant, hence also invariant under the identity component $A^\circ$ of $A$. Since $H\in\mathfrak{a}=\mathrm{Lie}(A^\circ)$, the one-parameter subgroup $\exp(tH)\subset A^\circ$ acts trivially on $f$. 
Hence, \begin{align*}
f(\mathrm{Ad}(\exp(tH))X)=f(X)
\end{align*} for all $t$, and differentiating at $t=0$ gives  $df_X([H,X])=0$. Therefore, back to \eqref{eq:poissonbracettrivia}, we have  $\{\mu_H,f\}(X) = 0$  for all $X\in\mathfrak{g}$. Since $f\in S(\mathfrak{g})^A$ was arbitrary, we obtain $\{\mu_H,f\}=0$  for all $f\in S(\mathfrak{g})^A$. Thus \begin{align*}
\mu_H\in Z \left(S(\mathfrak{g})^A\right)
\end{align*} for every $H\in\mathfrak{a}$. Since $\mathcal{B}_A$ is generated by the functions $\mu_{H_1},\dots,\mu_{H_s}$, it follows that $\mathcal{B}_A\subset Z \left(S(\mathfrak{g})^A\right)$. 

We now compute the transcendence degrees. The map $\mu_A:\mathfrak{g}\longrightarrow \mathfrak{a}^*$ is surjective. Indeed, the dual map \begin{align*}
\mu_A^*:\mathfrak{a}\longrightarrow \mathfrak{g}^*, \qquad H\longmapsto B(\cdot,H),
\end{align*} is injective because $B$ is nondegenerate. Therefore the linear functions \begin{align*}
\mu_{H_1},\dots,\mu_{H_s}
\end{align*} are algebraically independent, and hence $\mathcal{B}_A\simeq \mathbb{C}[\mathfrak{a}^*]$ is a polynomial algebra in $s$ variables. Consequently, $\mathrm{trdeg}\, \mathcal{B}_A = s$. 

It remains to compute $\mathrm{trdeg \,}S(\mathfrak{g})^A$. Let $A^\circ$ be the identity component of $A$ and set $\Gamma = A/A^\circ$. Then $\Gamma$ is finite. Let $R:= S(\mathfrak{g})^{A^\circ}$. The quotient group $\Gamma$ acts on $R$ by algebra automorphisms and we have \begin{align*}
S(\mathfrak{g})^A = \left(S(\mathfrak{g})^{A^\circ}\right)^{A/A^\circ}=R^{\Gamma}.
\end{align*} We claim that the inclusion $R^{\Gamma}\subset R$ is integral (See, for instance, \cite[Section 5, Proposition 5.1]{AtiyahMacdonald}). Let $f\in R$. Define \begin{align*}
    P_f(T):=\prod_{\gamma\in\Gamma}\bigl(T-\gamma\cdot f\bigr)\in R[T].
\end{align*} Then $P_f(T)$ is monic and clearly $P_f(f)=0$. We now show that $P_f(T)\in R^{\Gamma}[T]$. Indeed, for any $\delta\in\Gamma$ we have \begin{align*}
    \delta\cdot P_f(T)=\prod_{\gamma\in\Gamma}\bigl(T-\delta\cdot(\gamma\cdot f)\bigr)=\prod_{\gamma\in\Gamma}\bigl(T-(\delta\gamma)\cdot f\bigr)=P_f(T),
\end{align*} since $\gamma \mapsto \delta\gamma$ is a permutation of the finite set $\Gamma$. Hence all coefficients of $P_f(T)$ are $\Gamma$-invariant, i.e., $P_f(T)\in R^{\Gamma}[T] = S(\mathfrak{g})^A[T]$. Therefore $f$ is integral over $R^{\Gamma}$, and consequently the extension \begin{align*}
S(\mathfrak{g})^A = \left(S(\mathfrak{g})^{A^\circ}\right)^{A/A^\circ} \subset S(\mathfrak{g})^{A^\circ}
\end{align*} is integral. Hence, taking invariants under the finite group $A/A^\circ$ does not change the transcendence degree, and therefore \begin{align*}
\mathrm{trdeg}S(\mathfrak{g})^A = \mathrm{trdeg}S(\mathfrak{g})^{A^\circ}.
\end{align*} 

To find the transcendence degree of $S(\mathfrak{g})^{A^\circ}$, by the dimension formula of an affine quotient, we now compute the dimension of $A^\circ$-orbit, which is $\dim A^\circ - \dim {A^\circ}_X$. In other words, we now show the stabilizer ${A^\circ}_X$ is finite. Since the torus $A^\circ$ acts on $\mathfrak{g}$, so there is a weight-space decomposition  $\mathfrak{g} = \bigoplus_{\lambda\in\Lambda}\mathfrak{g}_\lambda$, where $\Lambda\subset\mathfrak{a}^*$ is the finite set of weights. The infinitesimal kernel of the $A^\circ$-action on $\mathfrak{g}$ is \begin{align*}
\bigcap_{\lambda \in \Lambda}\ker\lambda.
\end{align*} If $H\in\mathfrak{a}$ lies in this intersection, then $[H,Y] =0$ for all $Y\in\mathfrak{g}$. Hence, $H\in Z(\mathfrak{g})$. Since $\mathfrak{g}$ is semisimple, $Z(\mathfrak{g})=0$. Therefore, the infinitesimal $A^\circ$-action on $\mathfrak{g}$ is faithful.

Choose $X\in\mathfrak{g}$ such that  \begin{align}
X =  \sum_{\lambda\in\Lambda}X_\lambda,\qquad X_\lambda\in\mathfrak{g}_\lambda, \label{eq:summationX}
\end{align} we have $X_\lambda\neq 0$ for every nonzero weight $\lambda\in\Lambda\setminus\{0\}$. Such elements form a nonempty Zariski open subset of $\mathfrak{g}$. If $H\in\mathfrak{a}$ stabilizes such an $X$, then $[H,X] = 0$. Using the expression \eqref{eq:summationX}, we  find $\lambda(H)X_\lambda = 0$ for every nonzero weight $\lambda$. Since $X_\lambda \neq 0$, it follows that $\lambda(H) = 0$. Since the nonzero weights span $\mathfrak{a}^*$, we conclude that $H=0$. Thus, the generic stabilizer of the $A^\circ$-action is finite, and the generic $A$-orbit has dimension $d_A = s$. By the dimension formula for affine quotients \eqref{eq:gitquotient}, \begin{align*}
\mathrm{trdeg} \, S(\mathfrak{g})^A = \dim\mathfrak{g}-d_A = \dim\mathfrak{g}-s.
\end{align*}

Combining the two transcendence-degree identities gives \begin{align*}
\mathrm{trdeg} \, S(\mathfrak{g})^A + \mathrm{trdeg} \ \mathcal{B} _A = (\dim \mathfrak{g}-s) + s = \dim\mathfrak{g} = \mathrm{trdeg} \, S(\mathfrak{g}).
\end{align*} Together with $  \mathcal{B}_A\subset Z \left(S(\mathfrak{g})^A\right)$, this proves that the inclusion relations \begin{align*}
\mathcal{B}_A\subset S(\mathfrak{g})^A\subset S(\mathfrak{g})
\end{align*} is a superintegrable inclusion chain in the sense of Definition \ref{def:superalge}. Equivalently, the induced chain \begin{align*}
\mathfrak{g} \xrightarrow{\ \chi_A\ } \mathfrak{g}//A
\xrightarrow{\ \overline{\mu}_A\ } \mathfrak{a}^*
\end{align*} is a superintegrable chain of affine Poisson varieties. The proof is complete.
\end{proof}

\begin{remark} 
\label{rem:abvsmaximaltorus}
When $A=T$ is a maximal torus, Theorem \ref{thm:abelianredsuper} gives the central base algebra \begin{align*}
B_T = \mathbb{C}[\mu_{H_1},\dots,\mu_{H_r}] \simeq \mathbb{C}[\mathfrak{t}^*], \qquad r=\dim T=\mathrm{rank}\mathfrak{g}.
\end{align*}  Thus, we obtain a superintegrable chain  $\mathfrak{g} \longrightarrow \mathfrak{g}//T \longrightarrow \mathfrak{t}^*$. Moreover, in this case, we may also consider the chain from Theorem \ref{thm:superalchain}, \begin{align*}
\mathfrak{g} \longrightarrow \mathfrak{g}//T \longrightarrow \mathfrak{g}//G,
\end{align*} whose base algebra is $S(\mathfrak{g})^G$. This alternative is still compatible with the dimension characteristic as \begin{align*}
\mathrm{trdeg} \, S(\mathfrak{g})^G = \mathrm{rank}\, \mathfrak{g} = \dim T.
\end{align*}

For a proper Abelian reductive subgroup $A\subsetneq T$, however, the chain  $\mathfrak{g} \longrightarrow \mathfrak{g}//A \longrightarrow \mathfrak{g}//G$ is generally not superintegrable. Indeed,  $\mathrm{trdeg} \, S(\mathfrak{g})^G = \mathrm{rank}\, \mathfrak{g}$. By the dimension characteristic, with the intermediate algebra $S(\mathfrak{g})^A$, the base algebra needs to have a transcendence degree of $d_A = \dim A= \dim\mathfrak{a}$. Thus, for a general Abelian reductive subgroup $A$, the natural base algebra is not usually $S(\mathfrak{g})^G$. Instead, we may take \begin{align*}
\mathcal{B}_A = \mu_A^*\mathbb{C}[\mathfrak{a}^*] = \mathbb{C}[\mu_{H_1},\dots,\mu_{H_s}] \subset Z \left(S(\mathfrak{g})^A\right),
\end{align*}  which has the required transcendence degree $\mathrm{trdeg} \ \mathcal{B} _A = s = d_A$. 
\end{remark}

\subsection{Mishchenko-Fomenko subalgebras inside \texorpdfstring{$S(\mathfrak{g})^A$}{S(\mathfrak{g})A}}
\label{subsec:mfinsideA}

Throughout this Subsection \ref{subsec:mfinsideA}, let $G$ be connected and semisimple over $\mathbb{C}$, let $A \subset G$ be a closed reductive subgroup, and let $B (\cdot,\cdot)$ be the fixed $\mathrm{Ad}(G)$-invariant nondegenerate symmetric bilinear form used to identify $\mathfrak{g}$ with $\mathfrak{g}^*$. In this subsection, we will consider the relations between the  Mischenko-Fomenko algebra and the Poisson center of the centraliser $S(\mathfrak{g})^A$. We would like to know under what condition the Mischenko-Fomenko algebra can be used as the commutative subalgebra satisfying the superintegrable Poisson projection chain. We define the Lie algebra centraliser by  \begin{align*}
\mathfrak{g}^A := \{X \in \mathfrak{g} : \mathrm{Ad}(a)X = X \text{ for all } a \in A\},
\end{align*}
We denote by $\mathfrak{g}_{\mathrm{reg}}$ the set of regular elements of $\mathfrak{g}$.  Here \begin{align*}
    \mathfrak{g}_{\mathrm{reg}} := \left\{X \in \mathfrak{g} : \dim \mathfrak{g}_X=\mathrm{rank} \,\mathfrak{g}\right\}, \quad \text{with } \mathfrak{g}_X = \left\{y\in \mathfrak{g}: [x,y] = 0\right\}
\end{align*}Moreover, we write $Z_P(S(\mathfrak{g})^A)$ for the Poisson center of $S(\mathfrak{g})^A$. Define \begin{align*}
\mathcal{C}_A = S(\mathfrak{g})^A, \qquad d_A = \dim \mathfrak{g} - \mathrm{trdeg} \,\mathcal{C}_A.
\end{align*}
Recall that a base algebra for the superintegrable chain through $\mathcal{C}_A$ is a finitely generated subalgebra $\mathcal{B} \subset Z_P(\mathcal{C}_A)$ such that
\begin{align*}
\mathrm{trdeg} \, \mathcal{B} = d_A.
\end{align*}

We now recall a basic fact about the Mishchenko-Fomenko subalgebra defined in \cite{MF78}. For a regular shift $\mu \in \mathfrak{g}^*$, the \textit{argument shift} construction produces a Poisson-commutative algebra of maximal possible size, and in fact, it is freely generated by the expected number of elements. Let $P_1,\dots,P_\ell$ be homogeneous, algebraically independent generators of $S(\mathfrak{g})^G$, where $\ell = \mathrm{rank} \, \mathfrak{g}$. For $\mu \in \mathfrak{g}^*$ and $f \in S(\mathfrak{g})$, define \begin{align*}
D_\mu^j(f)(X) := \left.\dfrac{d^j}{dt^j}\right\vert_{t= 0} f(X + t\mu).
\end{align*} 
The corresponding Mishchenko-Fomenko subalgebra is
\begin{align*}
\mathcal{F}_\mu := \mathbb{C}\big[D_\mu^j(P_i) : 1 \leq i \leq \ell,\ 0 \leq j \leq \deg P_i - 1\big].
\end{align*}
Equivalently, $\mathcal{F}_\mu$ is generated by the coefficients of all polynomials $P(X+t\mu)$ with $P \in S(\mathfrak{g})^G$. 

\begin{theorem}
\label{thm:mfalgebra} \cite{MF78}
If $\mu \in \mathfrak{g}_{\mathrm{reg}}$, then $\mathcal{F}_\mu$ is a polynomial Poisson-commutative algebra on \begin{align*}
b(\mathfrak{g}) := \frac{1}{2}\big(\dim \mathfrak{g} + \mathrm{rank}\mathfrak{g}\big)
\end{align*} generators. Equivalently, we have \begin{align*}
\mathcal{F}_\mu \cong \mathbb{C}[t_1,\dots,t_{b(\mathfrak{g})}].
\end{align*}
\end{theorem}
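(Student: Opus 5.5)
The proof splits into three parts: Poisson-commutativity of $\mathcal{F}_\mu$, algebraic independence of the $b(\mathfrak{g})$ listed generators, and the count of those generators. I would treat them in that order.

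\textbf{Poisson-commutativity.} Use the standard compatible-brackets argument. Let $\{\cdot,\cdot\}$ be the Lie–Poisson bracket \eqref{eq:poi}. Define the frozen bracket
\begin{align*}
\{f,g\}_\mu(X):=\langle \mu,[d_Xf,d_Xg]\rangle .
\end{align*}
For every $\lambda\in\mathbb{C}$, the combination $\{\cdot,\cdot\}+\lambda\{\cdot,\cdot\}_\mu$ is the Lie–Poisson bracket pulled back along the translation $X\mapsto X+\lambda\mu$. It is therefore a Poisson bracket, and its Casimirs include the shifted invariants $P_i(X+\lambda\mu)$.

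Expand $P_i(X+\lambda\mu)=\sum_j \tfrac{\lambda^j}{j!}D^j_\mu(P_i)(X)$. The Casimir property for all $\lambda$ gives recursions of the form
\begin{align*}
\{D^{j+1}_\mu P_i,\cdot\}\propto\{D^{j}_\mu P_i,\cdot\}_\mu ,
\end{align*}
together with $\{P_i,\cdot\}=0$ and $\{D^{\deg P_i}_\mu P_i,\cdot\}_\mu=0$. The last identity holds because $D^{\deg P_i}_\mu P_i$ is a constant.

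From these relations one proves, by the usual ``Lenard chain'' induction on $j+k$, that $\{D^j_\mu P_i, D^k_\mu P_{i'}\}=0$. The induction transfers a step from one index to the other using both brackets, and terminates at an index where one of the two functions is a Casimir of the relevant bracket. Alternatively, for generic pairs $\lambda\neq\lambda'$, the functions $P_i(X+\lambda\mu)$ and $P_{i'}(X+\lambda'\mu)$ commute with respect to every bracket in the pencil, and one extracts coefficients in $\lambda,\lambda'$. I would present the second version because it is cleaner.

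\textbf{Counting generators and independence.} Since $\sum_i \deg P_i = b(\mathfrak{g})$ (Kostant: $\sum_i(2\deg P_i-1)=\dim\mathfrak{g}$), the displayed generating set has exactly $b(\mathfrak{g})$ elements. It therefore remains to show that these elements are algebraically independent; this gives $\mathcal{F}_\mu\cong\mathbb{C}[t_1,\dots,t_{b(\mathfrak{g})}]$.

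Over $\mathbb{C}$, algebraic independence is equivalent to linear independence of the differentials at one point. So I would find a single $X$ at which the $b(\mathfrak{g})$ differentials $d_X D^j_\mu P_i$ are linearly independent.

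The main obstacle is this independence check. The approach is as follows.
\begin{itemize}
\item By Kostant, the differentials $d_Y P_i$ are linearly independent at any regular $Y$, and they span the centralizer $\mathfrak{g}_Y$.
\item The span of $\{d_X D^j_\mu P_i\}_j$ equals the span of $\{d_{X+\lambda\mu}P_i\}$ over enough distinct values of $\lambda$.
\item Hence the total span is $\sum_\lambda \mathfrak{g}_{X+\lambda\mu}$ over the finitely many $\lambda$ for which $X+\lambda\mu$ is regular.
\end{itemize}

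I would reduce to the case of regular semisimple $\mu$ (using that the regular semisimple elements are dense in the regular ones, plus lower semicontinuity of rank), and choose $X$ generic. One then has to show that
\begin{align*}
\dim\sum_{\lambda}\mathfrak{g}_{X+\lambda\mu}=b(\mathfrak{g}).
\end{align*}

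I would first try Bolsinov's criterion: the codimension of the set of $Y\in\mathfrak{g}$ that are non-regular in the pencil spanned by $X$ and $\mu$ is at least $2$. This follows because $\mathfrak{g}\setminus\mathfrak{g}_{\mathrm{reg}}$ has codimension $3$. As a fallback, a direct computation with $\mu\in\mathfrak{t}$ regular and $X$ chosen principal nilpotent in the root decomposition should exhibit the required $b(\mathfrak{g})$ independent vectors. That computation would be the technical heart of the argument.
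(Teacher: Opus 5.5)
Your commutativity argument (the pencil $\{\cdot,\cdot\}+\lambda\{\cdot,\cdot\}_\mu$ as the Lie--Poisson bracket transported by $X\mapsto X+\lambda\mu$, then extracting coefficients in $\lambda,\lambda'$) is correct. So are the count $\sum_i\deg P_i=b(\mathfrak{g})$ and the reduction of freeness to $\dim\sum_\lambda\mathfrak{g}_{X+\lambda\mu}=b(\mathfrak{g})$ at a single $X$. This is the standard route; the paper itself gives no proof and only cites \cite{MF78}.

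\textbf{The gap: the reduction to regular semisimple $\mu$ fails.} The property ``the $b(\mathfrak{g})$ differentials $d_XD^j_\mu P_i$ are independent for some $X$'' is a Zariski-open condition on $\mu$, by lower semicontinuity of rank. Openness propagates the property from a point to a neighbourhood, not from a dense set to its boundary. Knowing it on the regular semisimple locus therefore gives it only on some open dense set of $\mu$. It says nothing at a special regular element such as a principal nilpotent, where the rank could a priori drop. As structured, and in particular via your fallback computation with $\mu\in\mathfrak{t}$, you would prove the theorem only for regular semisimple $\mu$, essentially Mishchenko--Fomenko's original case. The statement is for every $\mu\in\mathfrak{g}_{\mathrm{reg}}$, and that extension is exactly the non-trivial part.

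\textbf{The fix: drop the reduction and run Bolsinov's argument at the given regular $\mu$.}
\begin{itemize}
\item Write $A_Y(\xi,\eta)=B(Y,[\xi,\eta])$, so $\ker A_Y=\mathfrak{g}_Y$ and $A_X+\lambda A_\mu=A_{X+\lambda\mu}$.
\item Since $\operatorname{codim}(\mathfrak{g}\setminus\mathfrak{g}_{\mathrm{reg}})=3$, the set $(\mathfrak{g}\setminus\mathfrak{g}_{\mathrm{reg}})+\mathbb{C}\mu$ has dimension at most $\dim\mathfrak{g}-2$. For $X$ off its closure, every $X+\lambda\mu$ with $\lambda\in\mathbb{C}$ is regular.
\item Regularity of $\mu$ itself is the same rank condition at $\lambda=\infty$. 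Hence every member of the projective pencil has rank $\dim\mathfrak{g}-r$.
\item Constant rank means the Jordan--Kronecker decomposition has no Jordan blocks, only $r$ Kronecker blocks of sizes $2k_j+1$. The span of all kernels then has dimension $\sum_j(k_j+1)=\tfrac12(\dim\mathfrak{g}+r)=b(\mathfrak{g})$.
\end{itemize}
This linear-algebra lemma is the actual content of ``Bolsinov's criterion'' and should be stated, not just named. Your phrasing (``codimension of the set of $Y$ non-regular in the pencil \dots'') conflates the codimension condition on the singular set with the rank condition along the pencil.

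A minor correction: $X+\lambda\mu$ is regular for all but finitely many $\lambda$, not ``the finitely many $\lambda$ for which it is regular''.
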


\begin{proposition}
\label{prop:mfcriterion}
For $\mu \in \mathfrak{g}$, $\mu \in \mathfrak{g}^A$ if and only $\mathcal{F}_\mu \subset S(\mathfrak{g})^A$. 
\end{proposition}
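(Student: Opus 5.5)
The plan is to prove the two implications separately. The forward direction uses the $\mathrm{Ad}(G)$-invariance of the generators $P_i$ of $S(\mathfrak{g})^G$. The converse extracts a single linear element of $\mathcal{F}_\mu$ that already forces $\mu$ to be $A$-fixed. Throughout, $\mathfrak{g}^*$ is identified with $\mathfrak{g}$ via $B$, so that $\mu\in\mathfrak{g}$ and $X+t\mu$ makes sense. For $a\in A$ the action on polynomials is $(a\cdot f)(X)=f(\mathrm{Ad}(a)X)$, as in \eqref{eq:sga}.

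For ($\Rightarrow$), assume $\mathrm{Ad}(a)\mu=\mu$ for all $a\in A$. Fix a generator $P\in S(\mathfrak{g})^G$ and $a\in A$. Since $P$ is $\mathrm{Ad}(G)$-invariant, we have
\begin{align*}
P\bigl(\mathrm{Ad}(a)X+t\mu\bigr)=P\bigl(\mathrm{Ad}(a)(X+t\,\mathrm{Ad}(a^{-1})\mu)\bigr)=P\bigl(X+t\,\mathrm{Ad}(a^{-1})\mu\bigr)=P(X+t\mu)
\end{align*}
as polynomials in $t$. Comparing coefficients of $t^j$ gives $D^j_\mu(P)(\mathrm{Ad}(a)X)=D^j_\mu(P)(X)$. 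Hence every generator $D^j_\mu(P_i)$ of $\mathcal{F}_\mu$ lies in $S(\mathfrak{g})^A$. Since $S(\mathfrak{g})^A$ is a subalgebra, $\mathcal{F}_\mu\subset S(\mathfrak{g})^A$. No connectedness of $A$ is needed here.

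For ($\Leftarrow$), I would use the quadratic invariant $Q(X):=B(X,X)$. By $\mathrm{Ad}(G)$-invariance of $B$, $Q\in S(\mathfrak{g})^G$. Because the $P_i$ are homogeneous generators and $Q$ is homogeneous of degree $2$, $Q$ is a $\mathbb{C}$-linear combination $Q=\sum_{\deg P_i=2}c_iP_i$ of the degree-$2$ generators; degree-$1$ invariants vanish since $\mathfrak{g}$ is semisimple. The operator $D^1_\mu$ is linear, so
\begin{align*}
D^1_\mu(Q)=\sum_{\deg P_i=2}c_i\,D^1_\mu(P_i)\in\mathcal{F}_\mu ,
\end{align*}
and each term is an allowed generator since $1\le\deg P_i-1$. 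A direct computation gives $D^1_\mu(Q)(X)=2B(X,\mu)$. The hypothesis $\mathcal{F}_\mu\subset S(\mathfrak{g})^A$ then yields $B(\mathrm{Ad}(a)X,\mu)=B(X,\mu)$ for all $X\in\mathfrak{g}$ and $a\in A$. Using invariance of $B$, this becomes $B\bigl(X,\mathrm{Ad}(a^{-1})\mu-\mu\bigr)=0$ for all $X$. Nondegeneracy of $B$ then gives $\mathrm{Ad}(a^{-1})\mu=\mu$ for every $a\in A$, i.e.\ $\mu\in\mathfrak{g}^A$.

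The only delicate point is the step showing that $B(\cdot,\mu)$ really lies in $\mathcal{F}_\mu$ for the fixed choice of generators $P_1,\dots,P_\ell$. This is handled by the homogeneity argument above: $\mathcal{F}_\mu$ is closed under linear combinations, and it contains each $D^1_\mu(P_i)$ with $\deg P_i=2$. This holds whether $B$ is the Killing form or any other invariant nondegenerate form, and whether $\mathfrak{g}$ is simple or not.
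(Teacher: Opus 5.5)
Your proof is correct and follows the same route as the paper. The forward direction uses coefficient-wise $\mathrm{Ad}(A)$-invariance of $P(X+t\mu)$ for $P\in S(\mathfrak{g})^G$, and the converse extracts the linear element $D^1_\mu(Q)$, a nonzero multiple of $B(\cdot,\mu)$, from the quadratic Casimir and concludes by nondegeneracy of $B$; your homogeneity argument showing $D^1_\mu(Q)\in\mathcal{F}_\mu$ for the fixed generators $P_1,\dots,P_\ell$ usefully justifies a step the paper only asserts ``by definition''.
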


\begin{proof}
Assume first that $\mu \in \mathfrak{g}^A$. Let $P \in S(\mathfrak{g})^G$. For every $a \in A$ and every $X \in \mathfrak{g}$, we have \begin{align*}
P(\mathrm{Ad}(a)X+t\mu) = P(\mathrm{Ad}(a)X+t\mathrm{Ad}(a)\mu) = P(\mathrm{Ad}(a)(X+t\mu)) = P(X+t\mu).
\end{align*} Hence, every coefficient of the polynomial $P(X+t\mu)$ is $A$-invariant. Since $\mathcal{F}_\mu$ is generated by all such coefficients, it follows that  $\mathcal{F}_\mu \subset S(\mathfrak{g})^A$. 

Assume conversely that $\mathcal{F}_\mu \subset S(\mathfrak{g})^A$. Consider the quadratic invariant \begin{align*}
Q(X) := \dfrac{1}{2}B (X,X) 
\end{align*} defined in \cite{Vin91}.
Since $\kappa$ is $\mathrm{Ad}(G)$-invariant, we find $Q \in S(\mathfrak{g})^G$. The coefficient of $t$ in $Q(X+t\mu)$ is \begin{align*}
D_\mu^1(Q)(X) = B (X,\mu).
\end{align*} By definition of $\mathcal{F}_\mu$, this function belongs to $\mathcal{F}_\mu$, hence to $S(\mathfrak{g})^A$. Therefore, for every $a \in A$ and every $X \in \mathfrak{g}$, we deduce $B (\mathrm{Ad}(a)X,\mu) = B (X,\mu)$. Using $\mathrm{Ad}(G)$-invariance of $\kappa$, this is equivalent to \begin{align*}
B (X,\mathrm{Ad}(a^{-1})\mu) = B (X,\mu).
\end{align*}
Since $B$ is nondegenerate, we obtain $\mathrm{Ad}(a^{-1})\mu = \mu$. Hence, $\mu \in \mathfrak{g}^A$.
\end{proof}

\begin{corollary}
\label{cor:mfregexi}
There exists a regular element $\mu$ such that $\mathcal{F}_\mu \subset S(\mathfrak{g})^A$ if and only if $\mathfrak{g}^A \cap \mathfrak{g}_{\mathrm{reg}} \ne \varnothing$. In particular, if $A=T$ is a maximal torus, then every $\mu \in \mathfrak{t} \cap \mathfrak{g}_{\mathrm{reg}}$ satisfies $\mathcal{F}_\mu \subset S(\mathfrak{g})^T$. 
\end{corollary}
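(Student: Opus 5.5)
This is a direct consequence of Proposition \ref{prop:mfcriterion}; I would argue the two directions of the equivalence separately and then treat the torus case.

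\emph{Sufficiency.} Suppose $\mathfrak{g}^A \cap \mathfrak{g}_{\mathrm{reg}} \neq \varnothing$ and pick $\mu$ in this intersection. Then $\mu \in \mathfrak{g}^A$, so Proposition \ref{prop:mfcriterion} gives $\mathcal{F}_\mu \subset S(\mathfrak{g})^A$. Since $\mu$ is also regular, $\mu$ is the required element. Throughout I use the identification $\mathfrak{g}\cong\mathfrak{g}^*$ via $B$, which is $\mathrm{Ad}(G)$-equivariant. Hence regularity and $A$-fixedness mean the same thing whether $\mu$ is read in $\mathfrak{g}$ or in $\mathfrak{g}^*$.

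\emph{Necessity.} Suppose a regular $\mu$ satisfies $\mathcal{F}_\mu \subset S(\mathfrak{g})^A$. The converse direction of Proposition \ref{prop:mfcriterion} gives $\mu \in \mathfrak{g}^A$. Therefore $\mu \in \mathfrak{g}^A\cap\mathfrak{g}_{\mathrm{reg}}$, and the intersection is nonempty. That converse direction only used the quadratic invariant $Q$ and the nondegeneracy of $B$, so no regularity hypothesis is needed in it.

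\emph{Torus case.} Take $A=T$. By the criterion, it suffices to show $\mathfrak{t}\subset\mathfrak{g}^T$. Since $T$ is abelian and connected, $\mathrm{Ad}(t)|_{\mathfrak{t}}=\mathrm{id}$ for all $t\in T$; this is the same observation used in the proof of Theorem \ref{thm:abelianredsuper}. Hence every $\mu\in\mathfrak{t}\cap\mathfrak{g}_{\mathrm{reg}}$ lies in $\mathfrak{g}^T\cap\mathfrak{g}_{\mathrm{reg}}$, and Proposition \ref{prop:mfcriterion} yields $\mathcal{F}_\mu\subset S(\mathfrak{g})^T$.

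For completeness I would note that $\mathfrak{t}\cap\mathfrak{g}_{\mathrm{reg}}$ is nonempty. It is the complement in $\mathfrak{t}$ of the finitely many root hyperplanes $\ker\alpha$, $\alpha\in\Phi$. For $H$ in this complement, $\mathfrak{g}_H=\mathfrak{t}$, so $\dim\mathfrak{g}_H=r$. This nonemptiness is the only point needing any care; everything else is bookkeeping on top of Proposition \ref{prop:mfcriterion}.
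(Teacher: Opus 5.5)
Your proposal is correct and follows the paper's proof, which invokes Proposition \ref{prop:mfcriterion} in both directions and notes $\mathfrak{g}^T=\mathfrak{t}$ for the torus case. You use only the inclusion $\mathfrak{t}\subset\mathfrak{g}^T$, which is all that is needed; your check that $\mathfrak{t}\cap\mathfrak{g}_{\mathrm{reg}}\neq\varnothing$ is not logically required (the ``in particular'' clause would hold vacuously otherwise), but it is a harmless addition showing the criterion is actually met for $A=T$.
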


\begin{proof}
This is immediate from Proposition \ref{prop:mfcriterion}. If $A=T$, then $\mathfrak{g}^T = \mathfrak{t}$.
\end{proof}

\begin{corollary}
    \label{coro:mfnoconb}
Let $\mu \in \mathfrak{g}$ be a regular semisimple. Then, no base algebra $\mathcal{B} \subset Z_P(\mathcal{C}_A)$ such that $\mathcal{B} = \mathcal{F}_\mu$. 
\end{corollary}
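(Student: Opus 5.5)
The plan is to argue by contradiction, using only transcendence degrees together with Proposition \ref{prop:mfcriterion} and Theorem \ref{thm:mfalgebra}. Suppose $\mathcal{F}_\mu$ were a base algebra for the chain through $\mathcal{C}_A$. By the definition recalled at the start of this subsection, this requires two things: $\mathcal{F}_\mu\subset Z_P(\mathcal{C}_A)\subset \mathcal{C}_A=S(\mathfrak{g})^A$, and $\mathrm{trdeg}\,\mathcal{F}_\mu=d_A$. I will show that these two conditions are incompatible when $\mu$ is regular semisimple.

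\textbf{Step 1 (constraint on $A$).} Since $\mathcal{F}_\mu\subset S(\mathfrak{g})^A$, Proposition \ref{prop:mfcriterion} gives $\mu\in\mathfrak{g}^A$, i.e.\ $\mathrm{Ad}(a)\mu=\mu$ for all $a\in A$. Hence $A\subset G_\mu$, the centralizer of $\mu$ in $G$. Because $\mu$ is regular semisimple, $\mathfrak{g}_\mu$ is a Cartan subalgebra $\mathfrak{t}$. Therefore $G_\mu^\circ=T$ is the corresponding maximal torus, and $G_\mu\subset N_G(T)$. In particular $A^\circ\subset T$, so every $A$-orbit has dimension at most $\dim A^\circ\le r=\mathrm{rank}\,\mathfrak{g}$. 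Thus $d_A\le r$. Equivalently, by \eqref{eq:dimgitquo} and the integrality argument over the finite group $A/A^\circ$ used in the proof of Theorem \ref{thm:abelianredsuper}, $\mathrm{trdeg}\,\mathcal{C}_A\ge \dim\mathfrak{g}-r$.

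\textbf{Step 2 (size of $\mathcal{F}_\mu$).} Since $\mu$ is regular, Theorem \ref{thm:mfalgebra} shows that $\mathcal{F}_\mu$ is a polynomial ring in $b(\mathfrak{g})=\tfrac12(\dim\mathfrak{g}+r)$ variables. Hence $\mathrm{trdeg}\,\mathcal{F}_\mu=b(\mathfrak{g})$. For $\mathfrak{g}\neq 0$ semisimple we have $\dim\mathfrak{g}=r+|\Phi|\ge r+2$, and so $b(\mathfrak{g})\ge r+1>r\ge d_A$. This contradicts $\mathrm{trdeg}\,\mathcal{F}_\mu=d_A$. If instead $\mathcal{F}_\mu\not\subset S(\mathfrak{g})^A$, then the inclusion $\mathcal{F}_\mu\subset Z_P(\mathcal{C}_A)$ already fails. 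In either case no base algebra equals $\mathcal{F}_\mu$.

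\textbf{Main obstacle and alternative.} The delicate point is Step 1: identifying the centralizer of a regular semisimple element, especially since $G_\mu$ may be disconnected. This is why I would only use $A^\circ\subset T$ and pass to $A^\circ$ via the finite-index integrality argument. As a cross-check, one can also argue through the Poisson centre directly. The centre $Z_P(\mathcal{C}_A)$ is Poisson-commutative and commutes with $\mathcal{C}_A$. By Definition \ref{def:superalge}(b) its useful size is bounded by $\dim\mathfrak{g}-\mathrm{trdeg}\,\mathcal{C}_A=d_A\le r$. But $b(\mathfrak{g})>r$, so $\mathcal{F}_\mu$ is too large to sit inside it as a base.
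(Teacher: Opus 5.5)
Your argument is correct, but it takes a different route from the paper's.

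The paper never looks at $A$ and never uses Proposition~\ref{prop:mfcriterion}. It combines $\mathcal{B}\subset Z_P(\mathcal{C}_A)\subset\mathcal{C}_A$ with $\mathrm{trdeg}\,\mathcal{B}=d_A=\dim\mathfrak{g}-\mathrm{trdeg}\,\mathcal{C}_A$. This forces $d_A\le\tfrac12\dim\mathfrak{g}$ for every base algebra. The paper then observes that $\mathrm{trdeg}\,\mathcal{F}_\mu=\tfrac12(\dim\mathfrak{g}+\mathrm{rank}\,\mathfrak{g})$ exceeds this bound.

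You instead use the inclusion $\mathcal{F}_\mu\subset S(\mathfrak{g})^A$ structurally. Proposition~\ref{prop:mfcriterion} gives $A\subset G_\mu$, hence $A^\circ\subset T$ and the much sharper bound $d_A\le r$. You then conclude from $b(\mathfrak{g})=r+\tfrac12|\Phi|>r$.

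The two routes buy different things. The paper's counting argument is shorter and more general. It rules out, for every $A$, any subalgebra of transcendence degree above $\tfrac12\dim\mathfrak{g}$ as a base, and it needs no Lie-theoretic input beyond Theorem~\ref{thm:mfalgebra}. Your argument costs the converse half of Proposition~\ref{prop:mfcriterion} plus a centralizer computation. In return it gives extra information: any $A$ with $\mathcal{F}_\mu\subset S(\mathfrak{g})^A$ satisfies $\dim A\le r$, hence $d_A\le r$.

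Semisimplicity of $\mu$ is not actually needed in your Step 1. For any regular $\mu$ you have $\dim A\le\dim G_\mu=\dim\mathfrak{g}_\mu=r$. So you can bypass both the identification $G_\mu^\circ=T$ and the worry about disconnectedness.

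Your closing ``cross-check'' only restates the defining condition $\mathrm{trdeg}\,\mathcal{B}=d_A$ and can be dropped.
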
 

\begin{proof}
Assume, for contradiction, that $\mathcal{B} = \mathcal{F}_\mu$. Since $ \mathcal{B}$ is a commutative base algebra, we deduce  $\mathrm{trdeg} \, \mathcal{B} = d_A$. Since $\mu$ is regular semisimple, the Mishchenko-Fomenko algebra has maximal possible transcendence degree, hence \begin{align*}
\mathrm{trdeg}\, \mathcal{F}_\mu = \frac{1}{2}(\dim \mathfrak{g} + \mathrm{rank} \,\mathfrak{g}).
\end{align*} Therefore, \begin{align}
d_A = \frac{1}{2}(\dim \mathfrak{g} + \mathrm{rank} \,\mathfrak{g}). \label{eq:da}
\end{align} By the definition of $d_A$, this implies \begin{align}
\mathrm{trdeg}\, \mathcal{C}_A = \dim \mathfrak{g} - d_A = \frac{1}{2}(\dim \mathfrak{g} - \mathrm{rank}\mathfrak{g}). \label{eq:rankaa}
\end{align} On the other hand, $\mathcal{B} \subset Z_P(\mathcal{C}_A)$ implies $\mathcal{B} \subset \mathcal{C}_A$, and therefore \begin{align}
\mathrm{trdeg} \, \mathcal{F}_\mu = \mathrm{trdeg} \ \mathcal{B}  \leq \mathrm{trdeg} \, \mathcal{C}_A.\label{eq:rankmf}
\end{align} Substituting the identities \eqref{eq:da} and \eqref{eq:rankaa} into \eqref{eq:rankmf} yields \begin{align*}
\frac{1}{2}(\dim \mathfrak{g} + \mathrm{rank}\mathfrak{g}) \leq \frac{1}{2}(\dim \mathfrak{g} - \mathrm{rank}\mathfrak{g}),
\end{align*} which is impossible for nonzero semisimple $\mathfrak{g}$.
\end{proof}

\begin{remark}
\label{rmk:mfnotbase}
A regular semisimple Mishchenko-Fomenko subalgebra can never be a base algebra for the superintegrable chain through $S(\mathfrak{g})^A$.
\end{remark}

\begin{theorem}
\label{thm:mfinsidesuper}
Assume that $\mathcal{B} \in \mathcal{B}_A$, and let $\mu \in \mathfrak{g}^A \cap \mathfrak{g}_{\mathrm{reg}}$. Then the following statements hold.

\begin{enumerate}
\item The chain
\begin{align*}
\mathfrak{g} \xrightarrow{\ \chi_A\ } \mathfrak{g}//A \xrightarrow{\ \rho_B\ } \mathrm{Spec}\, \mathcal{B}
\end{align*}
is superintegrable in the sense of Definition \ref{def:superalge}.

\item We have $\mathcal{F}_\mu \subset S(\mathfrak{g})^A$. 

\item The inclusion of Poisson algebras $\mathcal{F}_\mu \hookrightarrow S(\mathfrak{g})^A$ induces a Poisson morphism \begin{align*}
\psi_{\mu,A} : \mathfrak{g}//A \longrightarrow \mathrm{Spec}\,\mathcal{F}_\mu.
\end{align*}
\end{enumerate}

Therefore, whenever $\mathfrak{g}//A$ occurs as the intermediate Poisson variety of a superintegrable chain, every regular element
\begin{align*}
\mu \in \mathfrak{g}^A \cap \mathfrak{g}_{\mathrm{reg}}
\end{align*}
produces a distinguished Poisson-commutative subalgebra $\mathcal{F}_\mu$ inside the intermediate algebra $S(\mathfrak{g})^A$.
\end{theorem}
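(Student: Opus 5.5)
The plan is to derive each of the three items from results already proved, so no new computation should be needed.

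For (1), we use that $\mathcal{B}\in\mathcal{B}_A$. By the definition of the set $\mathcal{B}_A$ in Proposition \ref{prop:superinchainA}, $\mathcal{B}$ is a finitely generated Poisson subalgebra of $Z(S(\mathfrak{g})^A)$ with $\mathrm{trdeg}\,\mathcal{B}=d_A$. Condition (a) of Definition \ref{def:superalge} therefore holds by hypothesis. For condition (b), combine $\mathrm{trdeg}\,\mathcal{B}=d_A$ with the dimension formula \eqref{eq:dimgitquo}, which gives $\mathrm{trdeg}\,S(\mathfrak{g})^A=\dim\mathfrak{g}-d_A$; the two degrees add up to $\dim\mathfrak{g}$. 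This is exactly the ``if'' direction of Proposition \ref{prop:superinchainA}, so we cite it directly. The maps $\chi_A$ and $\rho_B$ are Poisson because they are $\mathrm{Spec}$ of inclusions of Poisson subalgebras, as in Proposition \ref{prop:smoothregularg}.

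For (2), the hypothesis $\mu\in\mathfrak{g}^A$ lets us apply the forward direction of Proposition \ref{prop:mfcriterion}. We identify $\mu\in\mathfrak{g}\cong\mathfrak{g}^*$ through $B$. For $P\in S(\mathfrak{g})^G$ and $a\in A$ we have $P(\mathrm{Ad}(a)X+t\mu)=P(\mathrm{Ad}(a)(X+t\mu))=P(X+t\mu)$. Hence every coefficient $D^j_\mu(P_i)$ is $A$-invariant, and $\mathcal{F}_\mu\subset S(\mathfrak{g})^A$. Regularity of $\mu$ is not used here; it only guarantees, through Theorem \ref{thm:mfalgebra}, that $\mathcal{F}_\mu$ is a polynomial algebra in $b(\mathfrak{g})$ generators, which makes $\mathrm{Spec}\,\mathcal{F}_\mu\cong\mathbb{A}^{b(\mathfrak{g})}$.

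For (3), we need $\mathcal{F}_\mu$ to be a Poisson subalgebra of $S(\mathfrak{g})^A$. Theorem \ref{thm:mfalgebra} says it is Poisson-commutative, so it is closed under the bracket, on which it vanishes. It is a subalgebra by (2). We then equip $\mathrm{Spec}\,\mathcal{F}_\mu$ with the zero Poisson bracket. The inclusion $\iota:\mathcal{F}_\mu\hookrightarrow S(\mathfrak{g})^A$ is then a Poisson algebra homomorphism, and functoriality of $\mathrm{Spec}$ gives $\psi_{\mu,A}=\mathrm{Spec}(\iota)$, characterized by $\psi_{\mu,A}\circ\chi_A=\mathrm{Spec}(\mathcal{F}_\mu\hookrightarrow S(\mathfrak{g}))$. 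It is Poisson because its pullback is $\iota$.

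The main subtlety, and the only real obstacle, is in (3). The bracket on $S(\mathfrak{g})^A$ restricted to $\mathcal{F}_\mu$ is zero, but this does not make $\mathcal{F}_\mu$ central in $S(\mathfrak{g})^A$. Consistent with Corollary \ref{coro:mfnoconb}, we must therefore not claim that $\psi_{\mu,A}$ fits into a superintegrable chain. We only assert that it is a Poisson morphism onto an affine space with trivial bracket. The concluding sentence of the theorem then follows by combining (1)–(3).
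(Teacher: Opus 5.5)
Your proposal is correct and follows the paper's proof essentially step for step. Part (1) is the ``if'' direction of Proposition \ref{prop:superinchainA}, part (2) is the forward direction of Proposition \ref{prop:mfcriterion}, and part (3) uses the Poisson-commutativity from Theorem \ref{thm:mfalgebra} to make $\mathcal{F}_\mu\hookrightarrow S(\mathfrak{g})^A$ a Poisson inclusion and then applies $\mathrm{Spec}$. Your explicit check of condition (b) and your choice of the zero bracket on $\mathrm{Spec}\,\mathcal{F}_\mu$ are small clarifications that the paper leaves implicit.
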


\begin{proof}
We first show part (1). Since $ \mathcal{B}\in\mathcal{B}_A$, Proposition \ref{prop:superinchainA} applies and yields that the sequence of morphisms
\begin{align*}
\mathfrak{g} \xrightarrow{\ \chi_A\ } \mathfrak{g}//A \xrightarrow{\ \rho_B\ } \mathrm{Spec} \, \mathcal{B}
\end{align*}
forms a superintegrable chain in the sense of Definition \ref{def:superalge}. 

We now show the second part. As $\mu\in\mathfrak{g}^A$, Proposition \ref{prop:mfcriterion} implies that $\mathcal{F}_\mu\subset S(\mathfrak{g})^A$.  

Finally, by Theorem \ref{thm:mfalgebra}, $\mathcal{F}_\mu$ is Poisson-commutative in $S(\mathfrak{g})$, i.e., the Poisson bracket on $S(\mathfrak{g})$ restricts to a Poisson bracket on $\mathcal{F}_\mu$.
By part (2),  $\mathcal{F}_\mu$ is a Poisson subalgebra of the Poisson algebra $S(\mathfrak{g})^A$. In particular, the inclusion \begin{align*}
\iota:\mathcal{F}_\mu \hookrightarrow S(\mathfrak{g})^A
\end{align*} preserves both the commutative algebra structure and the Poisson bracket. Hence, it is a morphism of Poisson algebras. Applying $\mathrm{Spec}(\cdot)$ to $\iota$ yields a morphism of affine varieties
\begin{align*}
\psi_{\mu,A}:\mathfrak{g}//A=\mathrm{Spec} S(\mathfrak{g})^A \longrightarrow \mathrm{Spec} \,\mathcal{F}_\mu.
\end{align*}
Moreover, by construction, the pullback $\psi_{\mu,A}^*:\mathcal{F}_\mu\to S(\mathfrak{g})^A$ equals $\iota$ and is Poisson. Therefore, $\psi_{\mu,A}$ is a Poisson morphism. This proves all the claims.
\end{proof}

\begin{remark}
\label{rem:mf-not-base}
The algebra $\mathcal{F}_\mu$ should not be identified with the base algebra $ \mathcal{B}$ from Definition \ref{def:superalge}. The base algebra is required to satisfy
\begin{align*}
\mathcal{B} \subset Z(S(\mathfrak{g})^A),
\end{align*}
whereas $\mathcal{F}_\mu$ is only a Poisson-commutative subalgebra of $S(\mathfrak{g})^A$. Thus, in the superintegrable chain, $ \mathcal{B}$ remains the base algebra, while $\mathcal{F}_\mu$ is an additional Poisson-commutative subalgebra inside the intermediate algebra.
\end{remark}

\begin{corollary}
\label{cor:mfspbase}
Let  $d_A = \mathrm{rank} \,\mathfrak{g}$, and let $\mu \in \mathfrak{g}^A \cap \mathfrak{g}_{\mathrm{reg}}$. Then \begin{align*}
S(\mathfrak{g})^G \subset \mathcal{F}_\mu \subset S(\mathfrak{g})^A \subset S(\mathfrak{g}),
\end{align*}
and the chain
\begin{align*}
\mathfrak{g} \xrightarrow{\ \chi_A\ } \mathfrak{g}//A \xrightarrow{\ \rho\ } \mathfrak{g}//G
\end{align*}
is superintegrable, where $\rho$ is induced by the inclusion $S(\mathfrak{g})^G \subset S(\mathfrak{g})^A$. 
\end{corollary}

\begin{proof}
We first verify that $S(\mathfrak{g})^G$ is a base algebra for the chain through $S(\mathfrak{g})^A$.
Since $A\subset G$, invariance under $G$ implies invariance under $A$, and hence
\begin{align*}
S(\mathfrak{g})^G\subset S(\mathfrak{g})^A.
\end{align*}
The Poisson bracket on $S(\mathfrak{g})$ restricts to a Poisson bracket on the subalgebra $S(\mathfrak{g})^A$. Moreover, $S(\mathfrak{g})^G$ is the Poisson center of $S(\mathfrak{g})$. Thus, for every $f\in S(\mathfrak{g})^G$ and every $h\in S(\mathfrak{g})^A$,  $\{f,h\}=0$. Equivalently, \begin{align*}
S(\mathfrak{g})^G \subset Z_P\!\left(S(\mathfrak{g})^A\right),
\end{align*}
where $Z_P(\cdot)$ denotes the Poisson center.

As $\mathfrak{g}$ is complex semisimple, Chevalley's theorem identifies $S(\mathfrak{g})^G$ as a polynomial algebra on $\ell = \mathrm{rank}\,\mathfrak{g}$ homogeneous generators. In particular, \begin{align*}
\mathrm{trdeg}\, S(\mathfrak{g})^G = \mathrm{rank}\,\mathfrak{g} = d_A.
\end{align*} Therefore, $ \mathcal{B}_0: =S(\mathfrak{g})^G$ is a base algebra in the sense of Definition \ref{def:superalge}. Let \begin{align*}
\rho:=\operatorname{Spec}(\iota):\mathfrak{g}//A=\operatorname{Spec}S(\mathfrak{g})^A \longrightarrow \operatorname{Spec}S(\mathfrak{g})^G=\mathfrak{g}//G
\end{align*} be the morphism induced by the inclusion of $\mathbb{C}$-algebras $\iota:S(\mathfrak{g})^G\hookrightarrow S(\mathfrak{g})^A$. Proposition \ref{prop:superinchainA} applied to $ \mathcal{B}_0$ then implies that \begin{align*}
\mathfrak{g} \xrightarrow{\ \chi_A\ } \mathfrak{g}//A \xrightarrow{\ \rho\ } \mathfrak{g}//G
\end{align*} is a superintegrable chain.

Next, we show the inclusion with $\mathcal{F}_\mu$. By definition, \begin{align*}
\mathcal{F}_\mu = \mathbb{C}\big[D_\mu^j(P_i) : 1\leq i\leq \ell,\ 0\leq j\leq \deg P_i-1\big].
\end{align*}  Taking $j=0$ shows $P_i=D_\mu^0(P_i)\in\mathcal{F}_\mu$ for all $i$, hence $S(\mathfrak{g})^G\subset\mathcal{F}_\mu$. Finally, since $\mu\in\mathfrak{g}^A$, Proposition \ref{prop:mfcriterion} yields $\mathcal{F}_\mu\subset S(\mathfrak{g})^A$.
\end{proof}

\begin{remark}
\label{rem:refinedsuperintegrable}
More generally, given a superintegrable chain $M\xrightarrow{\pi_1}\mathcal{P}\xrightarrow{\pi_2}\mathcal{B}$, we call it \emph{refined} if we choose an intermediate Poisson-commutative subalgebra \begin{align*}
    C^\infty(\mathcal{B})\subset \mathcal{F}\subset C^\infty(\mathcal{P}),
\end{align*}
such that the projection $\pi_2$ factors through the associated affine Poisson variety $\mathrm{Spec}\,\mathcal{F}$. In other words, $\pi_2$ is the composite \begin{align*}
     \mathcal{P}\longrightarrow \mathrm{Spec}\,\mathcal{F}\longrightarrow \mathcal{B}.
 \end{align*} See more detailed definition on the refined superintegrable  system in \cite{ArthamonovReshetikhin2021}.

In the situation of Corollary \ref{cor:mfspbase}, the Mishchenko-Fomenko subalgebra $\mathcal{F}_\mu$ provides a refinement of the base $S(\mathfrak{g})^G$. Hence, we obtain the \emph{refined superintegrable system} \begin{align*}
    \mathfrak{g}\xrightarrow{\chi_A}\mathfrak{g}//A \xrightarrow{\ \psi_{\mu,A}\ }\mathrm{Spec}\,\mathcal{F}_\mu \xrightarrow{\ \rho_{\mu}\ }\mathfrak{g}//G,
\end{align*} where $\psi_{\mu,A}$ is induced by the inclusion $\mathcal{F}_\mu\hookrightarrow S(\mathfrak{g})^A$ and $\rho_{\mu}$ is induced by $S(\mathfrak{g})^G\hookrightarrow \mathcal{F}_\mu$ such that $\rho = \rho_{\mu}\circ \psi_{\mu,A}$.
\end{remark}

\begin{corollary}
\label{cor:mfspectrum}
Assume that $\mu \in \mathfrak{g}_{\mathrm{reg}}$. Then we have $\mathrm{Spec}\,\mathcal{F}_\mu \cong \mathbb{A}^{b(\mathfrak{g})}$, where $b(\mathfrak{g}) = \frac{1}{2}\big(\dim \mathfrak{g} + \mathrm{rank}\mathfrak{g}\big)$. In particular, if $\mu \in \mathfrak{g}^A \cap \mathfrak{g}_{\mathrm{reg}}$, then $\psi_{\mu,A} : \mathfrak{g}//A \longrightarrow \mathbb{A}^{b(\mathfrak{g})}$ is a Poisson morphism.
\end{corollary}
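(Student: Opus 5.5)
The first claim should follow directly from Theorem \ref{thm:mfalgebra}. Since $\mu\in\mathfrak{g}_{\mathrm{reg}}$, that theorem says $\mathcal{F}_\mu$ is a polynomial algebra on $b(\mathfrak{g})=\tfrac12(\dim\mathfrak{g}+\mathrm{rank}\,\mathfrak{g})$ generators. Concretely, I would choose algebraically independent generators $t_1,\dots,t_{b(\mathfrak{g})}$ among the $D_\mu^j(P_i)$ and write $\mathcal{F}_\mu\cong\mathbb{C}[t_1,\dots,t_{b(\mathfrak{g})}]$. The isomorphism is the evaluation map $\mathbb{C}[a_1,\dots,a_{b(\mathfrak{g})}]\to\mathcal{F}_\mu$, $a_i\mapsto t_i$. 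By the discussion after Proposition \ref{prop:polyrelation}, this map is surjective, and its kernel $I$ is trivial because the $t_i$ are algebraically independent. Applying $\mathrm{Spec}$ then gives $\mathrm{Spec}\,\mathcal{F}_\mu\cong V(0)=\mathbb{A}^{b(\mathfrak{g})}$. Under this identification, a point is sent to the tuple of values of $t_1,\dots,t_{b(\mathfrak{g})}$.

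The only subtlety is the generator count. The family $\{D_\mu^j(P_i)\}$ has $\sum_i\deg P_i=\tfrac12(\dim\mathfrak{g}+\mathrm{rank}\,\mathfrak{g})$ members. This is exactly the number $b(\mathfrak{g})$, by the classical identity $\sum_i(2\deg P_i-1)=\dim\mathfrak{g}$ relating the degrees of the basic invariants to $\dim\mathfrak{g}$. Therefore the generating family itself is algebraically independent, and no relations need to be eliminated. I take this as part of the content of Theorem \ref{thm:mfalgebra}, which is the main input here.

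For the second claim, assume in addition that $\mu\in\mathfrak{g}^A$. Theorem \ref{thm:mfinsidesuper}(2)–(3), via Proposition \ref{prop:mfcriterion}, then gives the inclusion $\mathcal{F}_\mu\subset S(\mathfrak{g})^A$. It also gives the Poisson morphism $\psi_{\mu,A}:\mathfrak{g}//A\to\mathrm{Spec}\,\mathcal{F}_\mu$ induced by that inclusion. Composing $\psi_{\mu,A}$ with the isomorphism $\mathrm{Spec}\,\mathcal{F}_\mu\cong\mathbb{A}^{b(\mathfrak{g})}$ from the first claim produces the morphism $\mathfrak{g}//A\to\mathbb{A}^{b(\mathfrak{g})}$, $[X]_A\mapsto(t_1(X),\dots,t_{b(\mathfrak{g})}(X))$.

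The remaining point is the Poisson structure on the target. We transport the bracket of $\mathcal{F}_\mu$ to $\mathbb{A}^{b(\mathfrak{g})}$ through the isomorphism. Since $\mathcal{F}_\mu$ is Poisson-commutative, the resulting bracket on $\mathbb{A}^{b(\mathfrak{g})}$ is the zero bracket. The isomorphism is then Poisson by construction, and a composition of Poisson morphisms is Poisson. So $\psi_{\mu,A}$, viewed with target $\mathbb{A}^{b(\mathfrak{g})}$, is a Poisson morphism, as claimed.
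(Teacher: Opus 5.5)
Your proposal is correct and follows the paper's proof. Both deduce $\mathcal{F}_\mu\cong\mathbb{C}[t_1,\dots,t_{b(\mathfrak{g})}]$ from Theorem \ref{thm:mfalgebra}, apply $\mathrm{Spec}$, and then compose the Poisson morphism $\psi_{\mu,A}$ from Theorem \ref{thm:mfinsidesuper} with this identification. Your additional remarks only spell out what the paper leaves implicit: the count $\sum_i\deg P_i=b(\mathfrak{g})$, which shows the family $\{D_\mu^j(P_i)\}$ is itself a free generating set, and the fact that $\mathbb{A}^{b(\mathfrak{g})}$ carries the transported zero bracket.
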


\begin{proof}
Assume $\mu\in\mathfrak{g}_{\mathrm{reg}}$. By Theorem \ref{thm:mfalgebra}, the Mishchenko-Fomenko algebra $\mathcal{F}_\mu$ is a polynomial algebra on $b(\mathfrak{g})$ algebraically independent generators. That is, there exists a $\mathbb{C}$-algebra isomorphism $\mathcal{F}_\mu \cong \mathbb{C}[t_1,\dots,t_{b(\mathfrak{g})}]$. Applying the contravariant functor $\operatorname{Spec}$ yields an isomorphism of affine varieties
\begin{align*}
\operatorname{Spec}\mathcal{F}_\mu \cong \operatorname{Spec}\mathbb{C}[t_1,\dots,t_{b(\mathfrak{g})}] = \mathbb{A}^{b(\mathfrak{g})}.
\end{align*}

For the final statement, assume in addition that $\mu\in\mathfrak{g}^A$. Theorem \ref{thm:mfinsidesuper} constructs a Poisson algebra morphism $\mathcal{F}_\mu\hookrightarrow S(\mathfrak{g})^A$; hence, by taking spectra, a Poisson morphism
\begin{align*}
\psi_{\mu,A}:\mathfrak{g}//A=\operatorname{Spec}S(\mathfrak{g})^A \longrightarrow \operatorname{Spec}\mathcal{F}_\mu.
\end{align*}
Composing $\psi_{\mu,A}$ with the identification $\operatorname{Spec}\mathcal{F}_\mu\cong \mathbb{A}^{b(\mathfrak{g})}$, $\psi_{\mu,A}:\mathfrak{g}//A\to\mathbb{A}^{b(\mathfrak{g})}$ is a Poisson morphism.
\end{proof}

\subsection{Symplectic leaves in \texorpdfstring{$\mathfrak{g}//T$}{\mathfrak{g}//T}}
\label{sec:leaves}
 
Let $\left(\mathfrak{g}//T\right)_{\mathrm{reg}}$ be the regular subset of $\mathfrak{g}//T$. In this Section \ref{sec:leaves}, we will examine the symplectic leaves in $\left(\mathfrak{g}//T\right)_{\mathrm{reg}}$ by studying the pull back of the regular element in $\mathfrak{g}//G$ via $\rho^{-1}$.   Let $G$ be a semisimple and connected Lie group with a Lie algebra $\mathfrak{g}$. Fix an $\mathrm{Ad}(G)$-invariant inner product $ B( \cdot , \cdot )$ on $\mathfrak{g}$. For any $H\in\mathfrak{t}$ and $u\in\mathfrak{g}$, define the linear function:
\begin{align*}
\mu_H(u): = B(u,H)  .
\end{align*} Note that $\mu_H$ can be considered linear generators for the Cartan commutant $S(\mathfrak{g})^T$. A similar argument can be found in \cite[Proposition 2.5]{campoamor2026construction}.
\begin{lemma} 
\label{lem:mucenter}
For every $H\in\mathfrak{t}$, we have $\mu_H\in S(\mathfrak{g})^T$ and $\{\mu_H,f\} = 0$ for all $f\in S(\mathfrak{g})^T$.
\end{lemma}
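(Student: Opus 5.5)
This lemma is the special case $A=T$ of the first half of Theorem~\ref{thm:abelianredsuper}, so I would either cite that theorem directly or repeat its two-step argument specialised to the torus. I will write out the direct argument, since it is short and self-contained.

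\emph{Step 1: invariance.} For $t\in T$, $H\in\mathfrak{t}$ and $u\in\mathfrak{g}$ we have $\mathrm{Ad}(t^{-1})H=H$, because $T$ is abelian and so acts trivially on $\mathfrak{t}$. By $\mathrm{Ad}(G)$-invariance of $B$,
\begin{align*}
\mu_H(\mathrm{Ad}(t)u)=B(\mathrm{Ad}(t)u,H)=B(u,\mathrm{Ad}(t^{-1})H)=B(u,H)=\mu_H(u).
\end{align*}
Hence $\mu_H\in S(\mathfrak{g})^T$, using the definition \eqref{eq:sga}.

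\emph{Step 2: centrality.} Identify $\mathfrak{g}\cong\mathfrak{g}^*$ via $B$, and let $\nabla f$ be the $B$-gradient as in Proposition~\ref{prop:equ}. Since $\mu_H$ is linear, $\nabla\mu_H=H$ is constant. The Lie--Poisson bracket \eqref{eq:poiss} then reads
\begin{align*}
\{\mu_H,f\}(u)=B\bigl(u,[H,\nabla f(u)]\bigr).
\end{align*}
Using invariance of $B$ (so that $B(u,[H,Y])=B([u,H],Y)$) and its symmetry, this becomes
\begin{align*}
\{\mu_H,f\}(u)=-B\bigl(\nabla f(u),[H,u]\bigr)=-df_u([H,u]).
\end{align*}

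\emph{Step 3: vanishing.} For $f\in S(\mathfrak{g})^T$, Proposition~\ref{prop:equ}(i) gives $df_u([H,u])=0$ for all $H\in\mathfrak{t}$. Concretely, one differentiates $f(\mathrm{Ad}(\exp(sH))u)=f(u)$ at $s=0$. Therefore $\{\mu_H,f\}\equiv 0$ as a polynomial.

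The only delicate point is the sign and convention bookkeeping when the coadjoint picture is transported to $\mathfrak{g}$ via $B$. I would fix the convention by checking the bracket formula on linear functions, $\{\mu_H,\mu_Y\}=\mu_{[H,Y]}$. In any case the overall sign does not affect the vanishing conclusion.
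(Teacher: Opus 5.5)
Your proposal is correct and matches the paper's argument: the paper proves this lemma by pointing to the first part of the proof of Theorem~\ref{thm:abelianredsuper}. That proof carries out exactly your three steps: $T$-invariance of $\mu_H$ via $\mathrm{Ad}(a)H=H$ and $\mathrm{Ad}(G)$-invariance of $B$, the identity $\{\mu_H,f\}(X)=-df_X([H,X])$, and the vanishing of $df_X([H,X])$ obtained by differentiating $f$ along $\exp(tH)$.
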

\begin{proof}
See the first part of the proof Theorem \ref{thm:abelianredsuper}. 
\end{proof}
  
Let $\{C_1,\dots,C_r\}\subset S(\mathfrak{g})^G$ be a set of homogeneous algebraically independent generators. By Theorem \ref{thm:superalchain}, \begin{align*}
\chi_T: \mathfrak{g}\longrightarrow \mathfrak{g}//T,\qquad \rho :  \mathfrak{g}//T\longrightarrow \mathfrak{g}//G,\qquad \chi:= \rho\circ \chi_T: \mathfrak{g}\longrightarrow \mathfrak{g}//G,
\end{align*} such that $\rho$ is the morphism induced by the inclusion $S(\mathfrak{g})^G\subset S(\mathfrak{g})^T$. Then, for any $c\in(\mathfrak{g}//G)_{\mathrm{reg}}$ and the corresponding coadjoint orbit $\mathcal{O}_c\subset\mathfrak{g}$, we have \begin{align*}
\rho^{-1}(c) = \chi_T(\mathcal{O}_c) \subset \mathfrak{g}//T, \qquad \chi_T^{-1} \big(\rho^{-1}(c)\big) = \chi^{-1}(c) = \{u\in\mathfrak{g}: C_i(u)=c_i\}.
\end{align*} Thus, the fiber of $\rho$ is the image of a single coadjoint orbit, and its pullback to $\mathfrak{g}$ is the usual Kostant-Chevalley level set.

By Remark \ref{rkm:ijc}, we see that $\rho^{-1}(c)$ is usually a union of symplectic leaves. To isolate a single leaf inside $\rho^{-1}(c)$, we must fix the $T$ moment map component. 
Consider the $T$-equivariant linear map \begin{align}
 \mu:\mathfrak{g}\longrightarrow \mathfrak{t}^* \cong \mathfrak{t},\qquad \mu(u)=u \vert_{\mathfrak{t}}. \label{eq:mu1}
\end{align} Since $\mu$ is $T$-invariant, it induces a regular map \begin{align}
 \bar{\mu}:\mathfrak{g}//T \longrightarrow \mathfrak{t},\qquad \bar{\mu}([u]_T)=u \vert_{\mathfrak{t}}. \label{eq:mu2}
\end{align}   We then define   \begin{align*}
  J:=(\rho,\bar{\mu}): \mathfrak{g}//T \longrightarrow \left(\mathfrak{g}// G\right) \times \mathfrak{t}  
\end{align*} by joining $\rho$ and $\bar{\mu}$ in \eqref{eq:mu2} such that each component in the fiber of $\rho$ are controlled.

\begin{proposition}
\label{prop:Jcenter}
Each component of $J$ is a Casimir in the center of $\mathfrak{g}//T$. Equivalently, the pullbacks $\rho^*(C_i)$ and the coordinate functions of $\bar{\mu}$ all lie in $\mathcal{Z} \bigl(S(\mathfrak{g})^T\bigr)$.
\end{proposition}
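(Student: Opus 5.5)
The plan is to work entirely at the level of coordinate rings. Since $\mathfrak{g}//T=\mathrm{Spec}\,S(\mathfrak{g})^T$ and its Poisson structure is the restriction of the Lie--Poisson bracket (Proposition \ref{prop:smoothregularg}), a function on $\mathfrak{g}//T$ is a Casimir exactly when its image under $\widehat{\chi}_T=\mathsf{i}_T$ lies in $\mathcal{Z}\bigl(S(\mathfrak{g})^T\bigr)$. Thus it suffices to identify the pullbacks of the components of $J=(\rho,\bar\mu)$ as elements of $S(\mathfrak{g})^T$ and to check that each one Poisson-commutes with all of $S(\mathfrak{g})^T$. The two blocks of components will be handled separately.

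\textbf{First block: the components of $\rho$.} The pullback $\rho^*$ is the inclusion $S(\mathfrak{g})^G\hookrightarrow S(\mathfrak{g})^T$. Hence $\rho^*(C_i)=C_i$, viewed as an element of $S(\mathfrak{g})^T$. By Theorem \ref{thm:superalchain}(i), we have $S(\mathfrak{g})^G\subset\mathcal{Z}\bigl(S(\mathfrak{g})^T\bigr)$. This holds because $C_i$ is a Casimir of the full algebra $S(\mathfrak{g})$, and so it brackets to zero with every $T$-invariant. This settles the first $r$ components.

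\textbf{Second block: the components of $\bar\mu$.} Fix a basis $H_1,\dots,H_r$ of $\mathfrak{t}$. Through the identification $\mathfrak{t}^*\cong\mathfrak{t}$ given by $B$, the coordinate functions of $\mu(u)=u|_{\mathfrak{t}}$ are the linear functions $u\mapsto B(u,H_j)=\mu_{H_j}(u)$. The relation $\mu=\bar\mu\circ\chi_T$ then gives $\bar\mu^*(\text{coordinate}_j)=\mu_{H_j}$ inside $S(\mathfrak{g})^T$. Lemma \ref{lem:mucenter}, which is the torus case of Theorem \ref{thm:abelianredsuper}, states that $\mu_{H_j}\in S(\mathfrak{g})^T$ and that $\{\mu_{H_j},f\}=0$ for every $f\in S(\mathfrak{g})^T$. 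Therefore these components are Casimirs as well.

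The one point that needs care is a bookkeeping one. We must confirm that the restriction map $u\mapsto u|_{\mathfrak{t}}$ really corresponds to the functions $\mu_H$, and not to some other linear coordinates. We must also confirm that $\bar\mu$ is well defined on the quotient. Well-definedness follows from $\mu$ being $T$-invariant: for $t\in T$ and $H\in\mathfrak{t}$, $\mathrm{Ad}(t)$ fixes $H$, so $B(\mathrm{Ad}(t)u,H)=B(u,H)$. Beyond this, no computation is needed. Since the Casimirs of $S(\mathfrak{g})^T$ form a subalgebra, every polynomial in the components of $J$ is also a Casimir, which gives the equivalent formulation in the statement.
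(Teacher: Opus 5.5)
Your proposal is correct and follows the paper's argument: the coordinate functions of $\bar{\mu}$ are the linear functions $\mu_H$ with $H\in\mathfrak{t}$, and these are central in $S(\mathfrak{g})^T$ by Lemma \ref{lem:mucenter}. You also handle the $\rho^*(C_i)$ components explicitly through Theorem \ref{thm:superalchain}(i), a step the paper's one-line proof leaves implicit.
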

\begin{proof}
 The components of $\bar{\mu}$ are $\mu_H$ with $H\in\mathfrak{t}$, which are central in $S(\mathfrak{g})^T$ by Lemma \ref{lem:mucenter}.
\end{proof}

By Proposition \ref{prop:Jcenter}, each component of $J:\mathfrak{g}//T \rightarrow (\mathfrak{g}//G) \times \mathfrak{t}$ is a Casimir. Hence, all Hamiltonian vector fields on $\mathfrak{g}//T$ are tangent to the fibers of $J$. Therefore, the characteristic distribution of the Poisson structure is contained in $\ker dJ$, and every fiber $J^{-1}(c, \alpha)$ is a union of symplectic leaves.  
 On  $(\mathfrak{g}//T)_{\mathrm{reg}}$, the rank of $dJ$ is constant and equal to $2r$, and it is sufficient to show that for every regular pair $(c,\alpha)$, there exists a connected symplectic submanifold whose image is equal to $J^{-1}(c,\alpha)$. Then each connected component of that fiber is a symplectic leaf, and the Poisson rank is constant. At any regular point, the symplectic leaf has dimension \begin{align*}
  \dim(\mathfrak{g}//T)_{\mathrm{reg}}-\mathrm{rank}(dJ).
\end{align*}  Thus, each connected component of a regular fiber is a single symplectic leaf. 

We now provide a formal construction of the symplectic leaf in $\mathfrak{g}//T$. Let $\chi:\mathfrak{g}\to \mathfrak{g}// G$ be the Chevalley map given by $\chi(u)=(C_1(u),\dots,C_r(u))$. For a point $c=(c_1,\dots,c_r)\in \mathfrak{g}// G$, define
\begin{align*}
 \mathcal{O}_c:=\chi^{-1}(c),
\end{align*} which is a union of $\mathrm{Ad}(G)$-orbits. In particular, on the regular locus $\left( \mathfrak{g}// G\right)_{\mathrm{reg}}$ and $\mathcal{O}_c$, there is a single coadjoint orbit.  For each $\alpha$, we define the restricted moment map $$ \mu_T  := \left.\mu \right\vert_{\mathcal{O}_c}: \mathcal{O}_c \hookrightarrow \mathfrak{g}  \xrightarrow{\ \iota\ } \mathfrak{t}^* \cong \mathfrak{t}$$ such that $\iota_{X_H} \omega_{\mathrm{KKS},c} = d B(\mu_T,H)$ with $H \in \mathfrak{t}$,  given by  $\mu_T^c(\eta)=\eta\vert_{\mathfrak{t}} $. For $\alpha\in\mathfrak{t}^*$, define
\begin{align*}
 J^{-1}(c, \alpha)=\bigl\{[u]_T\in \mathfrak{g}//T:  C_i(u)=c_i, \text{ with } i=1,\dots,r,\ u \vert_{\mathfrak{t}}=\alpha\bigr\}.
\end{align*} In the rest of this subsection, we will work on the regular locus $(\mathfrak{g}//T)_{\mathrm{reg}}$, where we assume that the differentials $\{\,d(\rho^*C_i),\,d(\bar{\mu}_1),\dots,d(\bar{\mu}_r)\,\}$ have maximal rank $2r$, and the $T$-action on $\mu_{T,c}^{-1}(\alpha) := \mu^{-1}(\alpha)\cap\mathcal{O}_c$ is globally free on the regular locus. Then in the following theorem, we illustrate the symplectic leaf in $\mathfrak{g}//T$. 
Then, on the regular locus, \begin{align*}
\rho^{-1}(c) = \bigsqcup_{\alpha\in \mu_T(\mathcal{O}_c)} J^{-1}(c,\alpha).
\end{align*} In the following theorem, we provide the identification of the symplectic leaf $J^{-1}(c,\alpha)$ by identifying these fibers with reduced adjoint orbits. For the Marsden-Weinstein reduction on a submanifold, see, for instance, \cite{MR4497405}.

\begin{theorem}
\label{thm:leaves}
Let $c \in \left(\mathfrak{g}//T\right)_{\mathrm{reg}}$, and let $\alpha \in \mathfrak{t}^*$ is the regular value of $\mu_T:\mathcal{O}_c \rightarrow \mathfrak{t}^*$. Let $ \bigl(J^{-1}(c, \alpha)\bigr)_{\varpi}$ be the smooth and connected component of $J^{-1}(c,\alpha)$. On $(\mathfrak{g}//T)_{\mathrm{reg}}$, the symplectic leaves are precisely the connected components of the common level sets of $J$:
\begin{align*}
  \mathcal{L}_{c,\alpha} = \bigl(J^{-1}(c, \alpha)\bigr)_{\varpi} = \bigl\{[u]_T\in \mathfrak{g}//T:\ C_i(u)=c_i,\ u\vert_{\mathfrak{t}}=\alpha\bigr\}_{\varpi} .
\end{align*}
Equivalently, for any $c \in (\mathfrak{g}//G)_{\mathrm{reg}}$, suppose that $\mu_T:\mathcal{O}_c\to\mathfrak{t}^*$ is the $T$-moment map on any coadjoint orbit $(\mathcal{O}_c,\omega_{\mathrm{KKS},c})$. Let \begin{align}
    Z_{c,\alpha} := \mu_{T,c}^{-1}(\alpha) = \mathcal{O}_c \cap \mu^{-1}(\alpha).
\end{align} Suppose that $T$-action on $Z_{c,\alpha}$ is free.  The Marsden-Weinstein reduction is then given by
\begin{align*}
 \mathcal{L}_{c,\alpha}   \cong \bigl(\mathcal{O}_c \cap \mu^{-1}(\alpha)\bigr)\big/  T.  
\end{align*}
\end{theorem}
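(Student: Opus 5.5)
The plan is to realize $J^{-1}(c,\alpha)$ as a Marsden--Weinstein reduced space of the coadjoint orbit $\mathcal{O}_c$, and then show that this reduced space is exactly a symplectic leaf of $(\mathfrak{g}//T)_{\mathrm{reg}}$.

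\textbf{Step 1: the fiber is a union of leaves of the right dimension.} By Proposition \ref{prop:Jcenter}, every component of $J$ is a Casimir of $S(\mathfrak{g})^T$. Hence every Hamiltonian vector field on $(\mathfrak{g}//T)_{\mathrm{reg}}$ is tangent to the fibers of $J$, and each fiber is a union of symplectic leaves.

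On the regular locus $dJ$ has rank $2r$. So each smooth connected component of $J^{-1}(c,\alpha)$ has dimension
\begin{align*}
(\dim\mathfrak{g}-r)-2r=\dim\mathfrak{g}-3r,
\end{align*}
using Theorem \ref{thm:superalchain}(ii). It therefore suffices to show that the Poisson structure restricted to such a component is nondegenerate. Equivalently, one leaf already has dimension $\dim\mathfrak{g}-3r$, and by connectedness the component is then a single leaf.

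\textbf{Step 2: identify the fiber with a reduced space.} Since $c$ is regular, $\mathcal{O}_c=\chi^{-1}(c)$ is a single regular adjoint orbit of dimension $\dim\mathfrak{g}-r$. It carries the KKS form, and $\mu_T=\mu|_{\mathcal{O}_c}$ is a moment map for the $T$-action. Because $\alpha$ is a regular value, $Z_{c,\alpha}$ is smooth of dimension $\dim\mathfrak{g}-2r$. If $T$ acts freely on it, the Marsden--Weinstein theorem gives a symplectic manifold $Z_{c,\alpha}/T$ of dimension $\dim\mathfrak{g}-3r$.

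Now restrict $\chi_T$ to $Z_{c,\alpha}$. It is $T$-invariant, so it descends to a map
\begin{align*}
\iota:Z_{c,\alpha}/T\to J^{-1}(c,\alpha),
\end{align*}
whose image is exactly $J^{-1}(c,\alpha)$ by definition of $J$. Injectivity follows from two facts. First, by Theorem \ref{thm:superalchain}(ii) the generic $T$-orbits have dimension $r$. Second, for a torus acting with closed orbits, which holds on the regular stable locus, $\chi_T$ separates orbits by \eqref{eq:relations}. Comparing dimensions, $\iota$ is then an injective immersion onto the component.

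\textbf{Step 3: $\iota$ is Poisson.} Since $\chi_T$ is Poisson (Proposition \ref{prop:smoothregularg}), for $f,g\in S(\mathfrak{g})^T$ we have $\{f,g\}|_{\mathcal{O}_c}=\{f|_{\mathcal{O}_c},g|_{\mathcal{O}_c}\}_{\mathrm{KKS}}$, because coadjoint orbits are the leaves of $\mathfrak{g}$. Restricted $T$-invariant functions descend to $Z_{c,\alpha}/T$, and the reduced bracket of the descended functions is the descended bracket. This is the standard characterization of the reduced Poisson structure via invariant functions.

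Consequently $\iota^*\{f,g\}=\{\iota^*f,\iota^*g\}_{\mathrm{red}}$. Thus the Poisson structure of $\mathfrak{g}//T$ along the image is the pushforward of a nondegenerate one, the image is a symplectic leaf, and its connected components give $\mathcal{L}_{c,\alpha}\cong Z_{c,\alpha}/T$.

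\textbf{Main obstacle.} The delicate point is Step 2: showing that $\chi_T$ is injective on $Z_{c,\alpha}/T$ and surjects onto a whole connected component of the fiber. This needs closedness of the relevant $T$-orbits and freeness, which the paper builds into the standing hypothesis on the regular locus.

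I would handle it with the weight decomposition from the proof of Theorem \ref{thm:abelianredsuper}. If all root components $X_\lambda$ of $u$ are nonzero, then $T\cdot u$ is closed, since the characters $\lambda$ span $\mathfrak{t}^*$ and $0$ lies in the interior of their convex hull. Then \eqref{eq:relations} gives separation of orbits. The remaining points are handled by restricting to this open set, which I take to be contained in $\chi_T^{-1}((\mathfrak{g}//T)_{\mathrm{reg}})$.
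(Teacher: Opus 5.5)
Your outline follows the paper's argument. You Marsden--Weinstein reduce $(\mathcal{O}_c,\omega_{\mathrm{KKS},c})$ at $\alpha$, descend $\chi_T|_{Z_{c,\alpha}}$ to $\iota\colon Z_{c,\alpha}/T\to\mathfrak{g}//T$, and check $\iota^*\{f,g\}=\{\iota^*f,\iota^*g\}_{\mathrm{red}}$ using that $\chi_T$ is Poisson and that reduced brackets of invariant functions are descended brackets. You then observe that the image is $J^{-1}(c,\alpha)$.

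\textbf{The immersion step is not proved.} You write ``comparing dimensions, $\iota$ is then an injective immersion'', but an injective smooth map between manifolds of equal dimension need not be an immersion (consider $x\mapsto x^3$). Without injectivity of $d\iota$ you cannot push the reduced Poisson tensor forward. So Step 3 does not yet show that the leaves through the image have dimension $\dim\mathfrak{g}-3r$.

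The paper supplies this step. It uses the identity $\ker(d\chi_T)_u=T_u(T\cdot u)$ at points $u\in Z_{c,\alpha}$ lying over $(\mathfrak{g}//T)_{\mathrm{reg}}$. This gives $\ker d(\chi_T|_{Z_{c,\alpha}})_u=T_u(T\cdot u)$, which is exactly the kernel of the orbit map $Z_{c,\alpha}\to Z_{c,\alpha}/T$. The identity needs only a dimension count. Since $\chi_T$ is a submersion there, $\ker(d\chi_T)_u$ has dimension $\dim\mathfrak{g}-\dim\mathfrak{g}//T=r$, and it contains the $r$-dimensional space $T_u(T\cdot u)$. Alternatively, you may quote that an injective holomorphic map between complex manifolds of equal dimension is biholomorphic onto its image, but you must state it.

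\textbf{Injectivity is proved only on a smaller set.} Your closed-orbit mechanism is the right one. The paper instead appeals to compactness of $T$ to separate orbits, which is unavailable for the complex torus: $T\cdot E_{12}$ and $\{0\}$ are not separated by invariants. However, restricting to the open set $U_0$ where all root components are nonzero proves the statement only over $\chi_T(U_0)$. In general this is strictly smaller than $(\mathfrak{g}//T)_{\mathrm{reg}}$. For example, $E_{12}+E_{23}+E_{31}\in\mathfrak{sl}_3(\mathbb{C})$ has a closed $T$-orbit with finite stabiliser, so it lies over the regular locus, yet three of its root components vanish.

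The detour through $U_0$ is unnecessary under the theorem's hypotheses:
\begin{itemize}
\item $\mathcal{O}_c=\chi^{-1}(c)$ and $\mu^{-1}(\alpha)$ are closed, so $Z_{c,\alpha}$ is closed and $T$-stable.
\item If some orbit $T\cdot u\subset Z_{c,\alpha}$ were not closed, its boundary would contain a $T$-orbit of dimension $<r$ lying in $Z_{c,\alpha}$. This contradicts freeness.
\item Hence every $T$-orbit in $Z_{c,\alpha}$ is closed, and \eqref{eq:relations} gives injectivity of $\iota$ on all of $Z_{c,\alpha}/T$.
\end{itemize}
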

\begin{proof}
  Fix $c\in(\mathfrak{g}//G)_{\mathrm{reg}}$ and let $(\mathcal{O}_c,\omega_{\mathrm{KKS},c})$ be the coadjoint orbit. For $H \in \mathfrak{t}$, the fundamental vector field generated by $H$ at $\eta\in\mathcal{O}_c$ is $X_H(\eta)=[H,\eta]$. For any $X,Y \in \mathfrak{g}$, the KKS form at $c$ is given by \begin{align*}
  \omega_{\mathrm{KKS},c}(\eta)\big([X,\eta],[Y,\eta]\big)=B(\eta,[X,Y]).
\end{align*} We proof this theorem by showing the following parts: we first show that $\mathcal{O}_c//_\alpha T  :=\bigl(\mathcal{O}_c \cap \mu^{-1}(\alpha)\bigr)\big/  T$ is symplectic by providing its symplectic structure. We then show that $\mathcal{O}_c//_\alpha T$ to $\mathfrak{g}//T$ is a Poisson immersion. Finally, we explain why $\mathcal{O}_c//_\alpha T $ can be identified by the level set of $J$.

Recall that the $T$-action on $\mathcal{O}_c \subset \mathfrak{g}^*\cong\mathfrak{g}$ is Hamiltonian, with the moment map $\mu_T:\mathcal{O}_c \rightarrow \mathfrak{t}^*$ given by restriction to $\mathfrak{t}$. For a fixed regular value $\alpha \in \mathfrak{t}^*\cong\mathfrak{t}$ such that the $T$-action on \begin{align*}
Z_{c,\alpha} = \mu_{T,c}^{-1}(\alpha) = \mathcal{O}_c \cap \mu^{-1}(\alpha)
\end{align*}
is free and proper. Let $\zeta: Z_{c,\alpha}\hookrightarrow \mathcal{O}_c$ be the inclusion, and let
$\rho: Z_{c,\alpha} \rightarrow Z_{c,\alpha}/T$ be the principal $T$-bundle. Then the pullback of the KKS form to $Z_{c,\alpha}$ gives \begin{align*}
\iota_{X_H}\,\zeta^*\omega_{\mathrm{KKS},c} =\zeta^*\big(\iota_{X_H}\omega_{\mathrm{KKS},c}\big) =\zeta^*\big(d\langle \mu_T,H\rangle\big) =d\big(\langle \mu_T,H\rangle\circ \zeta\big)=0,
\end{align*} since $\langle\mu_T,H\rangle$ is constant on the level set $Z_{c,\alpha}$. Moreover, $\zeta^*\omega_{\mathrm{KKS},c}$ is $T$-invariant as $\omega_{\mathrm{KKS},c}$ is $G$-invariant. Hence, $\zeta^*\omega_{\mathrm{KKS},c}$ is a basic $2$-form for the principal $T$-bundle $\rho$. Since $T$ acts freely and properly on $Z_{c,\alpha}$, by Marsden-Weinstein reduction, this action produces a smooth symplectic manifold
\begin{align*}
\mathcal{O}_c//_\alpha T  = Z_{c,\alpha}/T
\end{align*}
with a symplectic structure $\omega_{\mathrm{red}}$ induced by
\begin{align*}
\zeta^*\omega_{\mathrm{KKS},c} = \rho^*\omega_{\mathrm{red}}.
\end{align*}

Let $J_\alpha:\mathcal{O}_c//_\alpha T \rightarrow  \mathfrak{g}//T$. To show that $\mathcal{O}_c//_\alpha T $ is a symplectic leaf, it is sufficient to show that $J_\alpha$ is a Poisson immersion. Consider the GIT quotient $\chi_T: \mathfrak{g} \rightarrow \mathfrak{g}//T$. It is a Poisson submersion. The pullback on the coordinate algebra induces the following injection $\chi_T^*:\mathcal{O}(\mathfrak{g}//T) \xrightarrow{ \ \cong \ }S(\mathfrak{g})^T \hookrightarrow \mathbb{C}[\mathfrak{g}^*] \cong S(\mathfrak{g}) $. The composition  $  Z_{c,\alpha} \xrightarrow{\ \zeta \ } \mathcal{O}_c \xrightarrow{ \ \chi_T\vert_{\mathcal{O}_c} = \chi_T \ }   \mathfrak{g}//T$ is $T$-invariant, and therefore it factors through a continuous map $J_\alpha: \mathcal{O}_c//_\alpha T\rightarrow \mathfrak{g}//T$. In this way, we form a commutative diagram 
\[\begin{tikzcd}
	{Z_{c,\alpha}} & {\mathcal{O}_c \subset \mathfrak{g}} \\
	{\mathcal{O}_c//_\alpha T = Z_{c,\alpha}/T} & {\mathfrak{g}//T}
	\arrow["\zeta", from=1-1, to=1-2]
	\arrow["\rho"', from=1-1, to=2-1]
	\arrow["{ \chi_T}", from=1-2, to=2-2]
	\arrow["{J_\alpha}"', from=2-1, to=2-2]
    \arrow[from=1-1, to=2-2, phantom, "\circlearrowright" {anchor=center, scale=1.5, rotate=90}]
\end{tikzcd}\]with $J_\alpha \circ \rho = \chi_T \circ \zeta$. 
We first show that $dJ_\alpha : T\left(\mathcal{O}_c//_\alpha T\right) \rightarrow T\left(\mathfrak{g}//T\right)$ is injective. 
By the definition of the orbit space, the fiber of $\rho$ through $u$ is exactly the orbit. That is, $\rho^{-1}([u]_T) = T\cdot u$.  Fix $u\in Z_{c,\alpha}$ and let $[u]_T:=\rho(u)\in \mathcal{O}_c//_\alpha T$. Since $\rho$ is a principal $T$-bundle, each fiber $\rho^{-1}([u]_T) = T\cdot u$ is an embedded submanifold, and   \begin{align*}
    \ker(d\rho)_u = T_u(T\cdot u).
\end{align*} Here $T \cdot u = \{t \cdot u:  t \in T\} = \rho^{-1}([u]_T) \subset Z_{c,\alpha}$ is the $T$-orbit through $u$, and $T_u(T\cdot u)$ denotes its tangent space. Consequently, by the first isomorphism theorem on the Poisson projection $d\rho$, there is a canonical identification \begin{align}
T_{[u]_T}(\mathcal{O}_c//_\alpha T) \cong  T_uZ_{c,\alpha}\big/ T_u(T\cdot u). \label{eq:tangentquo}
\end{align}

We now provide the differentiation of $J_\alpha$. Since $\zeta$ is an inclusion, for every $v\in T_uZ_{c,\alpha}$, $(d \zeta)_u: T_u Z_{c,\alpha} \to T_u \mathcal{O}_c$ is the inclusion of tangent spaces. That is, $(d\zeta)_u(v) = v$. Then the differential of the commutative relations above is as follows: \begin{align}
(dJ_\alpha)_{[u]_T}\big((d\rho)_u(v)\big) =(d\chi_T)_u(v).\label{eq:dJalpha}
\end{align} To conclude that $dJ_\alpha$ at $[u]_T$ is injective, it is sufficient to show that $\ker (d \chi_T)_u \subset \ker (d \rho)_u $. As $\chi_T$ is $T$-invariant, $(d\chi_T)_u$ vanishes on $T_u(T\cdot u)$, the right hand side of \eqref{eq:dJalpha} depends only on the class of $v$ modulo $T_u(T\cdot u)$. Thus, \eqref{eq:dJalpha} is compatible with \eqref{eq:tangentquo}. Suppose $(dJ_\alpha)_{[u]_T}(\varsigma)=0$ for some $\varsigma\in T_{[u]_T}(\mathcal{O}_c//_\alpha T)$. Let $V = (\mathfrak{g}//T)_\mathrm{reg}$. By Proposition \ref{prop:smoothregularg}, the restriction $\chi_T:\chi_T^{-1}(V)\to V$ is smooth. Hence, $(d\chi_T)_u$ is surjective for all $u\in\chi_T^{-1}(V)$. 
Choose $v\in T_uZ_{c,\alpha}$ with $\varsigma=(d\rho)_u(v)$. Since $\chi_T$ is $T$-invariant, the identity \eqref{eq:dJalpha} gives $(d\chi_T)_u(v)=0$. Hence, $v\in\ker(d\chi_T)_u$, and $$T_u(T\cdot u)\subseteq \ker(d\chi_T)_u.$$  Assume now that $J_\alpha([u]_T)\in V$. Then $u\in \chi_T^{-1}(V)$. In particular, $\chi_T$ is a submersion at $u$. Therefore, the fiber $ F:=\chi_T^{-1}(\chi_T(u))$ is a smooth embedded submanifold of $\mathfrak{g}$, and the constant rank theorem yields \begin{align}
T_uF=\ker(d\chi_T)_u . \label{eq:equalsfibers}
\end{align} Since $T$ is compact, invariant polynomials separate $T$-orbits. Hence, the fiber $F$ coincides with the $T$-orbit through $u$. That is, $ F=T\cdot u $. Hence, \eqref{eq:equalsfibers} immediately gives \begin{align}
\ker(d\chi_T)_u = T_u(T\cdot u). \label{eq:equaltangents}
\end{align} Finally, since $\rho: Z_{c,\alpha}\longrightarrow \mathcal{O}_c//_{\alpha}T$ is a principal $T$-bundle, we have $\ker(d\rho)_u = T_u(T\cdot u)$. Thus, if $v\in \ker(d\chi_T)_u$, then by \eqref{eq:equaltangents} we obtain $v\in T_u(T\cdot u)=\ker(d\rho)_u$, as claimed.
Therefore, $\varsigma=(d\rho)_u(v)=0$ implies that $(dJ_\alpha)_{[u]_T}$ is injective. Since $[u]_T$ was arbitrary, $J_\alpha$ is an immersion.

Define the pullback of $J_\alpha$ by \begin{align*}
    J_\alpha^*(f) \circ \rho = \left.(\chi_T^* f)\right\vert_{Z_{c,\alpha}}
\end{align*} for any $f \in \mathcal{O}\left(\mathfrak{g}//T\right)$. Let $\{\cdot,\cdot\}_{\sharp}$ be the Poisson structure on $\mathcal{O}\left(\mathfrak{g}//T\right)$. Using the fact that $\chi_T$ is Poisson, we compute \begin{align}
\rho^*\big(J_\alpha^*\{f,g\}_{\sharp}\big) &=(\chi_T\circ \zeta)^*\{f,g\}_{\sharp} =\zeta^*\big(\chi_T^*\{f,g\}_{\sharp}\big) =\zeta^*\big\{\chi_T^*f,\chi_T^*g\big\}_{\mathcal{O}_c}. \label{eq:rhoa}
\end{align} 
Comparing the \eqref{eq:rhoa}  with the relations $\rho^*\omega_{\mathrm{red}} = \zeta^* \omega_{\mathrm{KKS},c}$ yields
\begin{align*} 
\rho^*\big(J_\alpha^*\{f,g\}_{\sharp}\big)=\rho^*\big\{J_\alpha^*F,J_\alpha^*G\big\}_{\mathrm{red}}.
\end{align*} Here $\{\cdot,\cdot\}_{\mathrm{red}}$ is the Poisson bracket induced by the symplectic structure $\omega_{\mathrm{red}}$.
Since $\rho$ is a surjective submersion (on algebras), and $\rho^*$ is injective (on functions), the equality holds before applying $J_\alpha^*$ as follows:
\begin{align*}
J_\alpha^*\{f,g\}_{\sharp}=\{J_\alpha^*F,J_\alpha^*G\}_{\mathrm{red}}.
\end{align*}

Let $\pi_{\mathrm{red}}$ be the Poisson tensor on $\mathcal{O}_c//_\alpha T$, and let $\pi_\sharp$ be the Poisson tensor on $\mathfrak{g}//T$. Since $J_\alpha$ preserves the Poisson bracket (by the Poisson property), we have that the pushforward of the reduced Poisson tensor equals the Poisson tensor restricted to the image. That is, $${(J_\alpha)}_* \pi_{\mathrm{red}} = \left.\pi_{\sharp} \right\vert_{\mathrm{Im} \, J_\alpha}.$$ Hence, $\mathrm{Im}J_\alpha$ is a symplectic submanifold of $\mathfrak{g} //T$. This can be seen as follows: 
for any $u \in Z_{c,\alpha}$, we have \begin{align*}
    J(J_\alpha([u]_{\mathrm{red}})) = (c(u),u\vert_\mathfrak{t}) = (c,\alpha).
\end{align*} Therefore, $\mathrm{Im} \, J_\alpha \subset J^{-1}(c,\alpha)$. Conversely, if $[u]_T \in J^{-1}(c,\alpha)$, then $u \in \mathcal{O}_c$ and $u\vert_\mathfrak{t} = \alpha$, so $J_\alpha([u]_{\mathrm{red}}) \in \mathrm{Im} \, J_\alpha$. Thus, we have $\mathrm{Im} \, J_\alpha  = J^{-1}(c,\alpha)$, and each connected component of this fiber is a symplectic leaf in $\left(\mathfrak{g}//T\right)_{\mathrm{reg}}$.
\end{proof}
\begin{remark}
    If either $T$-action is not free or $\alpha$ is not regular, then $\mathcal{O}_c//_\alpha T$ is a stratified symplectic space in the sense of \cite{sjamaar1991stratified}.
\end{remark}
 
By Theorem \ref{thm:superalchain}, we have \begin{align*}
\dim \mathfrak{g}//T =\dim\mathfrak{g} - r.
\end{align*} Fixing the $2r$ central coordinates  $(C_1,\dots,C_r,\bar{\mu}_1,\dots,\bar{\mu}_r)$ yields the following:

\begin{corollary}
\label{cor:dimension}
Assume that $c$ and $\alpha$ are regular, and the $T$-action is free in $(\mathfrak{g}//T)_{\mathrm{reg}}$. Then $  \dim \mathcal{L}_{c,\alpha} = \dim\mathfrak{g} - 3r$. 
\end{corollary}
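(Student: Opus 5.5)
The plan is to compute $\dim \mathcal{L}_{c,\alpha}$ in two independent ways and check that they agree: once from the level-set description $\mathcal{L}_{c,\alpha}=(J^{-1}(c,\alpha))_{\varpi}$, and once from the Marsden--Weinstein description $Z_{c,\alpha}/T$ in Theorem \ref{thm:leaves}.

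\emph{Level-set count.} By Theorem \ref{thm:superalchain}(ii), $\dim(\mathfrak{g}//T)_{\mathrm{reg}}=\mathrm{trdeg}\,S(\mathfrak{g})^T=\dim\mathfrak{g}-r$. On the regular locus we assume, as in the standing hypotheses before Theorem \ref{thm:leaves}, that the differentials $d(\rho^*C_1),\dots,d(\rho^*C_r),d\bar\mu_1,\dots,d\bar\mu_r$ have maximal rank $2r$. The map $J$ is then a submersion near the fiber. By the implicit function theorem, each smooth connected component of $J^{-1}(c,\alpha)$ has dimension $(\dim\mathfrak{g}-r)-2r=\dim\mathfrak{g}-3r$. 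Theorem \ref{thm:leaves} identifies this component with $\mathcal{L}_{c,\alpha}$, which gives the formula.

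\emph{Reduction count.} As a consistency check, for regular $c$ the orbit $\mathcal{O}_c$ is a regular coadjoint orbit, so $\dim\mathcal{O}_c=\dim\mathfrak{g}-r$. Since $\alpha$ is a regular value of $\mu_{T,c}$, the level set $Z_{c,\alpha}$ is a submanifold of codimension $r$, hence of dimension $\dim\mathfrak{g}-2r$. Because $T$ acts freely and $\dim T=r$, we get
\begin{align*}
\dim Z_{c,\alpha}/T=\dim\mathfrak{g}-3r.
\end{align*}
Theorem \ref{thm:leaves} then gives $\mathcal{L}_{c,\alpha}\cong Z_{c,\alpha}/T$, and the two counts agree.

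\emph{Main obstacle.} The delicate step is justifying that the $2r$ Casimirs are functionally independent at the relevant points, i.e.\ that $\mathrm{rank}\,dJ=2r$ there. This rank condition is exactly what the level-set count rests on; the paper takes it as a standing assumption on $(\mathfrak{g}//T)_{\mathrm{reg}}$ rather than proving it.

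I would also try to verify it independently through the reduction picture. On $\mathfrak{g}$, the differentials $dC_i$ span the conormal space to $\mathcal{O}_c$, which has rank $r$ at regular points by Kostant. Regularity of $\alpha$ means the $d\mu_H$ stay independent modulo $\mathrm{span}\{dC_i\}$. Together these give rank $2r$ upstairs. Since $\chi_T$ is a submersion on the regular locus (Proposition \ref{prop:smoothregularg}), the rank descends to $\mathfrak{g}//T$.

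Finally, I would note that $\dim\mathfrak{g}-3r=2|\Phi^+|-2r$ is even, as a symplectic dimension must be, which is a further consistency check.
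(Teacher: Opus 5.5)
Your proposal is correct, and your ``reduction count'' is exactly the paper's proof. The paper only computes $\dim\mathcal{O}_c-2\dim T=(\dim\mathfrak{g}-r)-2r$ and transfers this to $\mathcal{L}_{c,\alpha}$ through the identification $\mathcal{L}_{c,\alpha}\cong Z_{c,\alpha}/T$ of Theorem~\ref{thm:leaves}.

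Your primary level-set count is a genuinely different route. It works intrinsically on the quotient: it subtracts the $2r$ independent Casimirs from $\dim\mathfrak{g}//T=\dim\mathfrak{g}-r$ (Theorem~\ref{thm:superalchain}). It therefore uses only the definition $\mathcal{L}_{c,\alpha}=(J^{-1}(c,\alpha))_{\varpi}$, and needs neither the Marsden--Weinstein identification nor the freeness hypothesis. It is also exactly the dimension bookkeeping of the superintegrable chain.

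Its price is the condition $\mathrm{rank}\,dJ=2r$ along the fibre. You are right not to lean on the paper's blanket assumption, because it cannot hold on all of $(\mathfrak{g}//T)_{\mathrm{reg}}$. For example, take a generic conjugate $u$ of $\mathrm{diag}(1,1,-2)$ in $\mathfrak{sl}_3$. It has all root components nonzero, so its $T$-orbit is closed with stabilizer $Z(G)$, and $\chi_T$ is smooth over $\chi_T(u)$. Yet $u$ is not regular, so the $dC_i(u)$ are dependent by Kostant's criterion, and hence so are the $d(\rho^*C_i)$ at $\chi_T(u)$, since $(d\chi_T)_u$ is surjective there.

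Your fibrewise verification is what makes this route rigorous: Kostant independence at regular $u\in Z_{c,\alpha}$, together with surjectivity of $d\mu_{T,c}$. The descent step to $\mathfrak{g}//T$ does not actually need $\chi_T$ to be a submersion. A vanishing combination of the differentials at $\chi_T(u)$ would pull back to a vanishing combination at $u$, which independence upstairs rules out.

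Running both counts also gives something the paper's one-line argument does not. Since $J_\alpha$ is an immersion (proof of Theorem~\ref{thm:leaves}) between manifolds of the same dimension, it is a local diffeomorphism onto the fibre. So the identification in Theorem~\ref{thm:leaves} is cross-checked rather than merely invoked.
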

\begin{proof}
For a regular $c$, $\mathcal{O}_c$ has a dimension $\dim\mathfrak{g} - r$ as the stabiliser is a Cartan of dimension $r$. Moreover, if $\alpha$ is a regular value of $\mu_T$, and the $T$-action is free, then \begin{align*}
 \dim\bigl(\mathcal{O}_c//_{\alpha} T\bigr)=\dim\mathcal{O}_c - 2\dim T = (\dim\mathfrak{g} - r) - 2r = \dim\mathfrak{g} - 3r,
\end{align*}  which equals the dimension of the leaf $\mathcal{L}_{c,\alpha}$, is given by Theorem \ref{thm:leaves}.
\end{proof}
\begin{remark}
(i) In irregular strata (e.g., non-maximum rank of $dJ$), the foliation stratifies further, and the dimensions of the leaves may decrease.

(ii) 
By Definition \ref{def:superge}, together with Theorem \ref{thm:leaves}, we immediately conclude that \begin{align*}
    \left(\mathfrak{g}//T\right)_{\mathrm{reg}} = \bigsqcup_{(c,\alpha) \in J\left(\left(\mathfrak{g}//T\right)_{\mathrm{reg}} \right)} \mathcal{L}_{c,\alpha}.
\end{align*} For a fixed $(c,\alpha) \in \left(\mathfrak{g}//G\right) \times \mathfrak{t} $ with $c = (c_1,\ldots,c_r)$. Taking $H_1,\ldots,H_r$ to be the basis elements of $\mathfrak{t}$. Define the Poisson ideal \begin{align}
    \mathcal{I}_{c,\alpha} := \left\langle \rho^*(C_1) - c_1,\ldots,\rho^*(C_r) -c_r, \ \mu(H_1) -\alpha(H_1),\ldots, \mu(H_r) -\alpha(H_r)\right\rangle \label{eq:ideal}
\end{align} in $S(\mathfrak{g})^T$. It is clear that  $\mathcal{I}_{c,\alpha}$ is a Poisson ideal, as it is generated by fixing the Casimir elements. Then by Theorem \ref{thm:leaves}, we could define the following fiber algebra $\mathcal{P}_{c,\alpha} =  S(\mathfrak{g})^T \big/\mathcal{I}_{c,\alpha} $ such that \begin{align*}
    \mathrm{Spec} \, \mathcal{P}_{c,\alpha} =  \left(J^{-1}(c,\alpha)\right)_{\varpi} = \mathcal{L}_{c,\alpha} .
\end{align*}
\end{remark}

\section{Examples}
\label{sec:example}
 In this Section \ref{sec:example}, we will present an example to describe the superintegrable system chain presented above. Consider that \begin{align}
     G = SL_n(\mathbb{C}), \qquad \mathfrak{g}=\mathfrak{sl}_n(\mathbb{C}), \qquad T=\left\{\mathrm{diag}(t_1,\dots,t_n)\in SL_n(\mathbb{C})\right\}.
 \end{align} Let \begin{align*}
     \mathfrak{t} = \mathrm{Lie} (T) , \qquad H_i = E_{ii} - E_{i+1,i+1}, \quad 1 \leq i \leq n-1,
 \end{align*} and the linear coordinate functions on $\mathfrak{g}$ is denoted by $h_1,\ldots,h_{n-1}$, where $E_{ij}$ ($i \neq j$) is the elementary matrix. The symmetric algebra is then given by \begin{align}
     S(\mathfrak{g}) = \mathbb{C} \left[h_1,\dots,h_{n-1},e_{ij}\mid i\neq j\right].
 \end{align}
 
 We now consider the torus case, where $A = T$. This case is mainly considered, and its algebraic construction is provided in \cite{campoamor2023algebraic}. Here, we will provide an alternative proof. For pairwise distinct indices $i_1,\ldots,i_d$ with $1 \leq d \leq n$, define the cycle monomial \begin{align*}
     p_{i_1,\dots,i_d}:= e_{i_1 i_2}e_{i_2 i_3}\cdots e_{i_{d-1} i_d}e_{i_d i_1}.
 \end{align*}
\begin{proposition}
    Let $S(\mathfrak{g})^T$ be the $T$-invariant polynomial Poisson subalgebra of $S(\mathfrak{g})$. Then \begin{align*}
        S(\mathfrak{g})^T \cong \mathbb{C}[h_1,\ldots,h_{n-1},p_{i_1,\ldots,i_d} : 2 \leq d \leq n] \big/ \mathcal{I} ,
\end{align*} where $\mathcal{I}$ is generated by the following relation \begin{align*}
    \prod_{u=1}^{k-1} p_{i_u,i_{u+1}}\, p_{i_k,i_1}
 &= p_{i_1,i_2,\ldots,i_k}\, p_{i_1,i_k,i_{k-1},\ldots,i_2},
 \notag\\
 \prod_{1 \leq i < j \leq n} p_{i,j}
 &= p_{1,2,\ldots,n}\, p_{1,n-1,\ldots,2}\,
 \prod_{m=1}^{n-2}\ \prod_{s=m+2}^{n} p_{m,s},
 \notag\\
 \prod_{i_1 \neq i_2 \neq \cdots \neq i_k} p_{i_1,i_2,\ldots,i_k}
 &= \left(\prod_{r<s} p_{r,s}\right)^{\varphi(k)},
 \qquad
 \varphi(k)=\prod_{s=2}^{k-1}(n-s). 
\end{align*}
\end{proposition}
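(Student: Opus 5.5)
The approach is to compute $S(\mathfrak{g})^T$ as the invariant ring of a torus acting diagonally on monomials, then cut the monomial invariants down to cycles and read off the relations combinatorially. By Proposition \ref{prop:equ}, $S(\mathfrak{g})^T=S(\mathfrak{g})^{\mathfrak{t}}$. In the coordinates $h_1,\dots,h_{n-1},e_{ij}$, each coordinate is a $T$-weight vector: $\mathrm{Ad}(t)$ fixes the $h_k$ and scales $e_{ij}$ by $t_it_j^{-1}$ (up to sign/inverse from dualising). Hence a monomial $h^{a}\prod e_{ij}^{m_{ij}}$ is a weight vector with weight $\sum_{i\neq j}m_{ij}(\varepsilon_i-\varepsilon_j)$. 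Because $T$ is diagonalisable, $S(\mathfrak{g})^T$ is spanned by the invariant monomials. So the first step is to show that the invariant monomials are exactly $h^a\prod e_{ij}^{m_{ij}}$ with $m$ the adjacency multiplicities of a directed multigraph on $\{1,\dots,n\}$ satisfying, at every vertex, $\text{in-degree}=\text{out-degree}$. Here one must use that the only relation among the $\varepsilon_i$ on $\mathfrak{t}$ is $\sum\varepsilon_i=0$, and this relation does not affect differences $\varepsilon_i-\varepsilon_j$.

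The second step is generation. A balanced (Eulerian) directed multigraph decomposes into edge-disjoint simple directed cycles; this is the standard Veblen-type argument of repeatedly extracting a cycle from a vertex with positive out-degree. Every invariant monomial is therefore a product of the $h_k$ and cycle monomials $p_{i_1,\dots,i_d}$, where $d=2$ gives $p_{i,j}=e_{ij}e_{ji}$. Since a simple cycle has at most $n$ vertices, we get $2\le d\le n$. This gives the surjection $\mathrm{ev}$ of Proposition \ref{prop:polyrelation} from $\mathbb{C}[h_k,p_{i_1,\dots,i_d}]$. By the same argument, the cycle monomials are indecomposable, because a simple cycle is not a union of two smaller nonempty balanced multigraphs.

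The third step concerns relations, and I expect it to be the main obstacle. Since all generators are monomials, $\ker(\mathrm{ev})$ is a toric (binomial) ideal. It is generated by binomials $\prod p_{C}-\prod p_{C'}$ for which the two families of cycles $\{C\}$ and $\{C'\}$ have the same edge multiset. I would then verify that each displayed relation has equal edge multisets, which is a direct check. For the first relation, both sides contain each edge $i_u\to i_{u+1}$ and its reverse exactly once, since the product of 2-cycles along a $k$-cycle equals the forward cycle times the backward cycle. The second relation is the case $k=n$, multiplied by the remaining 2-cycles $p_{m,s}$ with $s\ge m+2$ needed to complete $\prod_{i<j}p_{i,j}$; the wrap-around pair $p_{1,n}$ has to be accounted for carefully. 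The third relation is a counting statement: each unordered pair $\{r,s\}$ occurs as an edge in both directions equally often among all $k$-cycles, and the exponent $\varphi(k)$ counts the number of such occurrences.

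The genuinely hard part is showing that these binomials generate the whole toric ideal, not merely lie in it. I would try to prove it by induction on degree, using the fact that any two cycle decompositions of the same Eulerian multigraph are connected by moves. Each move replaces a pair of cycles sharing a vertex by another pair obtained by swapping their exits at that vertex. I would then show that such a swap lies in $\mathcal{I}$. The natural route is to multiply both sides by suitable $p_{i,j}$, reducing everything to the first relation, and then use that $S(\mathfrak{g})$ is a domain to cancel. If a clean Markov-basis argument fails, I would weaken the claim to $\mathcal{I}$ being generated up to saturation by the $p_{i,j}$, which still gives the isomorphism after localisation and is consistent with the transcendence degree $\dim\mathfrak{g}-r$ from Theorem \ref{thm:superalchain}.
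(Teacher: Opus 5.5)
Your first two steps are correct and match the paper's proof. The paper's argument in fact stops there: it only shows that $h_1,\dots,h_{n-1}$ and the cycle monomials generate $S(\mathfrak{g})^T$, and it never addresses $\mathcal{I}$. Your third step is where the proposal fails. The obstruction is not a missing Markov-basis argument: the stated presentation is false, so no argument can close the gap.

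\textbf{The second relation is wrong as written.} Your ``direct check'' fails on it, since its two sides have $e_{ij}$-degree $n(n-1)$ and $n(n-1)+1$. For $n=3$ it reads $p_{1,2}p_{1,3}p_{2,3}=p_{1,2,3}\,p_{1,2}\,p_{1,3}$. The intended version uses $p_{1,n,n-1,\dots,2}$ and drops $p_{1,n}$ from the double product. It is then just the first relation for the cycle $(1,2,\dots,n)$ multiplied by further $p_{m,s}$. Likewise, the third relation follows from the first by pairing each cycle with its reverse, provided it is read as a product over directed $k$-cycles (the reading that makes $\varphi(k)$ the right exponent).

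\textbf{The corrected $\mathcal{I}$ is too small.} Write $\mathbb{C}[h,p]$ for the polynomial ring on the $h_k$ and the cycle symbols. After the corrections, $\mathcal{I}$ is generated by one binomial $\prod_{u=1}^{k}p_{i_u,i_{u+1}}-p_{i_1,\dots,i_k}\,p_{i_1,i_k,\dots,i_2}$ (indices mod $k$) per undirected cycle of length $k\ge 3$. For $n\ge 4$ this is far too few relations, and your own exit-swap move exhibits a missing one. The binomial $p_{1,2,3,4}\,p_{1,3}-p_{1,2,3}\,p_{1,3,4}$ lies in $\ker(\mathrm{ev})$. Use the $\mathbb{Z}^{n(n-1)}$-grading by edge multisets, for which the corrected $\mathcal{I}$ is homogeneous. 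The degree of this binomial is that of $e_{12}e_{23}e_{34}e_{41}e_{13}e_{31}$, which contains only one antiparallel pair $e_{ij},e_{ji}$. Every generator of $\mathcal{I}$ has degree containing at least three such pairs, so the binomial is not in $\mathcal{I}$. A dimension count shows the failure is not cosmetic. For $n=4$ there are $3+6+8+6=23$ generators and only $7$ relations, so $\dim\mathbb{C}[h,p]/\mathcal{I}\ge 16>12=\mathrm{trdeg}\,S(\mathfrak{sl}_4(\mathbb{C}))^T$.

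\textbf{Your fallback fails for the same reason.} Multiplying by $p_{i,j}$'s and cancelling in the domain $S(\mathfrak{g})$ only re-proves membership in $\ker(\mathrm{ev})$. To get membership in $\mathcal{I}$ you would have to cancel in $\mathbb{C}[h,p]/\mathcal{I}$, which is not known to be a domain. Saturating does not rescue this. Once the $p_{i,j}$ are inverted, each first relation makes $p_C$ a unit and determines $p_{C^{-1}}$ from it. The localized quotient is then $\mathbb{C}[h_1,\dots,h_{n-1}]\otimes\mathbb{C}[x_1^{\pm1},\dots,x_N^{\pm1}]$ with $N=\binom{n}{2}+U$, where $U$ is the number of undirected cycles of length $\ge 3$ in $K_n$. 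Its dimension equals $n^2-n=\dim\mathfrak{g}-r$ only if $U=\binom{n-1}{2}$. This holds for $n=3$ but fails for every $n\ge 4$, since already $\binom{n}{3}>\binom{n-1}{2}$. So the saturation is strictly smaller than $\ker(\mathrm{ev})$, and the fallback is not consistent with the transcendence degree, contrary to your claim.

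\textbf{What is actually provable.} You can prove surjectivity of $\mathrm{ev}$, which is all the paper proves. You can also keep your correct description of $\ker(\mathrm{ev})$ as the toric ideal spanned by binomials with equal edge multisets. Any correct generating set for it must include quadrics such as the exit swap above, not only the forward--backward relations.
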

\begin{proof}
  Since $T$ is connected, Proposition \ref{prop:equ} implies that $S(\mathfrak{g})^T = S(\mathfrak{g})^\mathfrak{t}$. Since $\mathfrak{t}$ is the Cartan subalgebra of diagonal traceless matrices. Write  \begin{align*}
      H=\mathrm{diag}(\lambda_1,\dots,\lambda_n)\in t, \qquad \lambda_1 +\cdots + \lambda_n = 0,
  \end{align*} such that the scalars $\lambda_i$ are the diagonal entries of $H$. Recall that, for an elementary matrices $e_{ij}$ ($i \neq j$), we have $[H,e_{ij}] = (\lambda_i - \lambda_j) e_{ij}$. Hence, by the PBW theorem, the monomial $p = \prod_{ i \neq j} e_{ij}^{m_{ij}} \in S(\mathfrak{g})^T$, then using the Leibniz rule, we have \begin{align*}
      \{H,p\}&=\sum_{i\neq j} m_{ij}\,\{H,e_{ij}\}\,e_{ij}^{m_{ij}-1}\prod_{(a,b)\neq(i,j)} e_{ab}^{m_{ab}}\\
&=\left(\sum_{i\neq j} m_{ij}(\lambda_i-\lambda_j)\right)p = 0.
  \end{align*} It is convenient to regroup the coefficient of $M$ as
\begin{align*}
\sum_{i\neq j} m_{ij}(\lambda_i-\lambda_j)
=\sum_{k=1}^n \lambda_k\Bigl(\underbrace{\sum_{j\neq k} m_{kj}}_{\text{out}(k)}-\underbrace{\sum_{i\neq k} m_{ik}}_{\text{in}(k)}\Bigr).
\end{align*}
Since $H$ varies over all diagonal traceless matrices, the scalars $\lambda_1,\dots,\lambda_n$ are arbitrary, subject only to $\sum_k\lambda_k =0$. Therefore, $p\in S(\mathfrak{g})^T$ if and only if  \begin{equation}
    \sum_{j\neq k}m_{kj}=\sum_{i\neq k}m_{ik} \label{eq:balanceequation}
\end{equation}   for every $k\in\{1,\dots,n\}$.

Define the root lattices by $R = \bigoplus_{i\neq j}\mathbb{Z}\,\varepsilon_{ij}$. Define also a $\mathbb{Z}$-linear map \begin{align*}
    \delta: R \rightarrow \mathbb{Z}^n, \quad \delta(\varepsilon_{ij}) = e_i  - e_j.
\end{align*} It is clear that $\delta$ is a group homomorphism.  Then \begin{align*}
\delta\Bigl(\sum_{i\neq j} m_{ij} \varepsilon_{ij}\Bigr) = 0
\end{align*} if and only if \eqref{eq:balanceequation} holds. For pairwise distinct indices $i_1,\dots,i_d$ ($d \geq 2$) define the cycle element by \begin{align*}
c_{i_1,\dots,i_d} := \varepsilon_{i_1 i_2} + \varepsilon_{i_2 i_3}+ \cdots+\varepsilon_{i_d i_1}\in R.
\end{align*} A direct telescoping computation gives $\delta(c_{i_1,\dots,i_d}) = 0$, so $c_{i_1,\dots,i_d}\in\ker(\delta)$.

We claim that any vector $$m := \sum_{i\neq j} m_{ij}\varepsilon_{ij}\in \ker(\delta)$$ with $m_{ij}\geq 0$ is a sum of such cycle elements by induction on $|m| := \sum_{i\neq j} m_{ij}$. If $|m| = 0$, there is nothing to prove. Otherwise, choose $i_1\neq i_2$ with $m_{i_1 i_2}>0$. Suppose $i_r$ has been chosen. The $i_r$-th balance equation implies that the total outgoing multiplicity from $i_r$ equals the total incoming multiplicity. Hence, in particular, there exists some $i_{r+1}\neq i_r$ with $m_{i_r i_{r+1}}>0$. Continuing in this way produces a sequence $i_1,i_2,i_3,\dots$. Since $\{1,\dots,n\}$ is finite, some index repeats: $i_s=i_t$ with $s<t$. If the repeated segment contains further repetitions, delete subloops to obtain a subsequence with pairwise distinct indices. In either case, we obtain a cycle $(j_1, \dots,j_d)$ with pairwise distinct entries such that each edge $(j_r \to j_{r+1})$ (with $j_{d+1} = j_1$) occurs with positive multiplicity in $m$. Therefore, $m - c_{j_1,\dots,j_d}$ still has nonnegative coefficients and still lies in $\ker (\delta)$ (since $\delta (c_{j_1,\dots, j_d})=0$). By induction, $m-c_{j_1,\dots,j_d}$ is a sum of cycle elements, hence so is $m$.

Translating back, subtracting $c_{j_1,\dots,j_d}$ corresponds to factoring $M$ by the cycle monomial $p_{j_1,\dots,j_d}$. Thus, any $T$-invariant monomial $M$ is a product of cycle monomials $p_{i_1,\dots,i_d}$ with pairwise distinct indices.

Since the action of $T$ on $t$ is trivial, every polynomial in $h_1,\dots,h_{n-1}$ is $T$-invariant. It follows that
$S(\mathfrak{g})^T$ is generated by $h_1,\dots,h_{n-1}$ together with the cycle monomials.
\end{proof}

   \begin{remark}
       We interpret the generators $p_{i_1,\dots,i_d}=e_{i_1 i_2}e_{i_2 i_3}\cdots e_{i_{d-1} i_d}e_{i_d i_1}$ as \emph{loop} (or \emph{cycle}) monomials: the ordered list $(i_1\,i_2\,\dots\,i_d)$ encodes the directed cycle $i_1\to i_2 \to \cdots \to i_d\to i_1$.

A \emph{subloop} of such a loop is any smaller directed cycle obtained from a proper subset of the vertices, i.e., a cycle $(j_1\,\dots\,j_m)$ with $2 \leq m < d$ and $\{j_1,\dots,j_m\}\subset\{i_1, \dots, i_d\}$ for which the corresponding monomial $p_{j_1,\dots,j_m}$ is formed by the edges among these vertices. Equivalently, the presence of a subloop means that (after reordering commuting factors) the associated monomial is \emph{decomposable}, i.e., it factorizes as a product of loop monomials of strictly smaller lengths.

The $T$-invariant subalgebra $S(\mathfrak{g})^T$ is then generated by the Cartan variables $h_1,\dots,h_{n-1}$ together with these loop monomials, modulo the algebraic relations among them, collected in the ideal $\mathcal{I}$.
   \end{remark}

We now have a direct application of Theorem \ref{thm:superalchain}.  Let \begin{align*}
c_k(X):=\mathrm{tr}(X^k), \qquad 2\leq k\leq n.
\end{align*} Then  $S(\mathfrak{g})^G = \mathbb{C}[c_2,\dots,c_n]$.  Moreover, \begin{align*}
\mathrm{trdeg}\, S(\mathfrak{g})^G=n-1, \qquad \mathrm{trdeg}\, S(\mathfrak{g})^T=(n^2-1)-(n-1)=n(n-1).
\end{align*} Hence, the chain  $\mathfrak{g} \xrightarrow{\ \chi_T\ } \mathfrak{g}//T \xrightarrow{\ \rho\ } \mathfrak{g}//G$ is superintegrable.

\section{Conclusion}
\label{sec:conclusion}

In this work, we developed a Poisson-algebraic framework for constructing $A$-invariant functions in $S(\mathfrak{g})$ and for describing the associated Poisson projection chain.  The central point is that the relevant families of first integrals are organized by morphisms of affine Poisson varieties, so that superintegrability can be checked by intrinsic dimension identities along the chain rather than by model dependent computations.  Concretely, we introduced the notion of a superintegrable Poisson chain, constructed Poisson subalgebras arising from reductive data, and made explicit the corresponding Poisson projections between quotients.  This yields a uniform geometric explanation for why reductive chains produce superintegrable systems and recovers the previously known special cases within a single conceptual picture.

On the geometric side, the projection-chain viewpoint clarifies how invariant theory controls the induced foliations: the base algebra governs the Casimir directions, while the intermediate quotients encode the additional commuting or noncommuting integrals that generate the superintegrable distribution.  In particular, the resulting description of symplectic leaves and their fibers provides a convenient starting point for comparing with classical constructions such as Thimm-type subalgebra chains and Mishchenko-Fomenko noncommutative integrability, and for extracting explicit generators (as illustrated in the $\mathfrak{sl}(n,\mathbb{C})$ examples).

Several directions remain open.  First, beyond the cases $A=T$ and $A$ Abelian, it would be interesting to identify reductive subgroups $A\subset G$ (or more general $A$-stable Poisson subalgebras) for which the superintegrability dimension conditions fail and to characterize the failure in terms of the algebraic relations, the singular locus of the quotients, and the degeneration of generic fibers.  Second, one can ask for global refinements: a description of the connected components of generic fibers, the completeness of the Hamiltonian flows produced by the chain, and the behaviour under restriction to real forms and compact forms.

Third, a natural continuation of the present work is quantization.  One expects a parallel construction at the level of $U(\mathfrak{g})$ (or of quantized coordinate rings) producing large commutative and/or Gelfand-Tsetlin type subalgebras, together with a precise comparison between their semiclassical limits and the Poisson projection chains constructed here.  Finally, it would be interesting to extend the projection-chain formalism to other Lie-theoretic Poisson manifolds (e.g., coadjoint orbits, multiplicative Poisson-Lie settings, or moduli-type spaces) where similar reductive data should produce new superintegrable families.  We plan to address these problems in future work.

\section*{Acknowledgement}

Kai Jiang was supported by the Fundamental Research Funds for the Central Universities. Guorui Ma was supported by the Research Funding at SIMIS.   Ian Marquette was supported by the Australian Research Council Future Fellowship FT180100099. Yao-Zhong Zhang was supported by the Australian Research Council Discovery Project DP190101529.    The authors would like to thank Nicolai Reshetikhin, and Zhuo Chen for useful discussions on the geometric construction of the superintegrabilities, and the construction of the homogeneous spaces with He, Lyu.

   \bibliographystyle{unsrt}
\bibliography{bibliography.bib}
\end{document}